\theoremstyle{plain}
\newtheorem{Lem}{Lemma}[section]
\newtheorem{Th}[Lem]{Theorem}
\newtheorem{Prop}[Lem]{Proposition}
\theoremstyle{definition}
\numberwithin{equation}{section}
\def\beq{\begin{equation}}
\def\eeq{\end{equation}}
\def\beqa{\begin{eqnarray}}
\def\eeqa{\end{eqnarray}}
\DeclareMathOperator{\Supp}{Supp}
\def\dist{\mathop{\rm dist}}
\def\diam{\mathop{\rm diam}}
\font\eightrm=cmr8
\def\be{\beta}
\def\ga{\gamma}
\def\Ga{\Gamma}
\def\del{\delta}
\def\ep{\varepsilon}
\def\La{\Lambda}
\def\om{\omega}
\def\ti{\tilde}
\def\vphi{\varphi}
\def\C{\mathcal{C}}
\def\scr{\scriptstyle}
\def\p{\partial}
\newcommand{\olg}{\bar{\gamma}}
\newcommand{\olp}{\bar{\phi}}
\newcommand{\ola}{\bar{\alpha}}
\newcommand{\ZZ}{\mathbb{Z}}
\newcommand{\tga}{\tilde{\gamma}}
\newcommand{\bs}{\backslash}
\newcommand{\hatp}{\hat{\varphi}}
\newcommand{\hatm}{\hat{\mu}}
\newcommand{\mL}{\mathcal{L}}
\newcommand{\mT}{\mathcal{T}}
\newcommand{\mI}{\mathcal{I}}
\newcommand{\mM}{\mathcal{M}}
\newcommand{\mG}{\mathcal{G}}
\newcommand{\mS}{\mathcal{S}}
\newcommand{\mD}{\mathcal{D}}
\newcommand{\mP}{\mathcal{P}}
\begin{document}
\title[Layering and Wetting Transitions]{Layering and wetting transitions\break
 for an SOS interface}
\author[K.S. Alexander]{Kenneth S. Alexander}
\address{Department of Mathematics KAP 108\\
University of Southern California\\
Los Angeles, CA  90089-2532 USA}
\email{alexandr@usc.edu}
\author[F. Dunlop]{Fran\c cois Dunlop}
\address{Laboratoire de Physique Th{\'e}orique et Modelisation (CNRS,
  UMR 8089)\\ 
Universit{\'e} de Cergy-Pontoise, 95302 Cergy-Pontoise\\
France}
\email{Francois.Dunlop@u-cergy.fr}
\author[S. Miracle-Sol\'e]{Salvador Miracle-Sol\'e}
\address{Centre de Physique Th{\'e}orique, CNRS, Case 907 \\
13288   Marseille cedex 9, France}
\email{miracle@cpt.univ-mrs.fr}

\keywords{SOS model, wetting and layering transitions, interface, 
entropic repulsion}
\subjclass[2000]{Primary: 82B24; Secondary: 82B20\\$\quad$ {\it{PACS codes:} 68.08.Bc, 05.50.+q, 60.30.Hn, 02.50.-r} }


\begin{abstract}
We study the solid-on-solid interface model above a 
horizontal wall in three dimensional space, 
with an attractive interaction when the 
interface is in contact with the wall, at low temperatures. 
There is no bulk external field.
The system presents a sequence of layering transitions, 
whose levels increase with the temperature, 
before reaching the wetting transition.
\end{abstract}
\maketitle

\newpage

\section{Introduction and results}

We consider the square lattice ${\bf Z}^2$, 
and to each site $x=(x_1,x_2)\in{\bf Z}^2$ we associate 
an integer variable $\phi_x\ge 0$ 
which represents the height of the interface at this site. 
The system is first considered in a finite box $\Lambda\subset{\bf Z}^2$
with fixed values of the heights outside.
Each interface configuration on $\Lambda$: $\{\phi_x\}$, $x\in \Lambda$,
denoted $\phi_\Lambda$, has
an energy defined by the Hamiltonian
\beqa 
H_\Lambda^{\{WAB\}}(\phi_\Lambda\mid\olp)=
2J_{AB}\sum_{\langle x,x'\rangle\cap \Lambda\ne\emptyset}|\phi_x-\phi_{x'}|
+2(J_{WA}+J_{AB})\sum_{x\in \Lambda}(1-\delta(\phi_x))\cr
+2J_{WB}\sum_{x\in \Lambda}\delta(\phi_x)
\eeqa
or equivalently, simplifying the notation to $J=J_{AB}$ and 
$K=J_{WB}-J_{WA}$, by the Hamiltonian
\beq 
H_\Lambda(\phi_\Lambda\mid\olp)=2J
\sum_{\langle x,x'\rangle\cap \Lambda\ne\emptyset}|\phi_x-\phi_{x'}|
+2(K-J)\sum_{x\in \Lambda}\delta(\phi_x)
\label{H}\eeq
where $J>0$ and $K\in {\bf R}$,
the function $\delta$ equals $1$ when $\phi_x=0$, and $0$ otherwise, 
and $|\Lambda|$ is the number of sites in $\Lambda$. 
The first sum is taken over all nearest neighbors pairs 
$\langle x,x'\rangle\subset{\bf Z}^2$ such that at least one
of the sites belongs to $\Lambda$, and one takes $\phi_{x}=\olp_{x}$
when $x\not\in \Lambda$,
the configuration $\olp$ being the boundary condition,
assumed to be uniformly bounded. 

In the space ${\bf R}^3$, the region 
obtained as the union of all unit cubes centered at the sites 
of the three dimensional lattice $\Lambda\times\left( \frac{1}{2}+{\bf Z} \right)$,  
which satisfy $0< x_3<\phi(x_1,x_2)$, 
is supposed to be occupied by fluid $A$,  
the union of all unit cubes centered at the sites 
which satisfy $x_3>\phi(x_1,x_2)$ 
is supposed to be occupied by fluid $B$.
The common boundary between these regions
is a surface in ${\bf R}^3$, 
the microscopic interface $\mathcal{I}$.
The union of all unit cubes centered at the sites 
which satisfy $x_3<0$ 
is considered as the substrate, also called the
wall $W$. 

The considered system differs from the usual SOS model
by the restriction to non-negative height
variables and the introduction of the second and third sums in
the Hamiltonian.  
This term describes the interaction with the substrate.

The probability of the configuration $\phi_\Lambda$
at the inverse temperature $\beta=1/kT$
is given by the finite volume Gibbs measure
\beq
\mu_\Lambda(\phi_\Lambda\mid\olp)=\Xi(\Lambda,{\olp})^{-1} 
\exp\big(-\beta H_\Lambda(\phi_\Lambda\mid \olp)\big),
\label{mu}\eeq
where $\Xi(\Lambda,{\olp})$ is the partition function
\beq
\Xi(\Lambda,\olp)=\sum_{\phi_\Lambda} 
\exp\big(-\beta H_\Lambda(\phi_\Lambda\mid\olp)\big).
\label{Xi}\eeq
Local properties at equilibrium can be described
by correlation functions between the heights on
finite sets of sites, such as,
\beq
\langle f(\phi_{x_1},\dots,\phi_{x_k})\rangle_\Lambda^{\olp}=
\sum_{\phi_\Lambda}
\mu_\Lambda(\phi_\Lambda\mid\olp)f(\phi_{x_1},\dots,\phi_{x_k}). 
\label{rho}\eeq

We next briefly 
recall the setting of the wetting transition, as developed
by Fr\"ohlich and Pfister
(refs. \cite{FPa}, \cite{FPb}, \cite{FPc})
for the semi-infinite Ising model. 
We do not formally translate these results to the SOS context, so
they should be taken as descriptive background.

Let $\Lambda\subset{\bf Z}^2$ be a rectangular box 
of sides parallel to the axes. 
Consider the boundary condition
$\olp_x=0$, for all $x\not\in\Lambda$,
and write $\Xi(\Lambda,0)$ for the corresponding 
partition function.
The associated free energy per site,   
\beq 
\tau^{\scriptscriptstyle WB}=2(J_{WA}+J_{AB})
-\lim_{\Lambda\to\infty}(1/\beta|\Lambda|)
\ln \Xi(\Lambda,0),  
\label{tau}\eeq
represents the surface tension between the medium $B$ and the substrate $W$. 

This limit (\ref{tau}) exists and $\tau^{\scriptscriptstyle WB}\le
2\min\{J_{WB},J_{WA}+J_{AB}\}$.  
One can introduce the densities  
\beq
\rho_z=\lim_{\Lambda\to\infty}\sum_{z'=0}^z
\langle \delta(\phi_x-z')\rangle^{(0)}_\Lambda,
\quad 
\rho_0=\lim_{\Lambda\to\infty}
\langle \delta(\phi_x)\rangle^{(0)}_\Lambda,
\eeq
where $(0)$ denotes the height-0 boundary condition.
Their connection with the surface free energy is given by the formula 
\beq
\tau^{\scriptscriptstyle WB}(\beta,K)=\tau^{\scriptscriptstyle WB}
(\beta,0)+2\int_0^K \rho_0(\beta,K')dK' .
\label{efe}\eeq

The surface tension $\tau^{\scriptscriptstyle WA}$
between the fluid $A$ and the substrate is 
$\tau^{\scriptscriptstyle WA}=J_{WA}$. 
In order to define the surface tension $\tau^{\scriptscriptstyle AB}$
associated to a horizontal interface between the fluids $A$ and $B$ 
we consider the ordinary SOS model, with height-0 boundary condition, and
Hamiltonian 
\beq 
H_\Lambda^{SOS}(\phi_\Lambda\mid\olp)=2J_{AB}
\sum_{\langle x,x'\rangle\cap \Lambda\ne\emptyset}(1+|\phi_x-\phi_{x'}|)
\eeq
The corresponding free energy gives $\tau^{\scriptscriptstyle AB}$,
obeying $\tau^{\scriptscriptstyle AB}\le 2J_{AB}$.
With the above definitions, we have
\beq
\tau^{\scriptscriptstyle WA}(\beta)+ 
\tau^{\scriptscriptstyle AB}(\beta)\ge  
\tau^{\scriptscriptstyle WB}(\beta,K) .
\label{t} 
\eeq 
and the right hand side in (\ref{t}) is a monotone increasing and 
concave (and hence continuous) function of the parameter $K$.
This follows from relation (\ref{efe}) where the integrand
is a positive decreasing function of $K$.
Moreover, when $K\ge J$ equality is satisfied in (\ref{t}).

In the thermodynamic description of wetting, the partial wetting situation 
is characterized by the strict inequality in equation (\ref{t}),
which can occur only if $K<J$, as assumed henceforth. 
We must have then $\rho_0>0.$
The complete wetting situation is characterized by the equality in
(\ref{t}).
If this occurs for some $K$, say $K'<J$, then equation (\ref{efe})
tells us that this condition is equivalent to $\rho_0=0$.
Then both conditions, the equality and $\rho_0=0,$ hold 
for any value of $K$ in the interval $(K',J)$. 

On the other hand, we expect that $\rho_0=0$ implies also that 
$\rho_z=0$, for any positive integer $z$.
This indicates that, in the limit $\Lambda\to\infty$,
we are in the $A$ phase of the system, despite the height-0 
boundary condition, so that the medium $B$
cannot reach anymore the wall. 
This means also that the Gibbs state of the SOS model does not exist
in this case.  

That such a situation of complete wetting is present for some $K<J$
does not follow, however, from the above results.
This statement, as far as we know, remains an open problem
for the semi-infinite Ising model in 3 dimensions.
For the model \eqref{H} an answer to this problem has
been given by Chalker \cite{Ch}:

\bigskip\noindent{\bf Chalker's theorem: }
The following propositions hold
\beqa 
&\hbox{if}\quad 2\beta(J-K)>-\ln{{1-e^{-2\beta J}}\over{16(1+e^{-2\beta J})}}, 
&\hbox{ then}\quad \rho_0>0 , \label{chalker1}\\
&\hbox{if}\quad 2\beta(J-K)<-\ln(1-e^{-8\beta J}), &\hbox{ then}\quad\rho_0=0. 
\label{chalker2}
\eeqa

Thus, for any given values of $J$ and $K$, 
there is a temperature below which the interface is almost surely bound and 
another higher one above which it is almost surely unbound and  
complete wetting occurs. 
An illustration of these results, in the plane of the parameters
$(K,\beta^{-1})$, is given in Figure 1.

Here we investigate the intermediate region not 
covered by this theorem, when the temperature is low enough.
We will prove that 
a sequence of layering transitions
occurs before the system attains complete wetting.
We shall use the following notation: 
\beq 
u=2\beta(J-K),\quad t=e^{-4\beta J}.
\label{ut}\eeq
The variable $t$ may be viewed as the cost of each pair of plaquettes comprising the interface, and
$u$ is the gain per site of contact of the interface with the substrate.  

\begin{Th} 
Let the integer $n\ge0$ be given. 
For each $\epsilon>0$ there exists a value $t_0(n,\epsilon)>0$ 
such that, if the parameters $t,u,$ satisfy 
$0<t<t_0(n,\epsilon)$ and
\begin{align}
-\ln(1-t^2)+(2+\epsilon)t^{n+3}&<u<-\ln(1-t^2)+(2-\epsilon)t^{n+2} \quad &\text{if } n \geq 1, \notag \\
-\ln(1-t^2)+(2+\epsilon)t^3&<u<\sqrt{t} \quad &\text{if } n=0,
\label{Ass1}\end{align}
then the following statements hold:
(1) 
The free energy $\tau^{\scriptscriptstyle WB}$ is an
analytic function of the parameters $t,u$.
(2) 
There is a unique 
translation invariant
Gibbs state $\mu_n$ 
satisfying 
$\mu_n(\{\phi_x\ne n\})=O(t^2)$ for all $x$.
(3) 
The density is $\rho_0>0$. 
\label{Th1}\end{Th}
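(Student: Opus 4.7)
The plan is to implement a Pirogov--Sinai contour analysis around the flat ground state $\phi\equiv n$. With this reference fixed, any configuration decomposes into ``excitations,'' i.e., maximal connected components of $\{x:\phi_x\ne n\}$, each enclosed by a contour consisting of dual bonds where $\phi_x\ne\phi_{x'}$. Since every such bond of height difference $k$ carries weight $e^{-2\beta Jk}=t^{k/2}$, a contour $\gamma$ of total weighted length $|\gamma|$ obeys a Peierls bound $|w(\gamma)|\le t^{|\gamma|/2}$, augmented by a factor $e^u$ for each wall contact inside it. The partition function then factors over contours as a polymer gas over the sea $\phi\equiv n$.

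The crux is to show that the $n$-phase is thermodynamically stable in the window~\eqref{Ass1}. For each flat candidate $\phi\equiv m$ I would compute the per-site free energy $f_m$ by a restricted-ensemble expansion based at height $m$. Two effects compete: the wall attraction $e^u$, felt only through rare downward excursions whose leading per-site density is of order $t^{2m}$, and the entropic repulsion from the wall, which shifts $f_m$ by $-\log(1-t^2)$ asymptotically and by layer-dependent subleading corrections. Sorting through the corrections, level $m$ is preferred over level $m+1$ precisely when $u+\log(1-t^2)$ exceeds a threshold of order $2t^{m+3}$, and preferred over level $m-1$ when it is smaller than $2t^{m+2}$. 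The window \eqref{Ass1} places $u$ in the intersection of these conditions for $m=n$, which in turn yields $f_n<f_m$ for every other $m$, with the minimal gap of order $\epsilon\,t^{n+2}$ achieved by the nearest competitors $m=n\pm 1$.

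With the Peierls bound and the stability gap in hand, the Koteck\'y--Preiss criterion holds and the cluster expansion for the contour polymer gas converges uniformly for $(t,u)$ in the window. This yields parts (1) and (2): $\tau^{WB}$ is the sum of an explicit ground-state contribution plus an absolutely convergent cluster series whose terms are polynomials in $t^{1/2}$ and $e^u$, hence analytic in $(t,u)$; and the finite-volume Gibbs measures converge to a unique translation-invariant state $\mu_n$. The bound $\mu_n(\phi_x\ne n)=O(t^2)$ follows from $\mu_n(\phi_x\ne n)\le\sum_{\gamma\ni x}|w(\gamma)|$ and the fact that the minimal contour through $x$ is a single-site spike with four boundary bonds of total weight $t^2$, all other contours through $x$ contributing at most $O(t^3)$.

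For part (3), when $n=0$ the bound $\mu_0(\phi_x=0)\ge 1-O(t^2)$ gives $\rho_0>0$ directly; when $n\ge 1$, the single-site configuration with $\phi_x=0$ and all other heights equal to $n$ has weight at least $t^{2n}e^u$ relative to the dominant flat profile, so $\mu_n(\phi_x=0)\ge c\, t^{2n}e^u>0$. The main obstacle is the stability analysis: the inter-level gaps are only of order $t^{n+2}$, comparable in scale to the Peierls weight of modest contours, so $f_m$ must be computed with enough precision that corrections of order $t^{n+3}$ or smaller are carefully controlled and absorbed into the $\epsilon$ slack in~\eqref{Ass1}. Once this delicate free-energy bookkeeping is secured, the cluster expansion machinery runs routinely and delivers the three conclusions.
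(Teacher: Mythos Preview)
Your overall architecture matches the paper's: a Pirogov--Sinai analysis with restricted ensembles at each height, free-energy comparison to single out the dominant level, and a polymer/cluster expansion for contours around the $n$-sea. However, there is a quantitative error that would derail the implementation, and a structural ingredient you do not mention.

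You state that the minimal free-energy gap between neighboring levels is of order $\epsilon\, t^{n+2}$. It is not: the gap is of order $\epsilon\, t^{3n+3}$ toward level $n+1$ and $\epsilon\, t^{3n}$ toward level $n-1$ (Proposition~\ref{Prop1}). What is of order $t^{n+2}$ is the \emph{threshold} for $u+\ln(1-t^2)$, not the resulting free-energy difference. The mechanism, made explicit in \eqref{PP}, is that the wall-attraction contribution to $f_k(h+1)-f_k(h)$ is $t^{2h}\bigl(e^u-1-t^2e^u\bigr)\approx t^{2h}\cdot 2t^{n+3}$, balanced against an entropic term $-2t^{3h+3}$; at $h=n$ this produces the scale $t^{3n+3}$. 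This matters because the gap dictates the minimal contour size the Peierls estimate can control. The paper therefore introduces a two-scale decomposition: cylinders of diameter $\le 3k+3$ with $k=\max(2n,8)$ are declared \emph{elementary} and absorbed into the restricted-ensemble partition functions, while only non-elementary ones form contours. A contour then carries both a Peierls factor $t^{\|\Gamma\|}$ and an area penalty $\exp\bigl(-|\Supp\Gamma|\,\tfrac{\epsilon}{2}t^{3n+3}\bigr)$ from Lemma~\ref{Lem3}; the choice $k\ge 2n$ guarantees $\|\Gamma\|\ge 3k+4>3n+3$, so the Peierls weight dominates. Your sketch does not introduce this elementary/large split, and with the wrong gap order you would place the threshold incorrectly and the Koteck\'y--Preiss criterion would fail.

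A second omission: the base perimeters of the cylinders making up a contour are generally \emph{not} connected in the plane, so the Peierls sum over contours does not reduce to a path count. The paper resolves this with a tree lemma (Lemmas~\ref{treeLem} and~\ref{LemLast}), connecting the disconnected perimeters by horizontal segments inside $\Supp\Gamma$ and using the area penalty $\exp(-\epsilon_1 p_g t^{3n+3})$ to control the segment lengths $p_g$. This step is precisely where the correctly identified gap $t^{3n+3}$ is indispensable; without it the contour sum in \eqref{contourconv} diverges.
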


An illustration for this theorem, in the plane $(K,\beta^{-1})$, 
is given in Figure~2.
It reflects the principle that if
the parameter $K$ is kept fixed, which seems natural 
since it depends on the properties of the substrate,
then the level $n$ of the unique translation invariant Gibbs state
increases when the temperature is increased. 
Rigorously, this principle would be in part a consequence of
\eqref{Ass1} if we could take $\epsilon=0$. 

\vskip-2cm
\setlength{\unitlength}{.7mm}
\begin{center}
\begin{picture}(0,80)
\bezier{200}(-60,60)(40,30)(40,0)
\put(-30,-10){\vector(0,1){70}}
\put(-60,0){\vector(1,0){120}}

\bezier{200}(-60,20)(5,14)(40,0)
\put(-27,62){$\beta^{-1}$}
\put(64,-2){$K$}

\put(40,-2){\line(0,1){2}}
\put(38,-8){$J$}

\put(-31,20){\line(1,0){2}}
\put(-28,20){$\beta_R^{-1}$}

\put(-23,5){{\eightrm partial wetting}}
\put(25,30){{\eightrm wetting}}
\put(22,35){{\eightrm complete}}
\end{picture}

\bigskip
\bigskip\bigskip

\begin{minipage}{12cm}
{\footnotesize 
Figure 1. Illustration of Chalker's results.
The slope of the ``partial wetting curve'' at $(K=J, \be^{-1}=0)$
is $-2\ln2$. The approximate value of the roughening temperature for
the SOS model with values in {\bf Z}, expected near
$\beta_RJ\simeq0.4$, indicates the scale on the vertical axis.}
\end{minipage}

\end{center}

\bigskip


\begin{center}
\begin{picture}(0,80)

\bezier{400}(30,70)(40,35)(40,0)

\bezier{400}(10,60)(40,30)(40,0)
\put(-13,60){\line(1,0){23}}
\put(3,53){\footnotesize $\mu_0$}

\bezier{400}(20,56)(40,28)(40,0)
\put(14,56){\line(1,0){6}}
\put(15,58){\footnotesize $\mu_1$}

\bezier{400}(28,53)(40,26)(40,0)
\put(22,53){\line(1,0){6}}
\put(22,55){\footnotesize $\mu_2$}

\bezier{400}(32,51)(40,25)(40,0)
\put(29,51){\line(1,0){3}}
\put(28.5,53){\footnotesize $\mu_3$}

\put(-30,-10){\vector(0,1){77}}
\put(-20,0){\vector(1,0){80}}

\put(-27,72){$\beta^{-1}$}
\put(64,-2){$K$}

\put(40,-2){\line(0,1){2}}
\put(38,-8){$J$}

\put(44,25){{\eightrm wetting}}
\put(41,30){{\eightrm complete}}
\end{picture}
\bigskip
\vskip1cm
\begin{minipage}{12cm}
{\footnotesize
Figure 2. Qualitative picture of the unicity regions established in Theorem \ref{Th1}.  There are small
gaps between the regions, due to $\epsilon>0$. Actual unicity regions should be larger.
Magnified view
around the point $(K=J, \be^{-1}=0)$. }
\end{minipage}

\end{center}

\bigskip

\noindent{\bf Remarks: }
\begin{enumerate}
\item
The analyticity of the free energy comes from the existence
of a convergent cluster expansion for this system.
This implies the analyticity, in a direct way, of some
correlation functions and, in particular, of the density $\rho_0$.



\item
By unicity of the translation invariant Gibbs state we mean that the
average over translations of local observables in finite volume Gibbs
states with any uniformly bounded boundary conditions lead to the same
infinite volume measure, independent of the boundary conditions.  

\item
The condition $\rho_0>0$ means that the interface remains at a 
finite distance from the wall and hence, we have partial wetting.
We can see from Theorem \ref{Th1} that the region where this condition
holds is much larger,  at low temperatures,
than the region initially proved by Chalker.
It comes very close to the line above which it is known 
that complete wetting occurs. 

\item
The values $t_0(n,\epsilon)$ for which we are able to prove Theorem \ref{Th1} satisfy
$t_0(n,\epsilon)\to 0$ when $n\to\infty$ or $\epsilon\to0$.
\end{enumerate}

The dependence of  $t_0(n,\epsilon)$ on $\epsilon$,
satisfying Remark 4, may be understood as follows.
One believes that the regions of unicity of the state
extend in such a way that two neighboring regions,
say those corresponding to the levels $n$ and $n+1$,
will have a common boundary where the two states $\mu_n$
and $\mu_{n+1}$ coexist.
At this boundary there will be a first order phase transition, since
the two Gibbs states are different.
The curve of coexistence does not necessarily exactly coincide with the curve 
$u=-\ln(1-t^2)+2t^{n+3}$.
Theorem \ref{Th1} says that it is however very near to it,
if the temperature is sufficiently low.

Let us formulate in the following statement the kind of theorem
that we expect, though we are not able to prove it.
We think that such a statement could be proved using
an extension of the Pirogov-Sinai theory \cite{Sin}.
This would certainly require some additional work and, 
in particular, a refinement of the notion of contours.

\medskip

\noindent{\bf Statement: } 
For each integer $n\ge0$, there exists $t_0(n)>0$
and a continuous function $u=\psi_{n+1}(t)$ on the 
interval $0<t<t_0(n)$ such that 
the statements of Theorem \ref{Th1} hold for $t$ in this interval,   
in the region where $\psi_{n+1}(t)<u<\psi_n(t)$ and for $n=0$, 
in the region $\psi_{1}(t)<u$. 
When $u=\psi_{n+1}(t)$ the two Gibbs states, $\mu_n$ and $\mu_{n+1}$, 
coexist. 

\medskip

The existence of such a sequence of layering transitions has been proved for
the SOS model with a bulk external field. See the works by 
Dinaburg, Mazel \cite{DM}, Cesi, Martinelli \cite{CM} and
Lebowitz, Mazel \cite{LM}.
This model has the same set of configurations as the model considered
here, but a different energy:
The second term in (\ref{H}) has to be
replaced by the term $+h\sum_{x\in\Lambda}\phi_x$ to obtain the
Hamiltonian of the model with an external magnetic field.
Effectively, each model incorporates a potential $V(\phi_x)$, with $V(n)=hn$ in the earlier work
and $V(n) = 2(J-K)\delta(n)$ here.  The purely local nature of the latter potential makes the layering transitions
perhaps \emph{a priori} less natural, and more difficult to prove.
Nonetheless, the method that we follow for the proof of 
Theorem \ref{Th1} is inspired by the method
developed for the study of the model with external field and we
shall have occasion to refer at various points
to the works mentioned above.
The most important difference between the two systems
concerns the restricted ensembles and the computation
of the associated free energies,
a point that will be discussed in Sections \ref{RE} and 
\ref{ProofProp}.
In ref. \cite{DM}
an analogous result to that of Theorem \ref{Th1} of the present work
has been proved for the model with external field.
In ref. \cite{CM} these results were extended  
to a proof of a theorem analogous to the Statement above.
Ref. \cite{LM} contains new inequalities for the model, allows 
$t_0>0$ to be independent of $n$ 
(so that, for small $t$, there is an infinite sequence of layering transitions)
and strengthens the form of uniqueness established for the Gibbs state.

References \cite{CM} and \cite{LM} make use of an ``infinite-height'' boundary condition which 
dominates all other boundary conditions, in the FKG sense.  The fact that the measure given by this 
boundary condition is well-defined depends on the fact that the interaction $h\phi_x \to \infty$ as the 
height $\phi_x \to \infty$.  The analogous statement fails in our context here, and the finite-volume measure with
infinite-height boundary condition does not exist.  Lacking this tool, we are restricted to results 
analogous to \cite{DM}.

Without bulk external field, the problem of layering transitions has been considered before for the Ising model, heuristically and numerically by several authors, and mathematically mostly by Basuev (\cite{Ba} and references therein). Ref. \cite{Ba} deals mostly with the case of a positive bulk field, coexisting with a surface field, but the case of zero bulk field is considered in section 9 of \cite{Ba}.

Theorem \ref{Th1} indicates that infinitely many phase transition
lines start from the point $K=J$, $\beta^{-1}=0$ in the plane $(K,\beta^{-1})$.
Where and how do these lines end? Five possible scenarios are shown in
Fig. 3 (a-e). Cases (a-b-c) are inspired by the analog diagrams of
Binder and Landau for the Ising model (Fig. 1 p. 2 in \cite{BL}).

\newpage

\newpage
\font\eightrm=cmr8
\begin{center}
\setlength{\unitlength}{.7mm}
\begin{picture}(0,70)

\put(-80,0){\vector(0,1){60}}
\put(-100,0){\vector(1,0){100}}
\put(-77,55){$\beta^{-1}$}
\put(-6,2){$K$}
\put(-10,-2){\line(0,1){2}}
\put(-12,-8){$J$}
\put(-81,30){\line(1,0){2}}
\put(-78,30){$\beta_R^{-1}$}
\bezier{50}(-100,30)(-50,30)(0,30)
\bezier{200}(-50,30)(-10,25)(-10,0)
\bezier{200}(-50,30)(-10,10)(-10,0)
\bezier{200}(-30,19)(-10,13)(-10,0)
\bezier{200}(-38,24)(-10,17)(-10,0)
\bezier{200}(-44,27)(-10,20)(-10,0)
\bezier{200}(-46.5,28.5)(-10,22)(-10,0)
\bezier{200}(-100,59)(-70,50)(-50,30)
\bezier{30}(-100,46)(-70,40)(-50,30)
\put(-45,42){{\eightrm complete}}
\put(-42,37){{\eightrm wetting}}
\put(-95,35){\eightrm 0}
\put(-50,15){\eightrm 0}
\put(-21,8){\eightrm 1}
\put(-35,18){\eightrm 2}
\put(-43,23){\eightrm 3}
\put(-100,-6){\eightrm (a): triple points accumulate at the}
\put(-100,-11){\eightrm roughening temperature from below}

\put(30,0){\vector(0,1){60}}
\put(10,0){\vector(1,0){100}}
\put(33,56){$\beta^{-1}$}
\put(104,2){$K$}
\put(100,-2){\line(0,1){2}}
\put(98,-8){$J$}
\put(29,30){\line(1,0){2}}
\put(32,30){$\beta_R^{-1}$}
\bezier{50}(10,30)(60,30)(110,30)
\bezier{400}(10,60)(100,47)(100,0)
\bezier{200}(47,52)(100,13)(100,0)
\bezier{200}(58,48)(100,13)(100,0)
\bezier{200}(70,43)(100,13)(100,0)
\bezier{200}(79,37.5)(100,13)(100,0)
\bezier{200}(84,34)(100,13)(100,0)
\bezier{200}(87,31)(100,15)(100,0)
\bezier{20}(10,57)(20,56)(40,54)
\put(80,42){{\eightrm complete}}
\put(83,37){{\eightrm wetting}}
\put(15,45){\eightrm 0}
\put(60,15){\eightrm 0}
\put(57,44){\eightrm 1}
\put(68,40){\eightrm 2}
\put(78,35){\eightrm 3}
\put(10,-6){\eightrm (b): triple points accumulate at the}
\put(10,-11){\eightrm roughening temperature from above}
\end{picture}
\end{center}

\begin{center}
\setlength{\unitlength}{.7mm}
\begin{picture}(0,80)

\put(-80,0){\vector(0,1){60}}
\put(-100,0){\vector(1,0){100}}
\put(-77,55){$\beta^{-1}$}
\put(-6,2){$K$}
\put(-10,-2){\line(0,1){2}}
\put(-12,-8){$J$}
\put(-81,30){\line(1,0){2}}
\put(-78,30){$\beta_R^{-1}$}
\bezier{50}(-100,30)(-50,30)(0,30)
\bezier{200}(-50,30)(-10,25)(-10,0)
\bezier{200}(-50,30)(-10,10)(-10,0)
\bezier{200}(-50,30)(-10,14)(-10,0)
\bezier{200}(-50,30)(-10,17)(-10,0)
\bezier{200}(-50,30)(-10,20)(-10,0)
\bezier{200}(-50,30)(-10,22)(-10,0)
\bezier{200}(-100,61)(-70,50)(-50,30)
\bezier{30}(-100,42)(-70,40)(-50,30)
\put(-45,42){{\eightrm complete}}
\put(-42,37){{\eightrm wetting}}
\put(-95,35){\eightrm 0}
\put(-50,15){\eightrm 0}
\put(-100,-6){\eightrm (d): multiphase point at the}
\put(-100,-11){\eightrm roughening temperature}

\put(30,0){\vector(0,1){60}}
\put(10,0){\vector(1,0){100}}
\put(33,55){$\beta^{-1}$}
\put(104,2){$K$}
\put(100,-2){\line(0,1){2}}
\put(98,-8){$J$}
\put(29,30){\line(1,0){2}}
\put(32,30){$\beta_R^{-1}$}
\bezier{50}(10,30)(60,30)(110,30)
\bezier{400}(10,60)(100,47)(100,0)
\bezier{50}(10,50)(50,47)(90,29)
\bezier{200}(40,46)(100,13)(100,0)
\bezier{200}(57,41)(100,13)(100,0)
\bezier{200}(70,37)(100,13)(100,0)
\bezier{200}(78,34)(100,13)(100,0)
\bezier{200}(83,32)(100,13)(100,0)
\bezier{200}(86,31)(100,15)(100,0)
\put(80,42){{\eightrm complete}}
\put(83,37){{\eightrm wetting}}
\put(20,40){\eightrm 0}
\put(60,15){\eightrm 0}
\put(53,39){\eightrm 1}
\put(68,34){\eightrm 2}
\put(77,31){\eightrm 3}
\put(10,-6){\eightrm (c): critical points accumulate at the}
\put(10,-11){\eightrm roughening temperature from above}
\end{picture}
\end{center}

\begin{center}
\setlength{\unitlength}{.7mm}
\begin{picture}(0,80)

\put(-80,0){\vector(0,1){60}}
\put(-100,0){\vector(1,0){100}}
\put(-77,55){$\beta^{-1}$}
\put(-6,2){$K$}
\put(-10,-2){\line(0,1){2}}
\put(-12,-8){$J$}
\put(-81,35){\line(1,0){2}}
\put(-78,32){$\beta_R^{-1}$}
\bezier{50}(-100,35)(-50,35)(0,35) 
\bezier{400}(-100,50)(-10,35)(-10,0)
\bezier{200}(-39,33)(-10,5)(-10,0)
\bezier{200}(-32,29.5)(-10,7)(-10,0)
\bezier{200}(-27,26.5)(-10,8)(-10,0)
\bezier{200}(-23,23.5)(-10,9)(-10,0)
\bezier{200}(-19,20)(-10,9)(-10,0)
\bezier{30}(-100,42)(-70,40)(-43,35)
\put(-26,37){{\eightrm complete}}
\put(-25,32){{\eightrm wetting}}
\put(-95,37){\eightrm 0}
\put(-50,15){\eightrm 0}
\put(-35,31){\eightrm 1}
\put(-30,28.5){\eightrm 2}
\put(-25,25){\eightrm 3}
\put(-100,-6){\eightrm (e): triple points accumulate}
\put(-100,-11){\eightrm at zero temperature}
\end{picture}
\end{center}

\vskip 1cm
\centerline{\footnotesize 
Fig. 3. Tentative phase diagrams (a)-(e)}



\newpage
\section{Cylinder models}\label{CM}
We consider the model in a box $\Lambda$ under the constant boundary 
condition $\olp_x=n$, for any given integer $n\ge0$. 
As we have seen, 
every configuration $\phi_\Lambda$ can naturally be considered
as a surface $\mathcal{I}$ imbeded into ${\bf R}^3$. 
Under the given boundary condition all points of the 
boundary $\partial{\mathcal{I}}$ are at height $n$, and 
we can write
\beq
\beta H_\La(\phi_\Lambda|n)=2\beta J\bigl(|{\mathcal{I}}|-|\La|\bigr)
-u|{\mathcal{I}}\cap W| ,
\label{HH}\eeq
where $W$ is the horizontal plane at height 0. 
The possible interfaces are connected sets of unit squares, 
also called plaquettes, and $|\cdot|$ denotes the number of these plaquettes.

Dinaburg and Mazel \cite{DM} have shown that such an interface can also be 
obtained in a unique way from the horizontal plane plane $\mathcal{P}_n$,
at height $n$, 
by adding ``positive cylinders'' and digging ``negative cylinders''.
The order of operations is not specified, 
but one may decide that larger ones are placed first.

Formally, a {\it cylinder}, $\gamma=({\tga},E,I)$, is defined by 
its base perimeter ${\tga}$, 
a closed path of bonds in the dual lattice ${\bf Z}^2+(1/2,1/2)$,   
and two different nonnegative integers: 
$E$, the exterior level, and 
$I$, the interior level. 
The closed path $\tga$ must remain a single closed path when
self-intersections are removed by connecting south to west and north
to east as in Figure~4. Then $\tga$ can be slightly
deformed into a simple path by rounding corners.\\[10 pt]

\begin{center}
\begin{picture}(20,20)

\bezier{150}(10,12)(10,10)(12,10)
\bezier{150}(8,10)(10,10)(10,8)

\put(0,10){\line(1,0){8}}
\put(11,10){\line(1,0){8}}
\put(10,12){\line(0,1){8}}
\put(10,0){\line(0,1){8}}

\end{picture}

{\footnotesize
Figure 4. Rounding corners at self-intersections to make a simple path}\\[10 pt]
\end{center}

The sign of $\gamma$ is defined as 
$S(\gamma)={\rm sign}(I(\gamma)-E(\gamma))$.    
The length of $\gamma$ is defined as $L(\gamma)=|I(\gamma)-E(\gamma)|$.  
The interior, $\olg$, 
is defined to be the set of sites $x\in{\bf Z}^2$ enclosed by $\tga$.

Next, one defines the notion of compatibility
of two cylinders in such a way as to have a one--to--one correspondence
between the set of configurations $\phi_\Lambda$ and the set of all
compatible sets of cylinders.
We shall not use here the notion of weak compatibility considered 
in ref.~\cite{DM}.

Two cylinders $\gamma,\gamma'$ are 
{\it compatible}, written $\ga\sim\ga'$, if either condition (1) or condition (2), 
together with condition (3), hold: 

\smallskip

\begin{tabular}{ll}
(1)   
&$S(\gamma) = S(\gamma')$, ${\olg}\ne{\olg}'$ and either \\   
&${\olg}\cap{\olg}' = \emptyset$ and 
${\tga}\cap{\tga}' = \emptyset$, 
or ${\olg}\subset{\olg}'$, 
or ${\olg}'\subset{\olg}$, \\
\end{tabular} 

\smallskip

\begin{tabular}{ll}
(2)   
&$S(\gamma) = -S(\gamma')$, ${\olg}\ne{\olg}'$ and either \\
&${\olg}\cap{\olg}' = \emptyset$,  
or ${\olg}\subset{\olg}'$ and 
${\tga}\cap{\tga}' =\emptyset$, 
or ${\olg}'\subset{\olg}$ 
and ${\tga}\cap{\tga}' = \emptyset$.
\end{tabular}

\smallskip

\begin{tabular}{llll}
(3)   
&$E(\gamma) = E(\gamma')$ if ${\olg}\cap{\olg}'=\emptyset$,    
$E(\gamma) = I(\gamma')$ if ${\olg}\subset{\olg}'$, \\ 
&$I(\gamma) = E(\gamma')$ if ${\olg}'\subset{\olg}$, 
\end{tabular}\\

\noindent
where $\ti\ga\cap\ti\ga'=\emptyset$ is decided after rounding SW and
NE corners as in Figure~4. Examples of compatible cylinders are shown
on Figure~5.
\bigskip
\begin{center}
\setlength{\unitlength}{2000sp}%
\begin{picture}(7224,2724)(439,-2323)
\thinlines
{\put(1126,-1636){\oval(450,450)[tr]}
}%
{\put(1576,-1186){\oval(450,450)[bl]}
}%
{\put(6976,-1186){\oval(450,450)[bl]}
}%
{\put(6526,-1636){\oval(450,450)[tr]}
}%
{\put(1126,-1411){\line(-1, 0){675}}
\put(451,-1411){\line( 0, 1){1800}}
\put(451,389){\line( 1, 0){2700}}
\put(3151,389){\line( 0,-1){2700}}
\put(3151,-2311){\line(-1, 0){1800}}
\put(1351,-2311){\line( 0, 1){675}}
}%
{\put(1351,-1186){\line( 0, 1){675}}
\put(1351,-511){\line( 1, 0){900}}
\put(2251,-511){\line( 0,-1){900}}
\put(2251,-1411){\line(-1, 0){675}}
}%
{\put(6751,-1636){\line( 0,-1){675}}
\put(6751,-2311){\line(-1, 0){1800}}
\put(4951,-2311){\line( 0, 1){2700}}
\put(4951,389){\line( 1, 0){2700}}
\put(7651,389){\line( 0,-1){1800}}
\put(7651,-1411){\line(-1, 0){675}}
}%
{\put(6526,-1411){\line(-1, 0){675}}
\put(5851,-1411){\line( 0, 1){900}}
\put(5851,-511){\line( 1, 0){900}}
\put(6751,-511){\line( 0,-1){675}}
}%
{\put(5896,-1366){\framebox(810,810){}}
}%
\put(+700,-150){$\scr n+1$}
\put(+5200,-150){$\scr n+1$}
\put(+1500,-900){$\scr n-1$}
\put(+6000,-900){$\scr n-1$}
\put(+700,-1900){$\scr n$}
\put(+7200,-1900){$\scr n$}
\put(+1700,-1300){$\scr \ga_2$}
\put(+1700,-2200){$\scr \ga_1$}
\put(+6200,-1300){$\scr \ga_4$}
\put(+6200,-2200){$\scr \ga_3$}
\end{picture}%
\end{center}
\smallskip
\centerline{\footnotesize
Figure 5. Compatible cylinders $\ga_1\sim\ga_2$ with
$E(\ga_2)=I(\ga_1)=n+1$ and $\bar\ga_2\subset\bar\ga_1$,  }
\centerline{\footnotesize and $\ga_3\sim\ga_4$ with
  $E(\ga_3)=E(\ga_4)=n$ and $\bar\ga_3\cap\bar\ga_4=\emptyset$   }

\bigskip

Two cylinders $\gamma',\gamma''$ are 
{\it separated\/} by a cylinder $\gamma$,
if ${\olg}'\ne{\olg}\ne{\olg}''$ and

\smallskip

\begin{tabular}{lll}
\hphantom{(1)} &either 
${\olg}'\subset{\olg}\subset{\olg}''$, 
&or ${\olg}''\subset{\olg}\subset{\olg}'$, \\
&or ${\olg}'\subset{\olg}$ 
and ${\olg}''\subset{\olg}^c$, 
&or ${\olg}''\subset{\olg}$ 
and ${\olg}'\subset{\olg}^c$. 
\end{tabular}

\smallskip

Let $\Gamma=\{\gamma_i\}$ be a set of cylinders. 
We say that this set $\Gamma$ 
is a {\it compatible} set of cylinders if
any two  of its cylinders not separated by a third one are compatible.
We denote by $\Gamma_{ext}$ the set of all {\it external} cylinders in 
$\Gamma$,
i.e., the set of all $\gamma$ such that $\olg$ is not contained in the
interior of any other cylinder in $\Gamma$.
We write $E(\Gamma)=n$ if $E(\gamma)=n$ for all $\gamma\in\Gamma_{ext}$.
The partition function with
constant boundary conditions can now be expressed 
as a sum over
compatible sets of cylinders with suitable weights.
To any cylinder $\gamma=({\tga},E,I)$ we assign the statistical weight
\beqa
\varphi(\gamma) = \varphi_{t,u}(\ga) &=&\exp\Big(-2\beta J L(\gamma)|{\tga}|
+u|{\olg}|
\big(\delta(I)-\delta(E)\big)\Big) \nonumber \\
&=&t^{{1\over2}L(\gamma)|{\tga}|}
\exp\Big(u|{\olg}|
\big(\delta(I)-\delta(E)\big)\Big) \label{phicyl}
\eeqa 
Then, taking formula (\ref{HH}) into account, we obtain  
\beq
\Xi(\Lambda,n)=e^{u\delta(n)|\Lambda|}
\sum_{\Gamma\in\mathcal{C}(\Lambda,n)}\prod_{\gamma\in\Gamma}\varphi(\gamma) , 
\label{Xi1}\eeq
where the sum runs over the set $\mathcal{C}(\Lambda,n)$ of all  
compatible sets of cylinders,
$\Gamma$, on $\Lambda$, such that $E(\Gamma)=n$. 


\section{Restricted ensembles}\label{RE}

We come back to the problem of phase transitions in the SOS model with
a wall. 
It has been recognized that for 
an interesting class of systems, including our model, 
one needs some extension of the Pirogov-Sinai theory of phase transitions, 
though a general theory of the concerned systems does not exist. 
In such an  extension certain states, called the {\it restricted ensembles}, 
play the role of the ground states in the usual theory. 
They can be defined as a Gibbs probability measure
on certain subsets of configurations. 

In the present case we shall consider, for each $n=0,1,2,\dots$,   
subsets of configurations which are in some
sense  near to the constant configurations $\phi_x \equiv n$. 
The precise definition is as follows.  
A cylinder $\gamma$ is called {\it elementary} if
\beq 
\diam{\tga}\le 3k+3 ,
\label{elementary}\eeq 
where $k$ is a given positive integer. 
Since we are dealing with integer numbers we shall use $\ell^1$ distance 
for this diameter, and the $\ell^1$ norm $|x| = |x_1| + |x_2|$ on ${\bf Z}^2$.
We use the notation $\C_k^{el}(\Lambda,n)$ for the set of finite 
compatible sets of elementary cylinders, that is, the set of all 
$\Gamma\in\mathcal{C}(\Lambda,n)$ that contain only elementary cylinders. 

The Gibbs measure defined on the subset $\C_k^{el}(\Lambda,n)$ 
is the {\it restricted ensemble} corresponding to level $n$.
The associated free energy (times $\beta$) per unit area is 
\beq
f_k(n) = -\lim_{\Lambda\to\infty} \frac{1}{|\Lambda|} \ln Z_k(\Lambda,n) ,
\label{limZeta}
\eeq
where
\begin{align}
Z_k(\Lambda,n) &=
\sum_{\phi_\Lambda\in\C_k^{el}(\Lambda,n)}
\exp(-\beta H_\La(\phi_\Lambda|n)) \notag \\
&= e^{u\delta(n)|\Lambda|}
\sum_{\Ga\in\C_k^{el}(\Lambda,n)}\prod_{\ga\in\Ga}\varphi(\gamma).
\label{Zeta}
\end{align}

Using the cluster expansion technique we shall prove,
in Section \ref{ProofProp}, that one is able 
to compute the free energies $f_k(n)$  
as convergent  power series in the variable $t$.  
The radius of convergence  
depends on the choice of the restricted set configurations 
$\C_k^{el}(\Lambda,n)$, that is on the value of $k$ used in its definition.
It happens, as will be seen in the proof,  
that $f_k(n)$ differs from the other free energies $f_k(n')$
at least by an order $t^{3n+3}$. 
Therefore, when we want to study the level $n$, 
we have to consider in the definition of $\C_k^{el}(\Lambda,n)$
all the 
cylinders that can contribute with 
a weight at least of order $t^{3n+3}$. 
A consistent choice for this purpose will be to take $k=\max(2n,8)$. 

At this point one is able to study the phase diagram of the 
restricted ensembles. The restricted ensemble at level $n$
is said to be {\it dominant} 
for some given values of the parameters $u,t,k$, 
if $f_k(n)=\min_h f_k(h)$. 
In the next proposition we summarize the results
that will be proved  
concerning the regions in the plane $t,u$, where, for each $n$,
the associated restricted ensemble at level $n$ 
is the dominant restricted ensemble.

\begin{Prop} 
Let the integer $n\ge0$ be given and choose $k\ge 8$.
Let $a,b\ge0$ and $0<t\le t_1(k)=(3k+3)^{-4}$. If $n \geq 1$ and
\beq
-\ln(1-t^2)+(2+a)t^{n+3}\le u\le-\ln(1-t^2)+(2-b)t^{n+2} , 
\label{Ass}\eeq
or if $n=0$ and 
\beq
-\ln(1-t^2)+(2+a)t^3 \le u \le t^{1/2} , 
\label{Ass2} \eeq
then we have
\begin{align}
f_k&(n) \le f_k(h)-at^{3n+3}+O(t^{3n+4}), \notag \\
&\hbox{uniformly in } 
n \geq 0, k \geq \max(8,n), h\ge n+1, 
\label{Pr1} 
\end{align}
\begin{align} \label{Pr2} 
&f_k(n) \le f_k(n-1)-bt^{3n}+O(t^{3n+1}) \quad \text{if $n \geq 1$, and} \notag \\
&f_k(n) \le f_k(h) - 2t^{3h+3} + O(t^{3h+4}) \quad \text{for } h \leq n-2, \\
&\quad \hbox{both uniformly in } n \geq 1, k \geq \max(8,n), 0\le h\le n-1. \notag
\end{align}
\label{Prop1}\end{Prop}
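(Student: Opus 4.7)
The plan is to prove Proposition~\ref{Prop1} via the convergent cluster expansion for the restricted‐ensemble partition function $Z_k(\Lambda,n)$, whose set-up and convergence (under the hypothesis $t\le t_1(k)=(3k+3)^{-4}$) will be carried out in Section~\ref{ProofProp}. Granting that machinery, one obtains
\[
-f_k(n) \;=\; u\,\delta(n) \;+\; \Phi_n(t,u),
\]
where $\Phi_n$ is an absolutely convergent cluster‐expansion sum over clusters of mutually (in)compatible elementary cylinders. The whole proposition then reduces to identifying the leading terms of $\Phi_n$ as a function of $n$ and $u$, and carrying out the comparisons under the two hypothesis windows.

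First I would extract the dominant $n$-dependence of $\Phi_n$. The cylinders of smallest weight are the single-column ones (base a single plaquette): per site they contribute $\sum_{L\ge1}t^{2L}$ on the positive side and $\sum_{L=1}^{n-1}t^{2L}+t^{2n}e^{u}$ on the negative side (the last term being the unique wall-touching column of depth~$n$). Setting $v := u+\ln(1-t^2)$ and summing the single-cylinder contributions gives, for $n\ge1$,
\[
\Phi_n \;=\; G(t) \;+\; \frac{t^{2n}(e^v-1)}{1-t^2} \;+\; R_n(t,u),
\]
with $G(t)$ independent of $n$ and $R_n$ collecting contributions from cylinders with multi-site bases as well as from multi-cylinder (Ursell) clusters; for $n=0$ there is an additional term $u$ from the prefactor $e^{u\,\delta(n)\,|\Lambda|}$. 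To obtain \eqref{Pr1}, note that for $h\ge n+1$,
\[
f_k(h)-f_k(n) \;=\; \frac{(t^{2n}-t^{2h})(e^v-1)}{1-t^2} \;+\; (R_n-R_h),
\]
and combining the lower bound $v\ge(2+a)t^{n+3}$ (so that $e^v-1\ge v$) with the cluster‐expansion remainder estimate $R_n-R_h = O(t^{3n+4})$ yields $f_k(h)-f_k(n)\ge a\,t^{3n+3}+O(t^{3n+4})$, as required. The $n=0$ case uses the extra $u$-prefactor in an analogous fashion.

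Second, to prove \eqref{Pr2} one compares $f_k(n)$ with $f_k(n-1)$ using the upper bound $v\le(2-b)t^{n+2}$. The structural formula above gives a leading contribution of the form $-t^{2n-2}(e^v-1)$ to $f_k(n-1)-f_k(n)$, which has the \emph{wrong} sign for $v>0$; the desired inequality is recovered only after tracking the remainder $R_{n-1}-R_n$ to order $t^{3n}$. Two-site wall-touching cylinders contribute an $n$-dependent correction whose leading coefficient is~$2$ (this is the origin of the factor $(2-b)t^{n+2}$ in the hypothesis), while multi-cluster terms coupling single-column and two-site cylinders provide compensating contributions at intermediate orders. The precise accounting, together with the hypothesis on $v$, gives $f_k(n-1)-f_k(n)\ge b\,t^{3n}+O(t^{3n+1})$. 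The bound $f_k(h)-f_k(n)\ge 2t^{3h+3}+O(t^{3h+4})$ for $h\le n-2$ follows by chaining successive level-to-level comparisons, with the universal coefficient~$2$ reflecting the fact that such $h$ lies strictly outside its own dominance window. The main obstacle is precisely this last step: organizing the cluster‐expansion remainders so that the naive single-column sign is reversed, and pinning down the universal coefficient~$2$ that governs both the shape of the hypothesis windows and the strength of the dominance conclusion.
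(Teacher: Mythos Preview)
Your Part~1 argument has a concrete error. You assert $R_n-R_h=O(t^{3n+4})$, but from your own displayed formula the single-column main term gives $\frac{(t^{2n}-t^{2h})(e^v-1)}{1-t^2}\ge (2+a)t^{3n+3}(1+O(t))$ under $v\ge(2+a)t^{n+3}$. If the remainder were really $O(t^{3n+4})$ you would conclude $f_k(h)-f_k(n)\ge(2+a)t^{3n+3}+O(t^{3n+4})$, which is too strong: it would make level $n$ dominant over $n+1$ for every $u>-\ln(1-t^2)$, independently of $a$, contradicting the whole layering picture. The truth is $R_n-R_h=-2t^{3n+3}+O(t^{3n+4})$ (for $h=n+1$, and telescoping for larger $h$). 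The $-2t^{3n+3}$ comes from exactly the two-site wall-touching cylinders you invoke in Part~2 but forget in Part~1: from level $h+1$ there is a two-plaquette-base downward cylinder of length $h+1$ reaching the wall (weight $2t^{3h+3}e^{2u}$ per site) which has no counterpart at level $h$. In the paper's bookkeeping this is the type-(i) perturbation in Figure~6, and the increment formula \eqref{P} reads
\[
f_k(h+1)-f_k(h)=(t^{2h}+P_h(t))\bigl(e^v-1\bigr)-2t^{3h+3}+O(s^{3h+4}),
\]
so \emph{both} \eqref{Pr1} and \eqref{Pr2} hinge on the same balance between $t^{2h}(e^v-1)$ and $2t^{3h+3}$; your proposal only sees this balance on one side.

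The second and larger gap is the uniformity. The remainder estimate you invoke must hold \emph{uniformly in $h\ge1$ and $k\ge8$} for the telescoping to yield the stated conclusions, and this is where essentially all the work lies. The relevant clusters touch the wall from arbitrary height $h$; a naive Peierls bound gives constants growing with $h$. The paper controls this through a classification of wall-touching perturbations (tornadoes, semi-monotone tornadoes), inductive bounds on the sums $S_1,S_2,S_3$ over such families, and separate treatment of five types of bad clusters (big touching cylinders, multi-touching perturbations, large-diameter clusters handled by Lemma~\ref{chains}, repeated touching handled by Lemma~\ref{pertcost}, and lifted versions), culminating in Proposition~\ref{unifh}: $|P_h|\le Cs^{2h+1}$, $|Q_h|\le Cs^{3h+1}$, $|V_h|\le Cs^{3h+4}$ uniformly in $h$. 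Your proposal gives no indication of how you would obtain any uniform remainder bound.
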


``Uniformly'' in (\ref{Pr1}) means that $|O(t^{3n+4})|$ is bounded
from above by $t^{3n+4}$ times a constant
independent of $k,n,h$ in the given ranges, and similarly for (\ref{Pr2}).


The bound of $t^{1/2}$ in \eqref{Ass2} is somewhat arbitrary, mainly present to ensure convergence of
the cluster expansion in the form that we use.

Proposition \ref{Prop1} is proved in Section \ref{ProofProp}.
Then the proof of Theorem \ref{Th1} will consist of showing that the 
phase diagram of the pure phases at low temperature is close
to the phase diagram of the dominant restricted ensembles.


\section{Proof of Proposition 3.1}\label{ProofProp}

Throughout the paper, $K_i$ are constants which do not depend on any
of the parameters in the problem; in particular they are independent of heights $h,n$.

In order to discuss the cluster expansion for the free energies $f_k(n)$
associated to the restricted ensembles, some definitions are needed.  
We first introduce the notion of {\it elementary perturbation}, defined  
as a compatible set of elementary cylinders $\omega\in\C_k^{el}(\Lambda,n)$
that contains a unique cylinder which is external in the set.

Thus, if $\omega$ is an elementary perturbation
we may write $\omega=(\gamma^{ext},\{\gamma_i\})$ as a compatible set of
elementary cylinders such that ${\olg}_i\subset{\olg^{ext}}$ 
for all $i$ and $E(\omega)=E(\gamma^{ext})=n$. 
The set $\olg^{ext}$ is called the {\it support} of
the elementary perturbation and is denoted by $\Supp\omega$.
To each elementary perturbation the following statistical weight is assigned: 
\beq
\varphi(\omega)=\varphi(\gamma^{ext})\prod_i\varphi(\gamma_i). 
\eeq
When computing these weights the following notation will be useful:
\beq
2\Vert\omega\Vert=L(\gamma^{ext})|{\tga}^{ext}|+
\sum_i L(\gamma_i)|{\tga}_i|, 
\eeq
so that $\Vert\omega\Vert$ represents half the number of vertical plaquettes.

Let now $\Gamma\in\C_k^{el}(\Lambda,n)$ be any compatible set of 
elementary cylinders. It is then possible to write
$\Gamma$ as the disjoint union of elementary perturbations 
\beq
\Gamma = \omega_1\cup\ldots\cup\omega_r
\label{decompo1}\eeq
in a unique way. 
This shows that there is a one-to-one correspondence between the 
restricted set of configurations $\C_k^{el}(\Lambda,n)$ and 
the set of all compatible sets of elementary perturbations, 
with the following definitions.
 
Two elementary perturbations $\om$ and $\om'$ are {\it compatible} if
their supports do not intersect 
and, moreover, $\om\cup\om'\in\C(\La,n)$. 
This last condition enters only when the boundaries of the 
supports of $\om$ and $\om'$ have a common part 
and is satisfied as soon as $S(\gamma^{ext})=-S(\gamma'^{ext})$.
A set of elementary perturbations is a {\it compatible set} if 
any two perturbations in the set are compatible.  The following is clear.

\begin{Lem}
The partition function (\ref{Zeta}) may be written as
\beq
Z_k(\Lambda,n)= e^{u\delta(n)|\Lambda|}
\sum_{\{\omega_i\}\in\C_k^{el}(\La,n)}\prod_{i}\varphi(\omega_i) ,
\label{Zeta3}\eeq
where the sum runs over all compatible sets of elementary perturbations
contained in $\Lambda$ such that $E(\omega)=n$.
\end{Lem}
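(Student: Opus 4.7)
The plan is to establish a bijection between the compatible sets of elementary cylinders counted in \eqref{Zeta} and the compatible sets of elementary perturbations appearing on the right of \eqref{Zeta3}, then check that the weights transform correctly. Since the lemma is called ``clear'' in the text, I view the work as bookkeeping around the compatibility relations already set up in Sections \ref{CM} and \ref{RE}, rather than any new estimate.

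First I would define the forward map. Given $\Gamma\in\C_k^{el}(\La,n)$, let $\Gamma_{ext}=\{\gamma_1^{ext},\dots,\gamma_r^{ext}\}$ be its external cylinders. For each $i$ set
\beq
\omega_i=\{\gamma_i^{ext}\}\cup\{\gamma\in\Gamma:\bar\gamma\subset\bar\gamma_i^{ext}\}.
\eeq
The key preliminary observation is that two distinct external cylinders of a compatible set have disjoint interiors: if $\gamma,\gamma'\in\Gamma_{ext}$ with $\bar\gamma\ne\bar\gamma'$, then neither can be contained in the other (both are external), so compatibility conditions (1) and (2) both force $\bar\gamma\cap\bar\gamma'=\emptyset$. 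Consequently every non-external $\gamma\in\Gamma$ lies in the interior of a \emph{unique} external cylinder, so the $\omega_i$ partition $\Gamma$ as in \eqref{decompo1}.

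Next I would verify that each $\omega_i$ is an elementary perturbation: it is a subset of the compatible set $\Gamma$, hence compatible; it consists of elementary cylinders since $\Gamma\in\C_k^{el}(\La,n)$; it has the unique external cylinder $\gamma_i^{ext}$ (any other element has interior inside $\bar\gamma_i^{ext}$); and $E(\omega_i)=E(\gamma_i^{ext})=n$ by the hypothesis $E(\Gamma)=n$. Then I would check that the $\omega_i$ form a compatible family of perturbations in the sense of Section \ref{RE}: their supports $\bar\gamma_i^{ext}$ are pairwise disjoint by the observation above, and $\bigcup_i\omega_i=\Gamma\in\C(\La,n)$ automatically. Conversely, given any compatible family of elementary perturbations $\{\omega_i\}$ with $E(\omega_i)=n$, its union is a compatible set in $\C_k^{el}(\La,n)$ with external cylinders exactly the distinguished external cylinders of the $\omega_i$, providing the inverse map.

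Finally, the weight identity is immediate from the definition $\varphi(\omega)=\varphi(\gamma^{ext})\prod_j\varphi(\gamma_j)$: regrouping the product over cylinders of $\Gamma$ according to the partition \eqref{decompo1} gives
\beq
\prod_{\gamma\in\Gamma}\varphi(\gamma)=\prod_{i=1}^r\varphi(\omega_i),
\eeq
so substituting into \eqref{Zeta} yields \eqref{Zeta3}. The only subtle point is the uniqueness of the decomposition, which rests on the disjoint-interior property of external cylinders; once that is in hand, the rest is formal. I do not anticipate a genuine obstacle, only the need to be careful that the ``extra'' clause in the definition of compatibility of perturbations ($\omega\cup\omega'\in\C(\La,n)$) is indeed automatic when both perturbations come from the single compatible set $\Gamma$.
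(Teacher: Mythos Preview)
Your proposal is correct and follows exactly the route the paper has in mind: the paper already asserts the decomposition \eqref{decompo1} in the text preceding the lemma and then declares the lemma ``clear,'' so what you have written is simply a spelled-out version of that bookkeeping.

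One small point to tighten: the step ``it is a subset of the compatible set $\Gamma$, hence compatible'' is not literally valid in general, since removing a separating cylinder from a compatible set can leave two non-separated, non-compatible cylinders behind. For the particular subsets $\omega_i=\{\gamma\in\Gamma:\bar\gamma\subset\bar\gamma_i^{ext}\}$ it \emph{does} hold, because any $\gamma\in\Gamma$ that separates two cylinders of $\omega_i$ must itself satisfy $\bar\gamma\subset\bar\gamma_i^{ext}$ (this follows from the separation conditions together with the externality of $\gamma_i^{ext}$), and so already lies in $\omega_i$. With that observation inserted, your argument is complete.
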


This lemma and the compatibility definitions 
tell us that the restricted ensemble can be considered 
as a gas of elementary perturbations, in technical terms as a polymer system, 
for which a cluster expansion theory can be applied.
See, for instance, refs. \cite{GMM}, \cite{KP}, \cite{M}. 

A {\it cluster} $X$ from $\C_k^{el}(\La,n)$ can be defined as a finite
sequence of elementary  
perturbations $X=(\omega_1,\dots,\omega_r)$, where
some perturbations may be repeated, such that the set 
$\{\omega_1,\dots,\omega_r\}$
is connected by incompatibility relations $\om\not\sim\om'$.
(This notion of connection is formalized below.)
Supp $X$ will denote the union of the supports of the elementary
perturbations in $X$. 

Then there is a function $\varphi_u^{\rm T}(X)$, the {\it truncated function}, 
defined on the set of clusters 
(independent of $\Lambda$ and of the order of perturbations in $X$), 
such that one can write
\beq
-\ln Z_k(\Lambda,n) = -u\delta(n)|\Lambda|-\sum_{X:\Supp X\subset\Lambda}
\varphi_u^{\rm T}(X) \label{lnZ} 
\eeq
whenever the series converges, where
\beq
\varphi_u^{\rm T}(X) = a^{\rm T}(X)\prod_{i=1}^r \varphi(\omega_i) . \label{phiT}
\eeq
and $a^{\rm T}(X)$ is a signed combinatoric factor, defined in
(\ref{aT}) below. 

More formally,
one considers a countable set $\mathcal{P}$; the elements of this set  
are the abstract polymers, 
the notation $\alpha\not\sim\alpha'$ means that the
two polymers $\{\alpha,\alpha'\}\subset\mathcal{P}$ are not compatible,
and the weight $\varphi(\alpha)$ is a complex valued
function on $\mathcal{P}$. 
To any finite sequence of polymers
$(\alpha_1,\dots,\alpha_r)$ there corresponds a function $X=X(\alpha)$ on
$\mathcal{P}$,  
with non-negative integer values, such that $\sum_\alpha X(\alpha)=r\ge1$.
$X(\alpha)$ is the multiplicity of $\alpha$ in the sequence.
Associated to the same $X$ we have $r!/X!$ ordered sequences, 
where $X!=\prod_\alpha X(\alpha)!$.
Given $X$ we construct the graph $g(X)$ with vertices ${1,\dots,r}$ and edges
$\{ i,j\}$ corresponding to pairs such that $\alpha_i\not\sim\alpha_j$ or
$\alpha_i=\alpha_j$. 
If $g(X)$ is a connected graph we say that $X$ is a cluster, and then 
\beq
a^{\rm T}(X)=(X!)^{-1}\sum_g (-1)^{|g|}\,,
\label{aT}\eeq
where the sum runs over all spanning connnected subgraphs of $g(X)$, and
$|g|$ is the number of edges of $g$.
If $r=1$, then $a^{\rm T}=1$.

Let us recall the following theorem on the convergence
of cluster expansions (see \cite{M}.)

\medskip

\noindent{\bf Convergence theorem: }
For a polymer system $\mathcal{P}$, assume that there is a positive function $\mu(\alpha)$, 
$\alpha\in\mathcal{P}$, such that for all $\alpha\in\mathcal{P}$, 
\beq
|\varphi(\alpha)|\le\mu(\alpha)
\exp\Big(-\sum_{\alpha':\,\alpha'\not\sim\alpha}\mu(\alpha')\Big).
\label{i1}\eeq
Then for all $\alpha\in\mathcal{P}$ we have
\beq
\sum_{X:\,\alpha\in X}
|\varphi_u^{\rm T}(X)|\le\mu(\alpha)
\label{i2}\eeq
and
\beq
\sum_X
X(\alpha)|\varphi_u^{\rm T}(X)| \le\vphi(\om)e^{\sum_{\om':\,\om'\not\sim\om}\mu(\om')} 
\le e^{\mu(\alpha)}-1.
\label{Xfactor}\eeq

\medskip\noindent
We now apply this theorem to the free energies of the 
restricted ensembles.  As a consequence of (\ref{limZeta}) and (\ref{lnZ})
we have the formal expansion
\beq
f_k(h)= -u\delta(h)-\sum_{X:\,\Supp X\ni 0}
\frac{1}{|\Supp X|} \varphi_u^{\rm T}(X).
\label{fk}\eeq

\begin{Lem}\label{fkLemma}
Let $k\ge8$, $t\le t_1(k)=(3k+3)^{-4}$, $u\le t^{1/2}$. Then the
expansion \eqref{fk} of the free energy is an absolutely convergent
power series in $t$. Moreover, with $s=te^{t^{1/4}}$ and
$\mu(\omega)=\varphi_{s,0}(\omega)$,
the following bounds are satisfied for all elementary perturbations $\om$:
\beq
\sum_{\omega':\,\omega'\not\sim\omega}\mu(\omega')
< 3000s^2 \cdot 9|\olg^{ext}_\om|
<  s^{1/2} |\olg^{ext}_\om|, \label{omegaA} 
\eeq
\beq
|\varphi(\om)|\le\mu(\om)
\exp\Big(-\sum_{\om':\,\om'\not\sim\om}\mu(\om')\Big),
\label{i1A}\eeq
\beq
\sum_{X:\,\om\in X}
|\varphi_u^{\rm T}(X)|\le\mu(\om),
\label{i2A}\eeq
\beq
\sum_X
X(\om)|\vphi_u^{\rm T}(X)|\le\vphi(\om)e^{\sum_{\om':\,\om'\not\sim\om}\mu(\om')}
\le e^{\mu(\om)}-1.
\label{i3A}\eeq
Here the sums are over elementary perturbations $\om'$ or clusters $X$, of an arbitrary fixed external height.
\end{Lem}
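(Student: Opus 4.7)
The plan is to apply the Convergence Theorem quoted just above, taking as polymers the elementary perturbations with a fixed external height $h$ and as dominating function $\mu(\om)=\varphi_{s,0}(\om)=s^{\|\om\|}$, where $s=te^{t^{1/4}}$. Once the Kotecký--Preiss-type inequality \eqref{i1A} is established, the bounds \eqref{i2A} and \eqref{i3A} are immediate outputs of the theorem. Absolute convergence of the series \eqref{fk} then follows by a standard grouping argument: collecting clusters by an elementary perturbation of $X$ whose support contains the origin and using \eqref{i2A} bounds $\sum_{X:\Supp X\ni 0}|\Supp X|^{-1}|\varphi_u^{\mathrm T}(X)|$ by $\sum_{\om:\Supp\om\ni 0}\mu(\om)$, a convergent power series in $s$, hence in $t$. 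So the real work lies in \eqref{omegaA} and \eqref{i1A}.

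For \eqref{omegaA}, the key observation is that incompatibility $\om'\not\sim\om$, via the definitions (1)--(3) applied to the pair of external cylinders, forces either $\olg^{ext}_{\om'}$ to meet $\olg^{ext}_\om$ or $\tga^{ext}_{\om'}$ to share a dual bond with $\tga^{ext}_\om$. Hence $\om'$ admits an anchor dual site within $\ell^\infty$-distance one of $\olg^{ext}_\om$, and the number of such anchors is at most $9|\olg^{ext}_\om|$. For a fixed anchor $x$, I enumerate $\om'$ with $x\in\Supp\om'$ by choosing first the external contour through $x$ (bounded by $3^P$ closed dual paths of length $P\ge 4$), then the interior level $I\ne h$ with $L(\ga^{ext})\ge 1$ (yielding a geometric factor in $s^{P/2}$), then recursively the compatible nested interior elementary cylinders inside $\olg^{ext}_{\om'}$. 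Since each additional elementary cylinder has $|\tga|\ge 4$ and $L\ge 1$, it multiplies by at least a factor $s^2$, and the recursion closes into a geometric series with leading term of order $s^2$, totalling at most $3000\,s^2$ for $s$ small. This yields the first inequality of \eqref{omegaA}; the second is immediate since $27000\,s^{3/2}<1$ whenever $t\le(3k+3)^{-4}$ is small.

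For \eqref{i1A}, write $\varphi(\om)=t^{\|\om\|}\exp(u\Sigma(\om))$ with $\Sigma(\om)=|\olg^{ext}_\om|(\delta(I^{ext})-\delta(h))+\sum_i|\olg_i|(\delta(I_i)-\delta(E_i))$. Since $\mu(\om)/t^{\|\om\|}=\exp(\|\om\|t^{1/4})$, the condition \eqref{i1A} reduces to $u|\Sigma(\om)|+\sum_{\om'\not\sim\om}\mu(\om')\le\|\om\|t^{1/4}$. Per cylinder, the combination of the isoperimetric inequality $|\olg|\le|\tga|^2/16$, the elementary diameter bound $|\olg|\le(3k+3)^2$, and the hypotheses $u\le t^{1/2}$, $t\le(3k+3)^{-4}$ yields $u|\olg|\le\tfrac14|\tga|t^{1/4}$ in both the regimes $|\tga|<4(3k+3)$ and $|\tga|\ge 4(3k+3)$; summing over all cylinders of $\om$ and using $\sum|\tga|\le 2\|\om\|$ (each $L\ge 1$) gives $u|\Sigma(\om)|\le\tfrac12\|\om\|t^{1/4}$. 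An analogous estimate applied to the right-hand side of \eqref{omegaA} shows $s^{1/2}|\olg^{ext}_\om|\le\tfrac12\|\om\|t^{1/4}$ for $t$ small, completing the verification of \eqref{i1A}.

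The main obstacle will be the recursive enumeration in the proof of \eqref{omegaA}: one must set up an induction on the nesting depth of interior elementary cylinders that keeps the geometric series uniformly convergent with explicit constants, and check that all subcases of the compatibility definitions (1)--(3) are indeed covered by the geometric anchor picture on which the factor $9|\olg^{ext}_\om|$ relies.
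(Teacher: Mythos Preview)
Your proposal is correct and follows essentially the same route as the paper: apply the Convergence Theorem with $\mu=\varphi_{s,0}$, $s=te^{t^{1/4}}$, derive \eqref{omegaA} by anchoring each incompatible $\om'$ at a site near $\olg^{ext}_\om$, and reduce \eqref{i1A} to the gap between $t^{\|\om\|}$ and $s^{\|\om\|}=t^{\|\om\|}e^{\|\om\|t^{1/4}}$.

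Two execution details differ and may save you work. First, where you plan a recursion on nesting depth to bound $\sum_{\om':\Supp\om'\ni x}\mu(\om')$, the paper sidesteps this entirely: for fixed $\ga^{ext}$ it bounds the total contribution of all interior cylinders of $\om'$ by $(1+2\sum_{m\ge1}s^{m/2})^{4|\olg^{ext}|}<\exp(9s^{1/2}|\olg^{ext}|)$, simply treating each of the at most $4|\olg^{ext}|$ interior dual bonds as independently carrying an arbitrary stack of vertical plaquettes. This overcounting avoids the inductive obstacle you flagged. Second, for the $u$-dependent part of \eqref{i1A}, the paper observes that $\Sigma(\om)$ telescopes to the number of sites of $\om$ at height $0$ minus $\delta(h)|\olg^{ext}_\om|$, so $|\Sigma(\om)|\le|\olg^{ext}_\om|$; it then only needs the factor $(s/t)^{|\ti\ga^{ext}_\om|/2}$ from the external cylinder. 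Your per-cylinder bound $u|\olg|\le\tfrac14|\tga|t^{1/4}$ paired with the full $(s/t)^{\|\om\|}$ is equally valid but slightly heavier on constants at the stated threshold $t\le(3k+3)^{-4}$.
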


\begin{proof}
In our case, in the Convergence Theorem we can take $\mu(\omega)=\varphi_{s,0}(\omega)$,
the same as the weight $\varphi(\omega)=\varphi_{t,u}(\omega)$
on the set of elementary perturbations, but for some $s>t$ in place of
$t$, and $u$ replaced by 0.
Then if $\ga_\om^{ext}$ is the exterior cylinder of $\omega$, we have
\begin{align} \label{geom}
\sum_{\om: \ga_\om^{ext}=\ga} \mu(\om) 
  &<\mu(\ga)\Big(1+2\sum_{h=1}^\infty s^{h/2}\Big)^{4|\olg|} \\
&<\mu(\ga)(1-2s^{1/2})^{-4|\olg|} \notag \\ 
&<\mu(\ga)\exp(9s^{1/2}|\olg|),
\nonumber 
\end{align}
and then
\beqa \label{geom2}
&&\sum_{\om: \ti\ga_\om^{ext}=\ti\ga} \mu(\om) 
<{{2s^{|\tga|/2} }\over{ 1-s^{|\tga|/2}}}
\exp(9s^{1/2}|\olg|)
<3s^{|\tga|/2}\exp(9s^{1/2}|\olg|) .
\eeqa
We have first bounded the contributions of all other cylinders of $\omega$ by 
the contributions of all possible sets of vertical plaquettes that project 
on bonds of the dual lattice inside $\tga$ or on $\tga$; we take $4|\olg|$
as a bound for the number of such bonds.
We have used $(1-2s^{1/2})^{-1}<e^{9s^{1/2}/4}$, assuming $2s^{1/2}<\,0.21$. 

A useful inequality is
\beq \label{Peierlssum}
  \sum_{l=m}^\infty l^2x^l \leq \frac{5}{4}m^2x^m \quad \text{for all } 
m \geq 4\,, x \leq \frac{1}{8}. 
  \eeq

Now $2\cdot3^{\ell-1}$ is a bound on the number of 
$\tga$ of length $\ell$ passing through a given dual lattice site.
Further, $\ga_\om^{ext}$, being an elementary cylinder, 
has interior containing fewer than $(3k+3)^2/4$ lattice sites, or also fewer than
$(\ell/4)^2$ lattice sites, so using \eqref{geom2} and \eqref{Peierlssum} we obtain for all $k \geq 1$:
\beqa
\sum_{\om: \olg_\om^{ext}\ni 0} \mu(\om) &<&
2\exp\Bigl(9s^{1/2} (3k+3)^2/4\Bigr)
\sum_{\ell=4}^\infty \left( \frac{\ell}{4} \right)^2(3s^{1/2})^\ell \nonumber\\ 
&<&\frac{5}{2} (3^4s^2)\exp\Bigl(9s^{1/2}  (3k+3)^2/4\Bigr)\nonumber\\ 
&<&3000s^2 \quad\hbox{if}\quad s\le2(3k+3)^{-4},
\label{8k}
\eeqa
where we used \eqref{Peierlssum} and $\exp(9s^{1/2}  (3k+3)^2/4) \leq \exp(9e^{1/54}/4)<10$.

For $s$ as above with $k \geq 2$, the argument of the 
exponential function in condition (\ref{i1}) is therefore
bounded as
\begin{align}
\sum_{\omega':\,\omega'\not\sim\omega}\mu(\omega')
  &< (|\olg^{ext}_\om|+|\ti\ga^{ext}_\om|+4)
  \sum_{\om':\,\Supp\om'\ni0}\mu(\om') \notag \\
&< 3000s^2 \cdot 9|\olg^{ext}_\om| \notag \\
&<  \frac{1}{20}s^{1/2} |\olg^{ext}_\om|, \label{omega} 
\end{align}
where the factor in front in the first inequality 
reflects the fact that given $\om$ there is a set of at most
$|\olg^{ext}_\om|+|\ti\ga^{ext}_\om|+4$
sites such that every $\om'$ incompatible with $\om$ must contain one of these sites in its support.
This proves \eqref{omegaA}.

By \eqref{omega}, in order to obtain \eqref{i1A}, which corresponds to (\ref{i1}), it suffices that
\beq \label{ts1}
  \varphi_{t,u}(\om) \leq \varphi_{s,0}(\om) e^{-s^{1/2} |\olg^{ext}_\om|}.
  \eeq
Now 
\begin{equation} \label{ts4}
  \frac{ \varphi_{t,u}(\om) }{ \varphi_{s,0}(\om) } 
    \leq \frac{ \varphi_{t,0}(\ga^{ext}_\om) }{ \varphi_{s,0}(\ga^{ext}_\om) } e^{u |\olg^{ext}_\om|}
    \leq \left( \frac{t}{s} \right)^{|\ti\ga^{ext}_\om|/2} e^{s^{1/2} |\olg^{ext}_\om|},
  \end{equation}
so (\ref{ts1}) reduces to
\beq \label{ts}
t^{{1\over2}|\ti\ga_\om^{ext}|} \leq s^{{1\over2}|\ti\ga_\om^{ext}|}
e^{-2s^{1/2} |\olg^{ext}_\om|}.
\eeq
Using $s\le2(3k+3)^{-4}$, $|\olg^{ext}_\om| \leq (3k+3)^2/4$ and the isoperimetric inequality 
$|\ti\ga_\om^{ext}| \geq 4|\olg_\om^{ext}|^{1/2}$, we see that
\beq\label{ts3}
  2s^{1/4} |\olg^{ext}_\om| \leq (3k+3)s^{1/4} |\olg^{ext}_\om|^{1/2} <  2^{-7/4} |\ti\ga_\om^{ext}|,
\eeq
so it suffices for \eqref{ts} that
\beq \label{s1}
  t \leq se^{-(s/8)^{1/4}}.
  \eeq
For \eqref{s1}, in turn, it suffices that 
\beq
  te^{t^{1/4}} \leq s \leq 2t.
\eeq
Thus for $k\ge2$, $t \leq (3k+3)^{-4}$ and $s(t) = te^{t^{1/4}}$, \eqref{i1A} is satisfied,
and \eqref{i2A}, \eqref{i3A} follow by the Convergence Theorem.  

We can use \eqref{i2A} to establish convergence of \eqref{fk}, as follows.  
Let $n(X) = |\{\om: X(\om) \geq 1\}|$ 
be the number of distinct elementary perturbations in the cluster $X$.  Then 
by \eqref{i2A} and \eqref{8k} we have
\begin{align} \label{newsum}
  \sum_{X:\Supp X\ni 0} \frac{1}{|\Supp X|} |\varphi_u^{\rm T}(X)|
    &= \sum_{\om:0\in \bar\ga_\om^{ext}}
    {1\over|\Supp\om|}\sum_{X\ni\om}{1\over n(X)} |\varphi_u^{\rm T}(X)| \notag \\
  &\leq \sum_{\om:0\in \bar\ga_\om^{ext}} {1\over|\Supp\om|} \mu(\om) \notag \\
  &< \infty.
  \end{align}
\end{proof}


Henceforth, dealing with $t \leq (3k+3)^{-4}$, 
we use the function $\mu(\om) = \vphi_{s,0}(\om)$ defined in the last proof,
with $s = s(t) \equiv te^{t^{1/4}}$, so 
\begin{equation} \label{sk}
  s \leq s_k \equiv 2(3k+3)^{-4}\leq 2/3^{12} \quad \text{for } k \geq 8.
  \end{equation}
From \eqref{i1A} we have $\varphi = \varphi_{t,u} \leq \mu$.  In view of \eqref{Pr1}
we observe that
\beq \label{svst}
  s^{3h+4} = O(t^{3h+4}) \quad \text{uniformly in $h \leq k, t \leq (3k+3)^{-4}$
    and } k \geq 8.
  \eeq
One may replace $3h+4$ here with $3h+r$ for any fixed $r$.

Next, we are going to compute the relevant terms in the expansion \eqref{lnZ}.

In Figure 6 are represented the clusters that
contribute to the difference $f_k(h+1)-f_k(h)$, 
classified according to the number of contacts with the wall, with
leading terms singled out,
when $h\ge 2$. 
The picture is a schematic representation as  
the elementary perturbations are actually three-dimensional objects.
Either they touch the wall when placed on the interface at level $h$,
cases (a) to (d), or when placed on the interface at level 
$h+1$, cases (e) to i). 

The elementary perturbations in (a) and (e) are cylinders with one 
plaquette as base, with $\Vert\omega\Vert=2h$ in case (a)
and $\Vert\omega\Vert=2h+2$ in case (e).
Those in (c) and i) are cylinders with two plaquettes as base and we 
have $\Vert\omega\Vert=3h$ and $\Vert\omega\Vert=3h+3$, respectively.
The perturbations in (b) touch the wall only with one plaquette
and satisfy $\Vert\omega\Vert\ge2h+1$.
In case (d) they touch with two plaquettes and $\Vert\omega\Vert\ge3h+1$.
Finally, cases (f), (g) and (h) correspond to perturbations 
that touch the wall with one plaquette, obtained as a continuation
of those in cases (b), (c), and (d), and satisfy 
$\Vert\omega\Vert\ge2h+3$ in case (f),
$\Vert\omega\Vert=3h+2$ in case (g) and
$\Vert\omega\Vert\ge3h+3$ in case (h).



\begin{center}
\setlength{\unitlength}{5mm}
\begin{picture}(20,7)(0,0)
\def\H{\line(1,0){1}}
\def\V{\line(0,1){1}}
\def\I{\line(1,1){0.5}}
{\thicklines
\put(-2,4){\line(1,0){22}}}
\put(21,3.8){\footnotesize level $h$}

\multiput(-1,0)(0,1){5}{\H}
\multiput(-1,0)(0,1){4}{\V}
\multiput(0,0)(0,1){4}{\V}
\multiput(3.5,0)(0,1){2}{\H}
\put(3.5,0){\V}
\put(4.5,0){\V}
\multiput(4.5,0)(0,1){2}{\I}
\bezier{300}(3.5,1)(2.7,2)(2.7,4)
\bezier{300}(4.5,1)(5.3,2)(5.3,4)
\multiput(8,0)(0,1){5}{\H}
\multiput(8,0)(0,1){4}{\V}
\multiput(9,0)(0,1){4}{\V}
\multiput(9,0)(0,1){5}{\H}
\multiput(10,0)(0,1){4}{\V}
\put(-1,-1.5){(a)}
\put(3.5,-1.5){(b)}
\put(8.5,-1.5){(c)}

\bezier{300}(14,1)(13.2,2)(13.2,4)
\bezier{300}(16,1)(16.8,2)(16.8,4)
\multiput(14,0)(0,1){2}{\H}
\multiput(14,0)(0,1){1}{\V}
\multiput(15,0)(0,1){1}{\V}
\put(16,0){\V}
\multiput(15,0)(0,1){2}{\H}
\put(14.5,-1.5){(d)}

\end{picture}
\end{center}

\begin{center}
\setlength{\unitlength}{5mm}
\begin{picture}(20,7)(0,0)
\def\H{\line(1,0){1}}
\def\V{\line(0,1){1}}
\def\I{\line(1,1){0.5}}
{\thicklines
\put(-2,4){\line(1,0){22.5}} }
\put(21,3.9){\footnotesize level }
\put(21,3.3){\footnotesize $h+1$}

\multiput(-1,0)(0,1){5}{\H}
\multiput(-1,0)(0,1){4}{\V}
\multiput(0,0)(0,1){4}{\V}
\multiput(3.5,0)(0,1){2}{\H}
\put(3.5,0){\V}
\put(4.5,0){\V}
\bezier{300}(3.5,1)(2.7,2)(2.7,4)
\bezier{300}(4.5,1)(5.3,2)(5.3,4)
\multiput(8,0)(0,1){5}{\H}
\multiput(8,0)(0,1){4}{\V}
\multiput(9,0)(0,1){4}{\V}
\multiput(8,4)(1,0){2}{\I}
\multiput(9,0)(0,1){5}{\H}
\multiput(10,0)(0,1){4}{\V}
\multiput(18,0)(0,1){5}{\H}
\multiput(18,0)(0,1){4}{\V}
\multiput(19,0)(0,1){4}{\V}
\multiput(19,0)(0,1){5}{\H}
\multiput(20,0)(0,1){4}{\V}
\put(-1,-2.5){(e)}
\put(3.5,-2.5){(f)}
\put(8.5,-2.5){(g)}
\put(18.5,-2.5){(i)}
\put(-1,-1){\H}
\put(3.5,-1){\H}
\put(8,-1){\H}
\put(18,-1){\H}
\put(19,-1){\H}
\multiput(-1,-1)(1,0){2}{\V}
\multiput(3.5,-1)(1,0){2}{\V}
\multiput(8,-1)(1,0){2}{\V}
\multiput(18,-1)(1,0){3}{\V}

\bezier{300}(14,1)(13.2,2)(13.2,4)
\bezier{300}(16,1)(16.8,2)(16.8,4)
\multiput(14,0)(0,1){2}{\H}
\multiput(14,0)(0,1){1}{\V}
\multiput(15,0)(0,1){1}{\V}
\put(16,0){\V}
\multiput(15,0)(0,1){2}{\H}
\multiput(14,-1)(0,1){1}{\H}
\multiput(14,-1)(0,1){1}{\V}
\multiput(15,-1)(0,1){1}{\V}
\put(14.5,-2.5){(h)}

\end{picture}
\vskip1.8cm
\begin{minipage}{13cm}\centerline
{\footnotesize
Figure 6. Clusters as in (\ref{Fig3A})
}
\end{minipage}
\end{center}
\vskip2.5cm

From these considerations and \eqref{fk} we obtain, for $k\ge 1, h \geq 4$,
\begin{align}
f_k(h&+1)-f_k(h) \nonumber\\ 
&=At^{2h}e^u+P_h(t)e^u+Ct^{3h}e^{2u}+Q_h(t)e^{2u} \nonumber\\
&\quad-At^{2h}-P_h(t)-Ct^{3h}-Q_h(t) \nonumber\\
&\quad-Et^{2h+2}e^u-P_h(t)t^2e^u-Gt^{3h+2}e^u
-2t^2Q_h(t)e^u-It^{3h+3}e^{2u}\nonumber\\
&\quad+V_h(t,u),\label{Fig3A}
\end{align}
where $V_h(t,u)$ contains only terms of order $t^{3h+4}$, or smaller, in $t$.  
In the right side of \eqref{Fig3A}, the contributions to 
$f_k(h)$ from the relevant clusters are in the first line, and the contributions to $f_k(h+1)$ are
in the second and third lines. 
The coefficients $A,C,E,$ etc. are the number of elementary 
perturbations per site that belong to cases (a), (c), (e), etc., so that
\beq
A=E=1,\  C=I=2,\  G=2C.
\label{A}
\eeq
The terms containing $P_h(t)$ correspond to the contributions
coming from clusters of types (b) and (f). $P_h(t)$ is a convergent series
\beq \label{Phdef}
P_h(t)=B_1t^{2h+1}+B_2t^{2h+2}+\dots+B_{h+3}t^{3h+3}+\dots
\eeq
where $B_j$, $j=1,,\dots$, is the number, at each site, of 
the perturbations in case (b) such that $\Vert\omega\Vert=2h+j$,
plus the number (times their coefficient $a^T(\cdot)$) of clusters of height $h$ of
order $t^{2h+j}$.  In particular, the leading coefficient $B_1=4$ for $h \geq 2$, and $B_1=0$ for $h=1$.
The terms containing $Q_h(t)$ correspond to the contributions coming from cases
(d) and (h), and analogously
\beq \label{Qhdef}
Q_h(t) = D_1t^{3h+1} + D_2t^{3h+2} + D_3t^{3h+3}+\dots,
\eeq
with leading coefficient $D_1=16$ for $h \geq 2$ and $D_1=0$ for $h=1$.
Here $D_1=16$ comes from perturbations with 2 cubes in the top and bottom layers,
with each of the cubes attached to any of the 4 sides of the main $1 \times 1$ column.
The dependence of $P_h, Q_h, V_h$ on $k$ is suppressed in the notation.
The error term $V_h(t,u)$ excludes those terms of order $t^{3h+4}$ or smaller
which are already accounted for in $P_h(t)$ or $Q_h(t)$.
Regrouping terms in \eqref{Fig3A} we obtain for $h \geq 4$:
\begin{align}
f_k&(h+1)-f_k(h) \nonumber\\ 
&\quad=(t^{2h}+P_h(t))\big(e^u-1-t^2e^u\big) + (2t^{3h} + Q_h(t))\big(e^{2u}-1-2t^2e^u\big) \notag \\
&\quad \qquad -2t^{3h+3}e^{2u} + V_h(t,u). 
\label{PP}
\end{align}

We also need the analog of (\ref{PP}) for $1 \leq h \leq 3$, which is
similar but includes additonal terms. For $j = 3h+2,3h+3$ and $m \geq 2$ let $L_{jm}(h)$ be the number 
(per site, and incorporating the combinatorial factors $a^T(X)$) 
of clusters of external level $h$ incorporating $m$ horizontal plaquettes touching the wall and $2j$ 
vertical plaquettes, which are not counted in the top row of Figure 6.  
From the isoperimetric inequality, $L_{jm}(h)=0$ for $m>j^2/4h^2$.  Define
\[
  R_h(t,u) = \sum_{j=3h+2}^{3h+3}\ \sum_{2 \leq m \leq j^2/4h^2} L_{jm}(h) t^j (e^{mu} - 1).
  \]
Then for $h = 2,3$,
\begin{align} \label{Fig3A23}
f_k&(h+1) - f_k(h) \notag \\
&= (t^{2h} + P_h(t))(e^u-1-t^2e^u)+ (2t^{3h} + Q_h(t))(e^{2u}-1-2t^2e^u)\notag \\
&\qquad -2t^{3h+3}e^{2u} +R_h(t,u) + V_h(t,u).\end{align}
For $h=1$ we have additional terms with $j=3h+1=4$:
\begin{align} \label{Fig3A1}
f_k(2)-f_k(1)&= (t^2 + P_1(t))(e^u-1-t^2e^u)+ (2t^3 + Q_1(t))(e^{2u}-1-2t^2e^u)\notag \\
&\quad + L_{43}(1)t^4(e^{3u}-1-3t^2e^u)+ L_{44}(1)t^4(e^{4u}-1-4t^2e^u) \notag \\
&\quad + L_{42}(1)t^4(e^{2u} - 1 - 2t^2e^u) -2t^6e^{2u} +R_1(t,u) + V_1(t,u).
\end{align}
Note that $L_{43}(1)=6$ is the number (per site) of downward perturbations of height 1
consisting of 3 cubes, adjacent via common faces, and $L_{44}(1)=1$ is the 
number (per site) consisting of a $2\times 2$ 
block of cubes.  $L_{42}(1)=-5/2$ incorporates contributions from (i) single perturbations which consist of a pair of downward cubes with bases touching by a northeast-southwest corner, and (ii) clusters consisting of two single-downward-cube perturbations which may coincide, or may be adjacent with bases touching by either an edge or a northeast-southwest corner.

Finally, for $h=0$ we have
\beq 
f_k(1)-f_k(0)= u-t^2(1+e^{u}-e^{-u})-2t^3(1+e^{2u}-e^{-2u})+ V_0(t,u).
\label{10}\eeq 
The elementary perturbations that contribute to this difference outside of $V_0(t,u)$ are only 
the cylinders of type (a) and (c) in Figure~6 with height equal to 1, 
i.e., having as base one or two plaquettes, and $\Vert\omega\Vert=2$ or $3$,
respectively. 
However, one has now to consider the upward as well as the downward
cylinders. 

In order that a given $n \geq 4$ be the optimum interface height, $u$ should be chosen so that \eqref{PP} is negative for $h<n$ and positive for $h \geq n$.  At the crossover point at which the right side of \eqref{PP} is 0, at least to order $t^{3h+3}$, it is easily seen that $e^u - 1 - t^2e^u$ is approximately $2t^{h+3}$, so that the term $t^{2h}(e^u - 1- t^2 e^u)$ from perturbations of types (a) and (e) in Figure 6 cancels the term $2t^{3h+3}e^{2u}$ from perturbations of type i); all other terms are then of smaller order.  To make the given $n$ be optimal, therefore, ignoring error terms one should have 
\begin{equation}
  e^u - 1 - t^2e^u \begin{cases} < 2t^{h+3} \quad &\text{for all } h<n,\\ \geq 2t^{h+3} \quad &\text{for all } h \geq n, \end{cases}
  \end{equation}
which (after allowing room for error terms) yields the interval of $u$ values given by \eqref{Ass1}.  The essential cancellation of type (a), (e) and i) terms here contrasts with the external-field case in \cite{DM}, where the balancing is between type (e) and the external field, with all other perturbation types contributing only to smaller order.

The major difficulty in making this idea rigorous is that to establish an 
optimal $n$ we need control of error terms uniformly in $h$.  Thus
we need bounds on $P_h$, $Q_h$ and $V_h$ which are uniform in $h,k$.
This requires a preliminary lemma.  
Given a cluster $X$, an \emph{incompatibility path} in $X$ is a finite sequence 
$(\alpha_0,\dots,\alpha_m)$ of elementary perturbations in $X$ satisfying
$\alpha_{i-1} \not\sim \alpha_i$ or $\alpha_{i-1}=\alpha_i$
for all $1 \leq i \leq m$; such a sequence may be viewed as
a path in the incompatibility graph $g(X)$.  The \emph{length} of the path is $m$; we say the path
is \emph{minimal} if there is no strictly shorter path from $\alpha_0$ to $\alpha_m$ in $g(X)$.  
Given a cluster $X$ containing a designated $\alpha_0$ which we call the \emph{root},
we say a perturbation $\om$ in $X$ is \emph{beyond} another perturbation $\alpha$ in $X$
(relative to the root $\alpha_0$) if 
$\om \notin \{\alpha_0,\alpha\}$ and every path from $\alpha_0$ to $\om$
in $g(X)$ contains $\alpha$.  
The \emph{outer leaf} in $X$ of a perturbation $\alpha \in X$ is 
\[
  \mL_{out}(\alpha) = \{\om \in X: \om \text{ is beyond } \alpha \},
 \]
the \emph{leaf} is 
\[
  \mL(\alpha) = \mL_{out}(\alpha) \cup \{\alpha\},
  \]
and the \emph{stem} of $\alpha$ in $X$ is
\[
  \mS(\alpha) = X \bs \mL_{out}(\alpha).
  \]
Note $\{\alpha\} =  \mL(\alpha) \cap \mS(\alpha)$, and the leaf and stem are both clusters.
The multiplicity of $\alpha$ in $\mS(\alpha)$ is then $X(\alpha)$, and the multiplicity of $\alpha$ in $\mL(\alpha)$
is 1.  The leaf $\mL(\alpha)$ may be just $\{\alpha\}$, meaning the outer leaf is empty; this is always the case if 
$X(\alpha) \geq 2$.
For $X$ containing a minimal incompatibility path 
$(\alpha_0,\dots,\alpha_m)$,
we can then describe the structure of $g(X)$ as follows.
For $1 \leq i\leq m-1$ we say $\alpha_i$ is \emph{critical}
(for $\alpha_0 \to \alpha_m$) if $\alpha_m$ is 
beyond $\alpha_i$.  Let $i_1 < \dots < i_l$ be the indices of the critical perturbations $\alpha_i$.  
The 0th \emph{bead} consists of those $\om \in X$ which are not beyond $\alpha_{i_1}$, the 
$l$th \emph{bead} consists of $\alpha_{i_l}$ and those $\om \in X$ which are beyond $\alpha_{i_l}$, and for
$1 \leq j \leq l-1$, the $j$th \emph{bead} consists of $\alpha_{i_j}$ and those $\om \in X$ which are beyond 
$\alpha_{i_j}$ but not beyond $\alpha_{i_{j+1}}$.  Note that the intersection of the $(j-1)$st and $j$th 
beads is $\{\alpha_{i_j}\}$.

Let $\mG(\alpha_0,\dots,\alpha_m)$
denote the set of clusters $X$ for which $(\alpha_0,\dots,\alpha_m)$ is a minimal 
incompatibility path in $g(X)$.  

\begin{Lem} \label{pertcost}
(i) Let $m \geq 0$ and let $\alpha_0,\dots,\alpha_m$ be perturbations with the same external height $h \geq 0$.
Then for all $u,t$ as in Lemma \ref{fkLemma}
and $\mu=\vphi_{s,0}$ with $s=s(t)=te^{t^{1/4}}$,
\beq \label{factor}
  \sum_{X \in \mG(\alpha_0,\dots,\alpha_m)} \left| \vphi_u^{\rm T}(X) \right| \leq 3^{3m+1}\mu(\alpha_0)
    \prod_{i=1}^m \vphi(\alpha_i)
\eeq
and
\beq \label{factorexp}
  \sum_{X \in \mG(\alpha_0,\dots,\alpha_m)} X(\alpha_0) \left| \vphi_u^{\rm T}(X) \right| \leq 3^{3m+2}\mu(\alpha_0)
    \prod_{i=1}^m \vphi(\alpha_i).
\eeq

(ii) For $m=1$ the path assumption $X \in \mG(\alpha_0,\alpha_1)$ in \eqref{factor} can be removed:  for any two 
distinct elementary perturbations 
$\alpha_0,\alpha_1$ with the same external height $h \geq 0$,
\beq \label{factor2}
  \sum_{X:\alpha_0,\alpha_1 \in X} \left| \vphi_u^{\rm T}(X) \right| \leq 3^4\mu(\alpha_0) \vphi(\alpha_1)
\eeq
and
\beq \label{factor3}
  \sum_{X:\alpha_0,\alpha_1 \in X} X(\alpha_0) \left| \vphi_u^{\rm T}(X) \right| \leq 3^5\mu(\alpha_0) \vphi(\alpha_1).
\eeq

(iii) For each elementary perturbation $\alpha$,
\beq \label{factor4}
  \sum_{X:X(\alpha)\geq 2} \left| \vphi_u^{\rm T}(X) \right| \leq 3^4\mu(\alpha)\varphi(\alpha)
\eeq
and
\beq \label{factorexp2}
  \sum_{X:X(\alpha)\geq 2} X(\alpha) \left| \vphi_u^{\rm T}(X) \right| \leq 2\cdot 3^5 \mu(\alpha)\varphi(\alpha)
\eeq
\end{Lem}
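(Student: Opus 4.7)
\medskip

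\noindent\textbf{Proof proposal (Lemma \ref{pertcost}).}
The plan is to exploit the bead structure introduced just before the lemma statement, decomposing any $X\in\mG(\alpha_0,\dots,\alpha_m)$ along the critical perturbations of the designated minimal path and then summing bead by bead via the Convergence Theorem. Throughout, the root $\alpha_0$ will absorb a factor $\mu(\alpha_0)$ coming from \eqref{i2A}, while each non-root path perturbation $\alpha_i$ contributes an explicit factor $\vphi(\alpha_i)$ pulled out from $\vphi_u^{\rm T}(X)=a^{\rm T}(X)\prod_\om\vphi(\om)^{X(\om)}$. All remaining perturbations are treated as ``decorations'' attached to the path, whose sums are controlled by \eqref{omegaA}--\eqref{i3A} and give rise to the combinatorial constants $3^{3m+1}$ and $3^{3m+2}$.

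For part (i) I would proceed by induction on $m$. The case $m=0$ is immediate from \eqref{i2A} (which is sharper than the claimed $3\mu(\alpha_0)$) and \eqref{i3A} (which yields $3^2\mu(\alpha_0)$). For the inductive step, fix the path $(\alpha_0,\dots,\alpha_m)$ and split each $X\in\mG(\alpha_0,\dots,\alpha_m)$ as $X=\mS(\alpha_{i_1})\cup\mL(\alpha_{i_1})$ where $\alpha_{i_1}$ is the first critical perturbation. The stem $\mS(\alpha_{i_1})$ is a cluster rooted at $\alpha_0$ in which $(\alpha_0,\dots,\alpha_{i_1})$ is a minimal path, while the leaf $\mL(\alpha_{i_1})$ is a cluster rooted at $\alpha_{i_1}$ in which $(\alpha_{i_1},\dots,\alpha_m)$ is a minimal path of strictly smaller length. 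Using a Penrose-style tree identity (or a direct induction on $r=|X|$ as in the standard proof of the Convergence Theorem in \cite{M}), one bounds $|a^{\rm T}(X)|$ by a sum over spanning trees of $g(X)$; the trees factor through the stem/leaf splitting up to a bounded combinatorial overhead. Applying \eqref{i2A}--\eqref{i3A} separately to stem and leaf, using $\vphi\le\mu$, and absorbing the splitting overhead into a multiplicative constant (which will turn out to be at most $3^3$ per bead thanks to \eqref{omegaA}) yields \eqref{factor} and \eqref{factorexp}.

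For part (ii), if $\alpha_0\not\sim\alpha_1$ then $(\alpha_0,\alpha_1)$ is automatically a minimal path of length $1$ in any cluster $X$ containing both, so the two bounds in (ii) reduce directly to the $m=1$ case of (i). If instead $\alpha_0\sim\alpha_1$, any cluster $X$ containing both must contain some intermediate perturbation $\beta\not\sim\alpha_0,\alpha_1$, so we insert $\sum_\beta$ over such intermediates, apply (i) with $m=2$ and path $(\alpha_0,\beta,\alpha_1)$ (swapping roots if needed so $\alpha_0$ remains the root), and sum using the bound
\[
 \sum_{\beta:\,\beta\not\sim\alpha_1}\mu(\beta)\;\le\; s^{1/2}|\bar\gamma^{ext}_{\alpha_1}|\;\le\;\tfrac1{20}
\]
from \eqref{omegaA} together with \eqref{sk}; the resulting geometric series is easily absorbed into the constant $3^4$. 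Part (iii) is similar: when $X(\alpha)\ge 2$ the two distinct copies of $\alpha$ in the sequence form an edge of $g(X)$ (since the definition in \eqref{aT} forces $\alpha_i=\alpha_j$ to count as an edge), and applying the same machinery with $\alpha_0=\alpha_1=\alpha$ and an extra factor $\vphi(\alpha)$ for the second copy reproduces \eqref{factor4}; the weighted version \eqref{factorexp2} picks up an additional $X(\alpha)$ factor, bounded by $2\cdot 3^5\mu(\alpha)\vphi(\alpha)$ via \eqref{i3A} and the elementary estimate $e^{\mu(\alpha)}-1\le 2\mu(\alpha)$.

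The hard part will be rigorously factoring the signed combinatoric $a^{\rm T}(X)$ through the bead/critical-perturbation decomposition. Because $a^{\rm T}$ is defined as a signed sum over \emph{all} spanning connected subgraphs of $g(X)$, not trees, it does not split as a product over beads; one must instead pass through a tree-graph identity (of Penrose or Brydges--Federbush type) and verify that the resulting tree-weights organize neatly along the path. Managing this factorization, while keeping the per-step combinatorial overhead bounded by $3^3$ uniformly in $h,n,k$, is where all the technical work lies; once it is done, the bounds \eqref{factor}--\eqref{factorexp2} reduce to straightforward iterated application of \eqref{i2A}--\eqref{i3A}.
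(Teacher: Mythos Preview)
Your overall strategy for part (i)---decompose along the path, factor, and sum piecewise via the Convergence Theorem---matches the paper's, but you have misidentified where the work lies and are missing the key technical device.

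\textbf{The factorization is exact, not approximate.} You flag as ``the hard part'' the factoring of $a^{\rm T}(X)$ through the leaf/stem decomposition, and propose to pass through a Penrose tree bound with ``bounded combinatorial overhead.'' In fact no tree identity is needed. The paper decomposes $X$ into pieces $X_0,\dots,X_m$ by taking leaves and stems at \emph{each} $\alpha_i$ in turn (not just at the first critical one). Because consecutive pieces overlap in a single vertex, every edge of $g(X)$ lies in exactly one $g(X_i)$, and spanning connected subgraphs of $g(X)$ are in bijection with $(m{+}1)$-tuples of spanning connected subgraphs of the $g(X_i)$. Combined with the factorial splitting $X!=\prod_i X_i!$, this gives the \emph{identity} $a^{\rm T}(X)=\prod_i a^{\rm T}(X_i)$ and hence
\[
|\vphi_u^{\rm T}(X)|=\frac{\prod_{i=0}^m|\vphi_u^{\rm T}(X_i)|}{\prod_{i=1}^m\vphi(\alpha_i)}.
\]

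\textbf{The actual difficulty is the summation, and your proposal does not address it.} After factoring, the $X_i$ are not independent: each path vertex $\alpha_i$ appears in two consecutive pieces, and its total multiplicity across pieces is $X(\alpha_i)+1$. Summing each piece freely via \eqref{i2A}--\eqref{i3A} would overcount and would not isolate the factor $\prod_i\vphi(\alpha_i)$. The paper's device is to introduce modified weights $\hatp,\hatm$ with $\hatp(\alpha_i)=\tfrac19$, $\hatm(\alpha_i)=\tfrac13$, $\hatm=3\mu$ elsewhere, and $\hatm=0$ on $\mI_4(\alpha_0,\dots,\alpha_m)$. Minimality of the path guarantees that every $\om\in X$ is incompatible with at most three of the $\alpha_i$, so $\sum_{\om'\not\sim\om}\hatm(\om')\le 1+\tfrac14 s^{1/2}|\olg^{ext}_\om|\le\log 3$ and the Convergence Theorem applies to $\hatp,\hatm$. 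Converting each $|\vphi_u^{\rm T}(X_i)|$ to $|\hatp_u^{\rm T}(X_i)|$ produces exactly $(9\vphi(\alpha_i))^{X(\alpha_i)+1}$; since $9\vphi(\alpha_i)\le 1$ this is at most $81\vphi(\alpha_i)$, and the remaining sums over $X_i$ are now genuinely independent and bounded by $\hatm(\alpha_i)=\tfrac13$ (or $e^{\hatm(\alpha_0)}-1$ for the root piece). This is where the constant $3^{3m+1}$ comes from; it is not a per-bead overhead arising from tree bounds.

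\textbf{Parts (ii) and (iii).} Your case split on whether $\alpha_0\sim\alpha_1$ is unnecessary: for $m=1$ the leaf/stem decomposition $X_0=\mS(\alpha_1),X_1=\mL(\alpha_1)$ and the exact factorization work regardless of whether there is a direct edge between $\alpha_0$ and $\alpha_1$, and the ``at most three incompatibilities'' condition is automatic with only two path vertices. For (iii), you cannot literally take $\alpha_0=\alpha_1=\alpha$ in (ii), which requires distinct perturbations; the paper instead adjoins a formal copy $\alpha^*$ of $\alpha$ to the polymer system, checks that the Convergence Theorem still holds with the extra term, and reduces (iii) to (ii) via the binomial identity relating clusters in the enlarged system to clusters with $X(\alpha)\ge 2$.
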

\begin{proof}
(i) For $m=0$ this is a consequence of the Convergence Theorem, so consider $m \geq 1$.
Let $X$ be a cluster in which $(\alpha_0,\dots,\alpha_m)$ is a minimal incompatibility path.
Inductively we first define $X_m,T_m$ to be the leaf and stem, respectively,
of $\alpha_m$ in $X$, with $\alpha_0$ as root, then define $X_{m-1}, T_{m-1}$ to
be the leaf and stem of $\alpha_{m-1}$ in $T_m$, continuing this way until $X_1,T_1$ 
are the leaf and stem of $\alpha_1$ in $T_2$.  Then we define $X_0=T_1$.

An alternative description is as follows.  If $\alpha_i$ is non-critical 
for $\alpha_0 \to \alpha_m$ and is part of the $j$th bead, then 
the leaf $X_i$ is part of the $j$th bead as well, and the stem $T_i$ consists of beads 0 through $j$, 
with the outer leaves of $\alpha_i,\alpha_{i+1},\dots$ removed from the $j$th bead.  If $\alpha_i$ is
critical, then (i) $X(\alpha_i)=1$, (ii) $\alpha_i$ is the 
intersection of the $(j-1)$st and $j$th beads for some $1 \leq j \leq m$, 
(iii) $X_i$ consists of the $j$th bead with the outer leaves of $\alpha_{i+1},\alpha_{i+2},\dots$ removed, 
and (iv) the stem $T_i$ consists of beads 0 through $j-1$.

Thus each bond in $g(X)$ is in exactly one graph $g(X_i)$, and each spanning graph of $X$
is uniquely obtained as a union of spanning graphs of each cluster $X_i$.  
Further, from the bead description,
each $\alpha_i, i \geq 1,$ appears with multiplicity 1 in $X_i$, and with multiplicity $X(\alpha_i)$ in one 
$X_j$ with $j<i$; in other $X_l$ it appears with multiplicity 0.  It follows that
\[
  a^T(X) = \prod_{i=0}^m a^T(X_i),
  \]
\beq \label{Xksum}
  \sum_{j=0}^m X_j(\alpha_i) = X(\alpha_i) + 1, \quad i=0,\dots,k,
  \eeq
and
\beq \label{split}
  \left| \vphi_u^{\rm T}(X) \right| = \frac{ \prod_{i=0}^m \left| \vphi_u^{\rm T}(X_i) \right| }
    { \prod_{i=1}^m \vphi(\alpha_i) }.
\eeq
We now define modifications of the weights $\vphi$ and $\mu$ as follows. 
Let $\mI_4(\alpha_0,\dots,\alpha_m)$ denote
the set of perturbations which are incompatible with at least 4 of the perturbations $\alpha_i$, let
\[
  \hatp(\om) = \begin{cases} 0, &\text{if } \om \in \mI_4(\alpha_1,\dots,\alpha_m), \\
    \frac{1}{9}, &\text{if } \om \in \{\alpha_1,\dots,\alpha_m\},\\
    \vphi(\om), &\text{otherwise}, \end{cases}
\]
\[
  \hatm(\om) = \begin{cases} 0, &\text{if } \om \in \mI_4(\alpha_1,\dots,\alpha_m), \\
    \frac{1}{3}, &\text{if } \om \in \{\alpha_1,\dots,\alpha_m\},\\
    3\mu(\om), &\text{otherwise}, \end{cases}
\]
and let $\hatp_u^{\rm T}(\cdot)$ be the corresponding weight for clusters, as in \eqref{phiT}.
From the definition of minimal path, for each $X \in \mG(\alpha_0,\dots,\alpha_m)$
and $\om \in X$ there are at most 
3 values $i$ for which $\om \not\sim \alpha_i$, so $X \cap \mI_4(\alpha_0,\dots,\alpha_m) = \phi$.  
Hence for every perturbation $\om$, by \eqref{omega},
\begin{align} \label{CTbound}
  \sum_{\om' \not\sim \om} \hatm(\om') &\leq 3 \cdot \frac{1}{3} + 3\sum_{\om' \not\sim \om} \mu(\om')
    \leq 1 + \frac{1}{4}s^{1/2} |\olg^{ext}_\om|.
\end{align}
Since $s\le2(3k+3)^{-4}$, $|\olg^{ext}_\om| \leq (3k+3)^2/4$ and $k \geq 8$, we see that 
the right side of \eqref{CTbound} is bounded by $\log 3$.
It is then easily checked that the condition \eqref{i1} is satisfied in the Convergence Theorem
for $\hatp$ and $\hatm$.  Therefore using \eqref{Xksum}, \eqref{split} 
and the Convergence Theorem we have
\begin{align} \label{hatnohat}
  \sum_{X \in \mG(\alpha_0,\dots,\alpha_m)} &X(\alpha_0) \left| \vphi_u^{\rm T}(X) \right| \notag \\
  &= \sum_{X \in \mG(\alpha_0,\dots,\alpha_m)} 
    X(\alpha_0) \frac{ \prod_{i=0}^m \left| \vphi_u^{\rm T}(X_i) \right| }
    { \prod_{i=1}^m \vphi(\alpha_i) } \notag \\
  &= \sum_{X \in \mG(\alpha_0,\dots,\alpha_m)}  X(\alpha_0) \left| \hatp_u^{\rm T}(X_0) \right| 
    \frac{ \prod_{i=1}^m (9\vphi(\alpha_i))^{X(\alpha_i)+1} \left| \hatp_u^{\rm T}(X_i) \right| }
    { \prod_{i=1}^m \vphi(\alpha_i) } \notag \\
  &\leq  \left( \prod_{i=1}^m 81\vphi(\alpha_i) \right) \sum_{X \in \mG(\alpha_0,\dots,\alpha_m)}
     X(\alpha_0) \prod_{i=0}^m  \left| \hatp_u^{\rm T}(X_i) \right| \notag \\
   &\leq \left( \prod_{i=1}^m 81\vphi(\alpha_i) \right) 
     \left( \sum_{X:X \ni \alpha_0} X(\alpha_0) \left| \hatp_u^{\rm T}(X) \right| \right)
     \prod_{i=1}^m \left( \sum_{X:X \ni \alpha_i} \left| \hatp_u^{\rm T}(X) \right| \right) \notag \\
   &\leq 81^m \left( e^{\hatm(\alpha_0)}-1 \right) \prod_{i=1}^m \vphi(\alpha_i)\hatm(\alpha_i) \notag \\
   &\leq 3^{3m+2} \mu(\alpha_0) \prod_{i=1}^m \vphi(\alpha_i).
\end{align}
This proves \eqref{factorexp}.  The proof of \eqref{factor} is essentially the same, 
with the $X(\alpha_0)$ factors removed.

(ii)  In part (i), the path assumption was used only to create the bead description, and to ensure 
that for relevant clusters $X$ and $\om \in X$, there are at most 3 values $i$ for which 
$\om \not\sim \alpha_i$.  Neither of these considerations is needed for $m=1$ so the same
proof applies.

(iii) We wish to use (ii).  We define a new polymer system for which the set $\mP^*$ of polymers
consists of the set $\mP_{el}$ of elementary perturbations and 
one additional polymer $\alpha^*$.  This $\alpha^*$ is a ``copy of $\alpha$'' in the sense that we define 
the weight $\varphi^*$ on $\mP$ by $\varphi^*=\varphi$ on $\mP_{el}$ 
and $\varphi^*(\alpha^*) = \varphi(\alpha)$, 
and define $\alpha^*$ to be compatible with the same 
elementary perturbations as $\alpha$.  $\mu^*$ is defined analogously, and
the corresponding truncated function $(\varphi^*)_u^{\rm T}$ given by the analog of 
\eqref{phiT}.
There is a natural projection $Q$ from clusters in $\mP^*$ to 
clusters in $\mP_{el}$ defined by replacing each copy of 
$\alpha^*$ with a copy of $\alpha$, and for $Y$ a cluster in $\mP^*$ we have
\[
  (\varphi^*)_u^{\rm T}(Y) = {Y(\alpha)+Y(\alpha^*) \choose Y(\alpha)} \varphi_u^{\rm T}(Q(Y)),
\]
because the sum over graphs in \eqref{aT} is the same in $Y$ as in $Q(Y)$.
Therefore for each cluster $X$ in $\mP_{el}$ with $X(\alpha) \geq 2$,
\[
  \sum_{Y:Q(Y)=X \atop {\alpha,\alpha^* \in Y}} |(\varphi^*)_u^{\rm T}(Y)| = (2^{X(\alpha)}-2) |\varphi_u^{\rm T}(X)|
    \geq  |\varphi_u^{\rm T}(X)|.
  \]
We would like to conclude from (ii) (with $\alpha_0=\alpha^*,\alpha_1=\alpha$) that
\[
  \sum_{X:X(\alpha)\geq 2} \left| \vphi_u^{\rm T}(X) \right| \leq
    \sum_{Y:\alpha,\alpha^* \in Y} |(\varphi^*)_u^{\rm T}(Y)|
    \leq 3^4 \mu^*(\alpha^*)\varphi^*(\alpha).
  \]
Since the proof of (ii) uses the Convergence Theorem, we need to know that \eqref{i2} remains valid 
for $\hat\varphi$ and $\hat\mu$ when one more term
$\hat\mu(\alpha^*) = 3\mu(\alpha)$ is added to the sum there.  But this follows from the fact that 
\eqref{CTbound} remains valid with the extra term.  Thus we can indeed apply (ii), proving \eqref{factor4}.
The proof of \eqref{factorexp2} is similar, using the fact that $Y(\alpha) + Y(\alpha^*) = X(\alpha)$ when 
$Q(Y)=X$.
\end{proof}

Lemma \ref{pertcost}(i) can be used to help control the contribution to $V_h(t,u)$ from 
clusters of large diameter (at least $16h$.)  We say that an elementary perturbation
$\om$ is \emph{simple} if $\om$ consists of a single downward cylinder $\ga$ with $L(\ga)=1$.
We say a cluster $X$ is \emph{simple} if every elementary perturbation in $X$ is simple.
For $A \subset \ZZ^2$ we define the site boundary $\p_sA = \{y \in A: d(y,A^c)=1\}$.

\begin{Lem} \label{chains}
Let $u,t$ be as in Lemma \ref{fkLemma}, let $s=s(t)=te^{t^{1/4}}$ and let $h \geq 0$. 

(i) For all $0 \neq x \in \ZZ^2$, 
\beq \label{linkcost}
  \sum_{X:0,x \in \Supp(X)} \left| \vphi_u^{\rm T}(X) \right| \leq 10(180s)^{|x|+2},
  \eeq
where the sum is over clusters from $\C_k^{el}(\ZZ^2,h)$.  

(ii) For $\zeta$ an elementary perturbation and $x \in \ZZ^2$ with $d(\Supp \zeta,x) = r \geq 1$,
\beq \label{linkcost1}
  \sum_{X:\zeta\in X,\atop{x \in \Supp(X)}} \left| \vphi_u^{\rm T}(X) \right| \leq 3^9r\mu(\zeta)(180s)^{r+1},
  \eeq
and
\beq \label{linkcost1exp}
  \sum_{X:\zeta\in X,\atop{x \in \Supp(X)}} X(\zeta) \left| \vphi_u^{\rm T}(X) \right| \leq 3^{10}r\mu(\zeta)(180s)^{r+1}.
  \eeq
\end{Lem}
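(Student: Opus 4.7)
The plan is to prove \eqref{linkcost} by extracting from each admissible cluster an incompatibility chain that joins $0$ to $x$, bounding the weight of all clusters containing such a chain via Lemma \ref{pertcost}(i), and then summing over chains. Because $g(X)$ is connected and $0,x \in \Supp X$, we may pick $\alpha_0,\alpha_m \in X$ with $0 \in \Supp \alpha_0$, $x \in \Supp \alpha_m$, together with a minimal incompatibility path $(\alpha_0, \alpha_1, \dots, \alpha_m)$ in $g(X)$. Incompatibility of consecutive $\alpha_{i-1},\alpha_i$ forces the adjacency of $\Supp\alpha_{i-1}$ and $\Supp\alpha_i$, so $\bigcup_i \Supp\alpha_i$ is $\ZZ^2$-connected and contains both $0$ and $x$. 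Combined with the isoperimetric bound $\diam \Supp\alpha_i \leq \tfrac{1}{2}|\ti\ga_i^{ext}|$, this yields the geometric constraint
\[
\sum_{i=0}^m |\ti\ga_i^{ext}| \;\geq\; 2(|x|+1).
\]

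With the path fixed, Lemma \ref{pertcost}(i) gives
\[
\sum_{X \in \mG(\alpha_0,\dots,\alpha_m)} |\vphi_u^{\rm T}(X)| \;\leq\; 3^{3m+1}\,\mu(\alpha_0) \prod_{i=1}^m \vphi(\alpha_i),
\]
so what remains is to sum over admissible paths. At each step, $\ti\ga_i^{ext}$ must pass within $\ell^1$-distance $1$ of $\Supp\alpha_{i-1}$, which by the counting in \eqref{omega} restricts its anchor to $O(|\olg^{ext}_{i-1}|+|\ti\ga^{ext}_{i-1}|)$ sites. Given the anchor, summing over cylinder shapes of perimeter $\ell_i$ contributes at most $2 \cdot 3^{\ell_i-1}$ choices, and the internal structure of $\alpha_i$ is handled by \eqref{geom2}, which produces the geometric factor $3 s^{\ell_i/2}\exp(9s^{1/2}|\olg^{ext}_i|)$ with the exponential bounded uniformly since $s \leq 2(3k+3)^{-4}$ and $|\olg^{ext}_i| \leq (3k+3)^2/4$. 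Bundling these estimates gives a per-link cost of the form $(Cs)^{\ell_i/2}$, with a polynomial prefactor in $\ell_{i-1}$ that can be absorbed by slightly enlarging $C$.

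Summing the resulting geometric bound over all perimeter sequences $(\ell_i)_{i=0}^m$ subject to $\ell_i \geq 4$ and $\sum_i \ell_i \geq 2(|x|+1)$, and then over path lengths $m$, yields a bound of the stated form $10(180s)^{|x|+2}$; the base $180$ and the constant $10$ absorb all combinatorial factors, and the exponent $|x|+2$ rather than $|x|+1$ comes from the minimal starting perimeter $\ell_0 \geq 4$ for the perturbation containing $0$. The main obstacle will be precisely this constant-tracking: ensuring that the $3^{3m+1}$ factor from Lemma \ref{pertcost}(i) and the $3^{\ell_i-1}$ enumeration of perimeters are beaten by the geometric decay in $s$, so that the sums over $m$ and over long perimeters both converge uniformly in $|x|$.

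For part (ii), the argument is the same except that $\alpha_0 = \zeta$ is held fixed, which replaces the outermost sum over $\mu(\alpha_0)$ by the prefactor $\mu(\zeta)$. The chain must now cover distance $r = d(\Supp\zeta, x)$ rather than $|x|$, and the additional linear factor $r$ in \eqref{linkcost1}--\eqref{linkcost1exp} accounts for summing over which perturbation $\alpha_m$ on the chain is the one whose support reaches $x$ (equivalently, for the placement of the final step of the chain within a neighborhood of $x$). The weighted version \eqref{linkcost1exp} follows from the same computation but using \eqref{factorexp} of Lemma \ref{pertcost} in place of \eqref{factor}, at the cost of one extra factor of $3$ arising from the $X(\zeta)$ multiplicity.
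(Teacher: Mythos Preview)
Your overall strategy matches the paper's: extract a minimal incompatibility path, apply Lemma~\ref{pertcost}(i), then sum over chains. The paper executes the chain sum by proving, via induction on $m$, the estimate
\[
  \sum_{\substack{(\alpha_0,\dots,\alpha_m)\\ 0 \in \olg_{\alpha_0}^{ext},\, x \in \olg_{\alpha_m}^{ext}\\ \text{all $\alpha_i$ simple}}} \prod_{i=0}^m \mu(\alpha_i) \;\leq\; (180s)^{|x|+m+2},
\]
where the passage to simple $\alpha_i$ comes from \eqref{geom2}. The induction step introduces an intermediate point $y$ on the boundary between $\alpha_{j-1}$ and $\alpha_j$ and uses the identity $|x-y| = |x| - |y| + 2\,\dist(y,R(x))$ to separate the sum cleanly. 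Part (ii) is then obtained by rerunning the chain bound with $\alpha_0=\zeta$ fixed and summing over the first anchor $y$ with $d(y,\Supp\zeta)\leq 1$, which produces the linear factor $r$; your explanation of the $r$ and of the extra factor $3$ in \eqref{linkcost1exp} is correct.

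Your alternative of summing directly over perimeter sequences has a gap as written. The constraint you state, $\sum_i \ell_i \geq 2(|x|+1)$, is not strong enough to beat the factor $3^{3m+1}$ from Lemma~\ref{pertcost}(i): for $m$ up to order $|x|$ it carries no $m$-dependence, so the sum over $m$ would pick up a factor growing like $3^{3|x|/2}$ that cannot be absorbed into $(180s)^{|x|+2}$. The derivation is also imprecise: with your bound $\diam\Supp\alpha_i \leq \tfrac12\ell_i$ and allowing unit gaps between adjacent supports, one only gets $\sum_i \ell_i \geq 2|x|-2m$, which is useless. The correct sharpening is $\diam\Supp\alpha_i \leq \tfrac12\ell_i - 2$ (from $\ell_i \geq 2w+2h$ for the bounding box), which together with the $m$ gaps gives $\sum_i \ell_i \geq 2|x|+2m+4$. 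With this constraint your scheme does work and is essentially equivalent to the paper's exponent $|x|+m+2$; but the $+2m$ is the crux, and your sketch omits it.
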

\begin{proof}
We consider (i) first.
We have $0,x \in \Supp(X)$ if and only if there exists a minimal incompatibility path 
$(\alpha_0,\dots,\alpha_m)$ in $g(X)$ with $0 \in \olg_{\alpha_0}^{ext}, x \in \olg_{\alpha_m}^{ext}$.
Given $X$, let $U_m(X)$ denote the set of all $x$ for which such a minimal path exists,
with length $m \geq 0$.  Then
\beq \label{msum}
  \sum_{X:0,x \in \Supp(X)} \left| \vphi_u^{\rm T}(X) \right| = 
    \sum_{m=0}^\infty \sum_{X:x \in U_m(X)} \left| \vphi_u^{\rm T}(X) \right|,
  \eeq
and from Lemma \ref{pertcost}(i) and a slight modification of \eqref{geom2} we have for $m \geq 0$,
\begin{align} \label{simple}
  \sum_{X:x \in U_m(X)} \left| \vphi_u^{\rm T}(X) \right| &\leq 
    \sum_{{(\alpha_0,\dots,\alpha_m)} \atop {0 \in \olg_{\alpha_0}^{ext},x \in \olg_{\alpha_m}^{ext}}}
    \sum_{X \in \mG(\alpha_0,\dots,\alpha_m)} \left| \vphi_u^{\rm T}(X) \right| \notag \\
  &\leq \sum_{{(\alpha_0,\dots,\alpha_m)} \atop {0 \in \olg_{\alpha_0}^{ext},x \in \olg_{\alpha_m}^{ext}}}
    3^{3m+1} \mu(\alpha_0) \prod_{i=1}^m \vphi(\alpha_i) \notag \\
  &\leq 3^{3m+1} \sum_{{{(\alpha_0,\dots,\alpha_m)} \atop {0 \in \olg_{\alpha_0}^{ext},x \in \olg_{\alpha_m}^{ext}}} 
    \atop {\text{all $\alpha_i$ simple}}} \prod_{i=0}^m 
    \left( \sum_{ \om: \ti\ga_ \om^{ext} = \ti\ga_{\alpha_i}^{ext} }\mu( \om) \right) \notag \\
  &\leq 3^{4m+2} \sum_{{{(\alpha_0,\dots,\alpha_m)} \atop {0 \in \olg_{\alpha_0}^{ext},x \in \olg_{\alpha_m}^{ext}}} 
    \atop {\text{all $\alpha_i$ simple}}} \prod_{i=0}^m \mu(\alpha_i).
\end{align}
In these sums, $(\alpha_0,\dots,\alpha_m)$ represents a sequence of perturbations which
form a minimal path in some cluster $X$; when such an $X$ exists, one such $X$ 
consists of the perturbations $\alpha_0,\dots,\alpha_m$ with multiplicity 1 each.
In the third line of \eqref{simple} we identify a simple $\alpha_i$ and its 
unique cylinder, in a mild abuse of notation.
We will show by induction on $m$ that
\beq \label{hypoth}
  \sum_{{{(\alpha_0,\dots,\alpha_m)} \atop {0 \in \olg_{\alpha_0}^{ext},x \in \olg_{\alpha_m}^{ext}}} 
    \atop {\text{all $\alpha_i$ simple}}} \prod_{i=0}^m \mu(\alpha_i) \leq (180s)^{|x|+m+2}.
\eeq
For $m=0$ this is a simple Peierls-type bound: a cylinder $\alpha$ with $0,x \in \ola$ 
must have 
\[
  |\tilde{\alpha}| \geq (2|x|+4) \vee 4|\ola|^{1/2},
  \]
and there are at most $3^l$ base perimeters $\ti\ga$ of length $l$ through a given site, so
using \eqref{Peierlssum},
\begin{align} \label{PeierlsA}
  \sum_{{\alpha \text{ simple}} \atop {0,x \in \ola}} \mu(\alpha) &\leq
    \sum_{l \geq 2|x|+4} \left( \frac{l}{4} \right)^2 3^l s^{l/2}
    \leq \frac{5}{64} (2|x|+4)^2 (3s^{1/2})^{2|x|+4}
    \leq (18s)^{|x|+2}.
\end{align}
Now suppose \eqref{hypoth} holds for $m=0,\dots,j-1$ for some $j \geq 1$.  
For $x \in \ZZ^2$ let $R(x)$ be the (possibly degenerate) rectangle with opposite corners 0 and $x$; then for all $y$,
\[
  |x-y| = |x| - |y| + 2\dist(y,R(x)),
  \]
where dist denotes $\ell^1$ distance.  Given 
$(\alpha_0,\dots,\alpha_j)$ with $0 \in \olg_{\alpha_0}^{ext}, x \in \olg_{\alpha_j}^{ext}$, 
there must exist $y,y' \in \ZZ^2$,
either equal or adjacent, with $y \in \olg_{\alpha_{j-1}}^{ext}, y' \in \olg_{\alpha_j}^{ext}$.  
Therefore using \eqref{PeierlsA},
\begin{align} \label{increase}
  \sum_{{{(\alpha_0,\dots,\alpha_j)} \atop {0 \in \olg_{\alpha_0}^{ext},x \in \olg_{\alpha_j}^{ext}}} 
    \atop {\text{all $\alpha_i$ simple}}} \prod_{i=0}^j \mu(\alpha_i)
    &\leq \sum_{y \in \ZZ^2}\ \sum_{{y' \in \ZZ^2} \atop {|y' - y| \leq 1}}
     \sum_{{ {(\alpha_0,\dots,\alpha_{j-1})} \atop {0 \in \olg_{\alpha_0}^{ext},x \in \olg_{\alpha_{j-1}}^{ext}} } 
     \atop {\text{all $\alpha_i$ simple}}}
     \sum_{{\alpha_j \text{ simple}} \atop {y',x \in \olg_{\alpha_j}^{ext}}}
     \prod_{i=0}^j \mu(\alpha_i) \notag \\
   &\leq  \sum_{y \in \ZZ^2}\ \sum_{{y' \in \ZZ^2} \atop {|y' - y| \leq 1}}
     (180s)^{|y|+j+1} (18s)^{|x-y'|+2} \notag \\
   &\leq 18s(180s)^{j+1} \sum_{y \in \ZZ^2} 5(180s)^{|y|} (18s)^{|x-y|} \notag \\
   &\leq 5(18s)^{|x|+1}(180s)^{j+1} \sum_{y \in \ZZ^2} 
     10^{|y|} (18s)^{2\dist(y,R(x))}.
\end{align}
Now 
\begin{align} \label{increase2}
  \sum_{y \in \ZZ^2} 10^{|y|} (18s)^{2\dist(y,R(x))}
    &\leq \sum_{j=1}^\infty j10^{|x|-j+1} + \sum_{y \in \ZZ^2 \bs R(x)} 
    10^{|y|} (18s)^{2\dist(y,R(x))} \notag \\
  &\leq \left( \frac{10}{9} \right)^2 10^{|x|} + 10^{|x|}\cdot 8000s^2 \notag \\
  &\leq 2 \cdot 10^{|x|}.
\end{align}
This and \eqref{increase} show that \eqref{hypoth} holds for $j$, and the induction is complete.  
Since $s \leq 2 \cdot 3^{-12}$, with \eqref{msum} and \eqref{simple} this shows that
\beq\label{finalsum}
  \sum_{X:0,x \in \Supp(X)} \left| \vphi_u^{\rm T}(X) \right| 
  \leq \sum_{m=0}^\infty 3^{4m+2}(180s)^{|x|+m+2} \leq 10(180s)^{|x|+2}.
\eeq
This completes the proof of (i).

For (ii) we may assume $x=0$.  If $0 \in \Supp \zeta$, \eqref{linkcost1} follows from 
the Convergence Theorem, so we assume $r \geq 1$.  Given $X$ let $\ti U_m(X,\zeta)$ 
denote the set of all $y \in \ZZ^2$ with $d(y,\Supp \zeta) \leq 1$ for 
which there exists 
a minimal incompatibility path $(\alpha_0,\dots,\alpha_m)$ in $g(X)$, with $\alpha_0=\zeta$,
$y \in \Supp \alpha_1$ and $0 \in \Supp \alpha_m$.  Then using Lemma \ref{pertcost}(i) as in \eqref{simple}, 
and using \eqref{hypoth},
\begin{align}\label{onefixed}
  \sum_{X:\zeta\in X,\atop{0 \in \Supp(X)}} \left| \vphi_u^{\rm T}(X) \right| &\leq
    \sum_{y: d(y,\Supp \zeta) \leq 1} \sum_{m=1}^\infty \sum_{X:y \in \ti U_m(X,\zeta)}
    \left| \vphi_u^{\rm T}(X) \right| \notag\\
  &\leq \sum_{y: d(y,\Supp \zeta) \leq 1} \sum_{m=1}^\infty \sum_{(\alpha_0,\dots,\alpha_m) \atop
    {{\alpha_0=\zeta,0 \in \Supp \alpha_m,} \atop{y \in \Supp \alpha_1}}} 
    \sum_{X \in \mG(\alpha_0,\dots,\alpha_m)} \left| \vphi_u^{\rm T}(X) \right| \notag\\
  &\leq \mu(\zeta) \sum_{y: d(y,\Supp \zeta) \leq 1} \sum_{m=1}^\infty
    3^{4m+2} \sum_{{{(\alpha_1,\dots,\alpha_m)} \atop {y \in \olg_{\alpha_1}^{ext},0 \in \olg_{\alpha_m}^{ext}}} 
    \atop {\text{all $\alpha_i$ simple}}} \prod_{i=1}^m \mu(\alpha_i) \notag\\
  &\leq \mu(\zeta) \sum_{y: |y| \geq r-1} \sum_{m=1}^\infty
    3^{4m+2} (180s)^{|y|+m+1} \notag \\
  &\leq 3^7\mu(\zeta) \sum_{y: |y| \geq r-1} (180s)^{|y|+2} \notag\\
  &\leq 3^9r\mu(\zeta)(180s)^{r+1}.
  \end{align}
This proves \eqref{linkcost1}.  The proof of \eqref{linkcost1exp} is similar, using \eqref{factorexp} in place of
\eqref{factor} in Lemma \ref{pertcost}(i).
\end{proof}

We say a cylinder $\ga$ in an elementary perturbation is \emph{touching} if $I(\ga)=0$.
A touching cylinder is \emph{small} if $|\tga| \leq 6$ and \emph{big} if $|\tga| \geq 8$.
We write $\om: h \to 0$, and say $\om$ is \emph{touching},
to designate that $\om$ is an elementary perturbation from
height $h$ containing a touching cylinder, 
and we write $X: h \to 0$ to designate that the cluster $X$ contains such an elementary perturbation.  
We say that a touching elementary perturbation $\om: h \to 0$ is \emph{multi-touching} if $\om$ includes two or more
touching cylinders (possibly with nested interiors); otherwise $\om$ is \emph{single-touching}.
We write $\C_k^{si}(\Lambda,h)$ for the set of all single-touching elementary perturbations 
$\om: h \to 0$.
For cylinders $\ga_1,\ga_2$ we write $\ga_1 \prec \ga_2$ to mean that $\olg_1$ is a 
proper subset of $\olg_2$, $E(\ga_1) = I(\ga_2)$ and $\ga_1,\ga_2$ are not separated by any cylinder.  

For each $x \in \mathbb{Z}^2$ and each $\om:h \to 0$ with 
$\bar{\om} \ni x$ and $x$ at height 0 in $\om$, there exists a unique maximal compatible family 
$\mT_x(\om) = \{\ga_1\prec\dots\prec\ga_r\}$  with 
\beq \label{Mconds}
  x \in \olg_1,\ I(\ga_1)=0 \quad \text{and} \quad \ga_r=\ga_{\om}^{\rm ext}.
  \eeq
This family also makes up an elementary perturbation; we call this special nested
type of perturbation a \emph{tornado} above $x$.  
We write $\ga_1^{\om}$ for the innermost cylinder in $\mT_x(\om)$, and denote by $T_h^x$ the set of all
tornadoes (for a given $h$) above $x$.

We say a tornado $\om:h \to 0, \om = \{\ga_1\prec\dots\prec\ga_r\}$, is 
\emph{semi-monotone} if $I(\ga_r) > \dots > I(\ga_1)$.   
The tornado is \emph{fully monotone} if also $h > I(\ga_r)$.
Given a tornado $\alpha:h \to 0, \alpha = \{\ga_1\prec\dots\prec\ga_r = \ga_\alpha^{ext}\}$, we can construct a 
semi-monotone tornado $\mM(\alpha):h \to 0$ from $\alpha$ 
as follows.  Let 
$i_1 < \dots < i_m = r$ be the indices $i$ for which either $i=r$ or $I(\ga_i) < \min_{j>i} I(\ga_j)$.
For $l \leq m-1$ define the cylinder $\zeta_l = (\tilde{\ga}_{i_l},I(\ga_{i_{l+1}}),I(\ga_{i_l}))$, 
and let $\zeta_m = \ga_{i_m}$. 
Note that $\zeta_l$ is obtained by truncating the top of the cylinder $\ga_{i_l}$ from its height
$E(\ga_{i_l})$ to the 
possibly lower height $I(\ga_{i_{l+1}})$, that is, we retain only the portion of $\ga_{i_l}$ which 
projects below all cylinders which are larger, in the ordering of $\alpha$ by $\prec$.
Finally let $\mM(\alpha) = \{\zeta_1 \prec \cdots \prec \zeta_m\}$ denote the resulting 
monotone tornado, 
and for an elementary perturbation $\om:h \to 0$ and a site $x$ at height 0 in $\om$ let 
$\mM_x(\om) = \mM(\mT_x(\om))$.  We denote by $M_h^x$ and $F_h^x$ the sets of all semi-monotone
and fully monotone tornadoes, respectively, above $x$, for a given $h$.

Let $\Omega_{(a)}$ and $\Omega_{(c)}$ denote the sets of elementary perturbations of 
types (a), (c), respectively, in Figure 6, at arbitrary location.  (The fixed
height $h$ is suppressed in the notation but should be clear from the context.)  

\begin{Prop} \label{unifh}
There exists $C$ as follows.  Let $k \geq 8$, $t<t_1(k)$ and $s=s(t)=te^{t^{1/4}}$.  Then for all $h \geq 1$,
\beq \label{PhQh}
  |P_h(t)| \leq 3^6s^{2h+1}, \quad |Q_h(t)| \leq 3^{16}s^{3h+1},    
  \eeq
and for all $h \geq 0$ and $u<t^{1/2}$,
\beq \label{Vh}
  |V_h(t,u)| \leq Cs^{3h+4}.
  \eeq
\end{Prop}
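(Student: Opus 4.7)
The plan is to derive the three bounds of Proposition~\ref{unifh} from the cluster-expansion machinery of Lemmas~\ref{fkLemma}, \ref{pertcost}, and \ref{chains}, using a single structural observation: a cluster whose perturbations have no touching cylinder contributes the same amount to $f_k(h)$ and $f_k(h+1)$, hence cancels in the difference. As a consequence, every contribution to $P_h$, $Q_h$, and $V_h$ comes from touching clusters, and each touching cylinder of height $h'$ carries a weight $\mu$ on the scale $s^{2h'}$ per touching plaquette, which provides the $h$-dependent decay.

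For $P_h(t)$, I would split the contributions into single-perturbation and multi-perturbation parts. A type-(b) single perturbation $\om$ consists of a single touching cylinder of height $h$ with base $|\ti\ga|=4$ and additional structure forcing $\Vert\om\Vert\geq 2h+1$; a Peierls estimate modelled on \eqref{geom}--\eqref{8k}, restricted to this geometry, bounds the sum by a constant multiple of $s^{2h+1}$. Multi-perturbation cluster contributions are handled via Lemma~\ref{pertcost}(i) with the touching component taken as $\alpha_0$, which factors the cluster sum into $\mu(\alpha_0)$ times a convergent perturbation-over-perturbation sum, producing a correction of order $s^{2h+3}$. Tracking constants gives $|P_h(t)|\leq 3^6 s^{2h+1}$. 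The bound $|Q_h(t)|\leq 3^{16}s^{3h+1}$ follows by the same scheme applied to two-plaquette touching bases (type (d)), where $\Vert\om\Vert\geq 3h+1$ and the larger combinatorial constant absorbs the greater number of arrangements of two touching plaquettes.

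For $V_h(t,u)$, I would partition the residual touching-cluster contributions to $f_k(h+1)-f_k(h)$ (after subtracting the main terms, $P_h$, and $Q_h$) into four classes, each bounded individually by $O(s^{3h+4})$: (i) single-touching perturbations of external height $h$ or $h+1$ with $\Vert\om\Vert\geq 3h+4$, via direct Peierls sums; (ii) multi-touching perturbations, where each additional touching cylinder forces $\geq 2h$ extra in $\Vert\om\Vert$, yielding contributions of order $s^{4h}$ that dominate the target for $h\geq 4$, with the handful of small-$h$ cases handled by direct enumeration; (iii) clusters with one touching perturbation and additional non-touching perturbations contributing beyond what is already booked in $P_h$ and $Q_h$; (iv) clusters with two or more distinct touching perturbations, for which Lemma~\ref{chains}(ii) applied with $\zeta$ one touching perturbation and $x$ a site in the support of a second touching perturbation at distance $r$ gives $3^9 r\mu(\zeta)(180s)^{r+1}$, and the requirement of a second touching perturbation containing $x$ effectively supplies another $s^{2h}$ factor; the geometric sum over $r\geq 1$ combined with the two $s^{2h}$ factors yields $O(s^{4h+2})\leq O(s^{3h+4})$ for $h\geq 2$.

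The principal obstacle will be uniformity in $h$, specifically ensuring that the horizontal degrees of freedom in cluster placement do not introduce an $h$-dependent prefactor that spoils the $s^{3h+4}$ target for $V_h$. The decisive tool is the exponential-in-distance decay in Lemma~\ref{chains}, whose rate $180s$ is independent of $h$: this controls the horizontal spread of chains of simple perturbations linking distant touching components and keeps their combined contribution confined to a convergent geometric series. Together with the isoperimetric inequality $|\ti\ga|\geq 4|\olg|^{1/2}$ used to convert any extra base area into extra perimeter cost, this yields the required separation between the $s^{2h+1}$ and $s^{3h+1}$ scales of $P_h$ and $Q_h$ and the $s^{3h+4}$ error threshold demanded of $V_h$.
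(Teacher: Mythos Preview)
Your high-level strategy---isolate touching clusters, classify by touching structure, bound each class via Lemmas~\ref{fkLemma}--\ref{chains}---matches the paper's. But two genuine gaps would prevent the argument from closing.

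The first and most serious is uniformity in $h$ for the sums over touching perturbations. You invoke ``a Peierls estimate modelled on \eqref{geom}--\eqref{8k}'' and ``direct Peierls sums'', but those estimates bound $\sum_{\om:\Supp\om\ni 0}\mu(\om)$ with no touching constraint and yield no $h$-dependent decay. A touching perturbation $\om:h\to 0$ need not be a single cylinder; it can be a nested chain of many cylinders, and the number of such chains grows with $h$. The paper's mechanism for controlling this is the \emph{tornado} reduction: from each touching $\om$ one extracts a semi-monotone tornado $\mM_x(\om)$, and then proves by induction on $h$ (the quantities $S_1,S_2,S_3$ in \eqref{hyp1}--\eqref{hyp3} and the recursions \eqref{hyp1a}--\eqref{hyp3a}) that $\sum_{\om\in F_h^0}\mu(\om)$ under the relevant base-perimeter constraints is bounded by a constant times the target power of $s$. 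This induction is the engine of uniformity and your sketch has no analog of it; ``tracking constants'' through a naive Peierls sum will not produce $h$-independent constants.

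Second, for your class (iv) you apply Lemma~\ref{chains}(ii) with $\zeta$ a touching perturbation and $x$ in the support of a second one, then assert that ``the requirement of a second touching perturbation containing $x$ effectively supplies another $s^{2h}$ factor''. This does not follow: the bound in Lemma~\ref{chains}(ii) already sums over \emph{all} clusters containing $\zeta$ and reaching $x$, so restricting to those with a touching perturbation at $x$ only shrinks the left side---it does not introduce a multiplicative $s^{2h}$ on the right. The correct tool is Lemma~\ref{pertcost}(ii,iii), which bounds $\sum_{X:\zeta,\xi\in X}|\vphi_u^{\rm T}(X)|$ by $3^4\mu(\zeta)\vphi(\xi)$ with both perturbations fixed; summing over touching $\zeta,\xi$ and the $O(h^2)$ relative positions gives $h^2 s^{4h}$ (the paper's Type~4, equation \eqref{type4}). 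The paper also needs separate treatment of big touching cylinders (Type~1), a cloud decomposition for multi-touching perturbations (Type~2, \eqref{type2}), clusters of diameter $\geq 16h$ (Type~3 via Lemma~\ref{chains}(i)), the height-$(h{+}1)$ extended clusters (Type~5), and a direct tail estimate \eqref{tail} for $h\leq 3$; your four classes do not cleanly cover all of these.
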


\begin{proof}
As in \eqref{geom} we have for all $h \geq 1$,
\begin{align} \label{Phfirst}
  |P_h(t)| &\leq \sum_{ {\om \in M_h^0} \atop {|\olg_1^{\om}| = 1,\om \notin \Omega_{(a)}} }\ 
    \sum_{ {\alpha: h \to 0} \atop {\mM(\alpha) = \om} }\ 
    \sum_{ {X:h \to 0} \atop {\alpha \in X} } \left| \vphi_u^{\rm T}(X) \right| \notag \\
  &\leq \sum_{ {\om \in M_h^0} \atop {|\olg_1^{\om}| = 1,\om \notin \Omega_{(a)}} }\ 
    \sum_{ {\alpha: h \to 0} \atop {\mM(\alpha) = \om} } \mu(\alpha) \notag \\
  &\leq \sum_{ {\om \in M_h^0} \atop {|\olg_1^{\om}| = 1,\om \notin \Omega_{(a)}} } 
    \mu(\om) \exp(9s^{1/2}|\olg_{\om}^{ext}|) \notag \\
  &\leq 25 \sum_{ {\om \in M_h^0} \atop {|\olg_1^{\om}| = 1,\om \notin \Omega_{(a)}} } \mu(\om),
  \end{align}
where the third inequality follows from the Convergence Theorem.
Similarly for all $h \geq 1$ we have
\beq \label{Qhfirst}
  |Q_h(t)| \leq 25 \sum_{ {\om \in M_h^0} \atop {|\tga_1^{\om}| = 6,\om \notin \Omega_{(c)} } } \mu(\om).
  \eeq
Before bounding the right sides of \eqref{Phfirst} and \eqref{Qhfirst}, we will obtain 
bounds for (parts of) $|V_h(t,u)|$ also in terms of sums of weights $\mu(\cdot)$.
We will then bound these sums of weights in a somewhat unified way.  
  
For $V_h$ with $h \geq 1$,
we split the corresponding sum over clusters into several parts, according to the 
external height ($h$ or $h+1$) and according to five types.
For height $h$:
\begin{itemize}
\item[(i)] Type 1 consists of those clusters $X$ in 
which some perturbation contains a big touching cylinder.
\item[(ii)] Type 2 consists of those clusters, not of type 1, in which some perturbation is
multi-touching.
\item[(iii)] Type 3 consists of those clusters, not of types 1 and 2, of diameter at least $16h$,
in which some perturbation is touching.
\item[(iv)] Type 4 consists of those clusters, not of types 1, 2, 3, 
 in which there are two or more (possibly equal) touching
perturbations. 
\end{itemize}
Note that outside of these four types, the other clusters $X:h \to 0$ 
are types (a)--(d) in Figure 6, and for types 3 and 4, 
all touching perturbations must be of these types (a)--(d).
For height $h+1$, there is a fifth type:  we can take a cluster $X:h \to 0$ of Types 1--4, ``lift'' it everywhere by one
height unit to create a cluster $X^+:h+1 \to 1$, and then extend it downward to height 0 by
either (A) lowering the interior height of a single small cylinder in $X^+$ from 1 to 0, or 
(B) adding a new compatible small cylinder $\ga$ to $X^+$ with $E(\ga) = 1, I(\ga) = 0$.
Thus for height $h+1$:
\begin{itemize}
\item[(v)] Type 5 consists of clusters created by extension, as in (A) or (B) above,
of a cluster $X:h \to 0$ of types 1--4.
\end{itemize}
For $h \geq 1$ and $i=1,2,3,4$ we write $W_{h,i}(t,u)$ for the contribution to the sum $V_h(t,u)$
from clusters $X:h \to 0$ of type $i$, and $W_{h+1,5}(t,u)$ for the contribution to $V_h(t,u)$
from clusters $X:h+1 \to 0$ of type 5.  Then
\beq \label{sum7}
  V_h(t,u) = \sum_{i=1}^4 \big( W_{h,i}(t,u) - W_{h,i}(t,0) - W_{h+1,i}(t,u) \big) - W_{h+1,5}(t,u).
  \eeq

We begin with Type 1.  As in \eqref{Phfirst}, using the argument of \eqref{geom} 
to bound the effect of summing over $\alpha$, we have for all $h \geq 1$:
\begin{align} \label{type1}
  |W_{h,1}(t,u)| &\leq \sum_{ {\om \in M_h^0} \atop {|\tga_1^{\om}| \geq 8} }\ 
    \sum_{ {\alpha: h \to 0} \atop {\mM_0(\alpha) = \om} }\ 
    \sum_{ {X:h \to 0} \atop {\alpha \in X} } \frac{1}{|\olg_1^{\om}|} \left| \vphi_u^{\rm T}(X) \right| \notag \\
  &\leq 25 \sum_{ {\om \in M_h^0} \atop {|\tga_1^{\om}| \geq 8} }\ 
    \frac{1}{|\olg_1^{\om}|} \mu(\om).
  \end{align}

Turning to Type 2, let $\om$ be a multi-touching perturbation with no big touching cylinders, 
so all touching cylinders in $\om$ have disjoint interiors,
and suppose 0 and some site $x \neq 0$ are in the interiors of some such disjoint touching cylinders. 
Then $|x| \geq 2$.  In $\om$ there must 
exist an innermost cylinder $\ga$ with both $0,x \in \olg$, and  we have $|\tga| \geq 2(|x|+2) \geq 8$ and 
$I(\ga)=j$ for some $j \geq 1$.
The tornadoes $\mT_0(\om)$ and $\mT_x(\om)$ must both include $\ga$, 
and the cylinders larger than $\ga$ in the ordering of the tornado are the same in both tornadoes.
Thus there is a unique triple $\mT_{0,x}(\om) = (\eta,\zeta,\xi)$ associated to $\om$ and $x$, in which 
$\zeta,\xi$ are tornadoes:
\[
  \zeta = \{\zeta_1 \prec \dots \prec \zeta_r\} \text{ with } 0 \in \bar{\zeta}_1, 
    |\tilde{\zeta}_1| \leq 6,
    I(\zeta_i) \geq 1 \text{ for all } i \geq 2, \text{ and } \zeta:j \to 0,
  \]
\[
  \xi = \{\xi_1 \prec \dots \prec \xi_q\} \text{with } x \in \bar{\xi}_1,
    |\tilde{\xi}_1| \leq 6,
    I(\xi_i) \geq 1 \text{ for all } i \geq 2, \text{ and } \xi:j \to 0,
  \]
and
\begin{align} \label{cloud}
  \eta = \{\eta_1 \prec \dots \prec \eta_m\}, \text{with } |\tilde{\eta}_1| \geq 2(|x|+2), 
     I(\eta_i) \geq 1 \text{ for all } i \geq 1, \text{ and } \eta:h \to j,
  \end{align}
with $\ga = \eta_1$, $\bar{\zeta}_r \cap \bar{\xi}_q = \phi$, $\zeta_r \prec \eta_1$ and $\xi_q \prec \eta_1$.
Here $\eta:h \to j$ means the innermost cylinder $\eta_1$ has interior height $j$.
We write $T_h^{0,x}$ for the set of all such triples $(\eta,\zeta,\xi)$ (with $\ga,j$ arbitrary.)
We call a perturbation $\eta$ as in \eqref{cloud} a \emph{cloud} of height $j$,
and write $C_k^{cl}(\Lambda,h,j)$ for the set of all such clouds.  
Note that the cylinders in $\eta,\zeta,\xi$ together make up an elementary perturbation,
so our earlier definition gives the weight $\mu(\eta,\zeta,\xi) = \mu(\eta)\mu(\zeta)\mu(\xi)$.
The \emph{total depth} of a tornado or cloud $\eta = \{\eta_1 \prec \dots \prec \eta_m\}$ is
\[
  D(\eta) = \sum_{i=1}^m L(\eta_i).
  \]
Again analogously to \eqref{Phfirst}
we have for all $h \geq 1$:
\begin{align} \label{type2}
  |W_{h,2}(t,u)| &\leq \sum_{x:2 \leq |x| \leq 3k+3}\ \sum_{(\eta,\zeta,\xi) \in T_h^{0,x}}
    \sum_{{\alpha:h \to 0} \atop {\mT_{0,x}(\alpha) = (\eta,\zeta,\xi)}}
    \sum_{ {X:h \to 0} \atop {\alpha \in X} } \left| \vphi_u^{\rm T}(X) \right| \notag \\
  &\leq \sum_{x:2 \leq |x| \leq 3k+3}\ \sum_{(\eta,\zeta,\xi) \in T_h^{0,x}}
    \sum_{{\alpha:h \to 0} \atop {\mT_{0,x}(\alpha) = (\eta,\zeta,\xi)}} \mu(\alpha) \notag \\
  &\leq 25 \sum_{x:2 \leq |x| \leq 3k+3}\ \sum_{(\eta,\zeta,\xi) \in T_h^{0,x}}
    \mu(\eta)\mu(\zeta)\mu(\xi) \notag \\
  &\leq 25 \sum_{x:2 \leq |x| \leq 3k+3}\ \sum_{j \geq 1} 
    \left( \sum_{{\eta \in C_k^{cl}(\Lambda,h,j)} \atop {0 \in \bar\eta_1,|\tilde{\eta}_1| \geq 2(|x|+2)}} \mu(\eta) \right)
    \left( \sum_{{\zeta \in T_j^0} \atop {|\tilde{\zeta}_1| \leq 6}} \mu(\zeta) \right)^2 \notag \\
  &\leq 25^3 \sum_{x:2 \leq |x| \leq 3k+3}\ \sum_{j \geq 1} 
    \left( \sum_{{\eta \in C_k^{cl}(\Lambda,h,j)} \atop {0 \in \bar\eta_1,|\tilde{\eta}_1| \geq 2(|x|+2)}} \mu(\eta) \right)
    \left( \sum_{{\zeta \in M_j^0} \atop {|\tilde{\zeta}_1| \leq 6}} \mu(\zeta) \right)^2.
\end{align}
Here in the last inequality, we have bounded the sum over $T_j^0$ in terms of the sum
over $M_j^0$, using the argument of \eqref{geom}.

Next consider Type 3.  We conclude from Lemma \ref{chains}(i) that for all $h \geq 1$,
\begin{align} \label{type3}
  |W_{h,3}(t,u)| &\leq \sum_{x:|x| = 8h-2} 
    \sum_{X:0,x \in \Supp(X)} \left| \vphi_u^{\rm T}(X) \right| \notag \\
  &\leq 10 \sum_{x:|x| = 8h-2} (180s)^{|x|+2} \notag \\
  &\leq 40(8h-2)(180s)^{8h} \notag \\
  &\leq s^{3h+4},
\end{align}
where in the last inequality we used $s \leq s_8 = 2 \cdot 3^{-12}$.  Note that 
\eqref{type3} (excluding the first inequality) is also valid for $h=0$, if we replace $8h-2$ with 8.

Now we consider Type 4.  We use Lemma \ref{pertcost}(ii,iii) and once more
reason analogously to \eqref{Phfirst} to obtain that, for all $h \geq 1$,
\begin{align} \label{type4}
  |W_{h,4}&(t,u)| \notag \\
  &\leq \sum_{{x \in {\bf Z}^2} \atop {|x| \leq 16h}}
    \sum_{{\om \in M_h^0} \atop {|\tilde{\ga}_1^{\om}| \leq 6}}
    \sum_{{\alpha \in M_h^x} \atop {|\tilde{\ga}_1^{\alpha}| \leq 6}}
    \sum_{{\zeta:h \to 0} \atop {\mM_0(\zeta)=\om}}  
    \sum_{{\xi:h \to 0} \atop {\mM_x(\xi)=\alpha}}
    \sum_{{X:h \to 0} \atop {\zeta,\xi \in X}}
    \left| \vphi_u^{\rm T}(X) \right| \notag \\
  &\leq 3^6 \sum_{x \in {\bf Z}^2:|x|\leq 16h} 
    \left( \sum_{{\om \in M_h^0} \atop {|\tilde{\ga}_1^{\om}| \leq 6}}
      \sum_{{\zeta:h \to 0} \atop {\mM_0(\zeta)=\om}}
      \mu(\zeta) \right) 
      \left( \sum_{{\alpha \in M_h^x} \atop {|\tilde{\ga}_1^{\alpha}| \leq 6}}
      \sum_{{\xi:h \to 0} \atop {\mM_x(\xi)=\alpha}} \vphi(\xi) \right) \notag \\
   &\leq 3^6 \cdot 2(16h+1)^2
     \left( 25 \sum_{{\om \in M_h^0} \atop {|\olg_1^{\om}| \leq 2}} \mu(\om) \right)^2 \notag \\
   &\leq 3^{17}h^2 \left( \sum_{{\om \in M_h^0} \atop {|\olg_1^{\om}| \leq 2}} \mu(\om) \right)^2.
\end{align}
Here on the right side of the first inequality, for terms with $\zeta=\xi$ we interpret $\zeta,\xi \in X$ as meaning $X(\zeta)\geq 2$ and use Lemma \ref{pertcost}(iii).

Last, for Type 5 we observe that when we create a cluster by extension, lifting a cluster $X$ and 
adding a small cylinder to some elementary perturbation $\alpha \in X$, in the factor $a^{\rm T}$ from \eqref{aT},
$X!$ is increased by a factor of $X(\alpha)$ and (as in the proof of Lemma \ref{pertcost}(iii)) the sum over subgraphs is
unchanged.  With this observation, we can decompose the sum $W_{h+1,5}(t,u)$ into sums $W_{h+1,5,i}(t,u), i=1,2,3,4$, 
according to the type $i$ of the elementary perturbation that is extended, and then apply a modified version of
the first inequality in each of \eqref{type1}, \eqref{type2}, \eqref{type3} and \eqref{type4} in which each term 
$|\vphi_u^{\rm T}(X)|$ is multiplied by $X(\alpha)|\bar\ga_1^\alpha|(t^2+4t^3)$ 
(or similar with $\omega$ in place of $\alpha$), as follows.  First, similarly to \eqref{type1}, for $h \geq 1$,
\begin{align} \label{type5-1}
  |W_{h+1,5,1}(t,u)| &\leq (t^2+4t^3) \sum_{ {\om \in M_h^0} \atop {|\tga_1^{\om}| \geq 8} }\ 
    \sum_{ {\alpha: h \to 0} \atop {\mM_0(\alpha) = \om} }\ 
    \sum_{ {X:h \to 0} \atop {\alpha \in X} } X(\alpha) \left| \vphi_u^{\rm T}(X) \right| \notag \\
  &\leq s^2 \sum_{ {\om \in M_h^0} \atop {|\tga_1^{\om}| \geq 8} }\ 
    \sum_{ {\alpha: h \to 0} \atop {\mM_0(\alpha) = \om} } (e^{\mu(\alpha)}-1) \notag \\
  &\leq 3^3 s^2  \sum_{ {\om \in M_h^0} \atop {|\tga_1^{\om}| \geq 8} } \mu(\om).
  \end{align}
Second, similarly to \eqref{type2}, for $h \geq 1$,
\begin{align} \label{type5-2}
  |W_{h+1,5,2}&(t,u)| \notag \\
  &\leq (t^2+4t^3)\sum_{x:2 \leq |x| \leq 3k+3}\ \sum_{(\eta,\zeta,\xi) \in T_h^{0,x}}
    \sum_{{\alpha:h \to 0} \atop {\mT_{0,x}(\alpha) = (\eta,\zeta,\xi)}}
    \sum_{ {X:h \to 0} \atop {\alpha \in X} } |\bar\ga_1^\zeta| X(\alpha) \left| \vphi_u^{\rm T}(X) \right| \notag \\
  &\leq 2s^2\sum_{x:2 \leq |x| \leq 3k+3}\ \sum_{(\eta,\zeta,\xi) \in T_h^{0,x}}
    \sum_{{\alpha:h \to 0} \atop {\mT_{0,x}(\alpha) = (\eta,\zeta,\xi)}} (e^{\mu(\alpha)}-1) \notag \\
  &\leq 3^4s^2 \sum_{x:2 \leq |x| \leq 3k+3}\ \sum_{(\eta,\zeta,\xi) \in T_h^{0,x}}
    \mu(\eta)\mu(\zeta)\mu(\xi) \notag \\
  &\leq 3^{10}s^2 \sum_{x:2 \leq |x| \leq 3k+3}\ \sum_{j \geq 1} 
    \left( \sum_{{\eta \in C_k^{cl}(\Lambda,h,j)} \atop {0 \in \bar\eta_1,|\tilde{\eta}_1| \geq 2(|x|+2)}} \mu(\eta) \right)
    \left( \sum_{{\zeta \in M_j^0} \atop {|\tilde{\zeta}_1| \leq 6}} \mu(\zeta) \right)^2.
\end{align}
Third, for an elementary perturbation $\om$, let $r(\om) = \max\{d(x,0): x \in \Supp(\om)\}$.
We claim that for all $h,l \geq 1$,
\beq\label{claim}
  \sum_{ {\om \in M_h^0} \atop {|\tga_1^{\om}| \leq 6 \atop{r(\om)=l} } } \mu(\om)
    \leq 3^{10}l^2(9s)^l s^{2h}.
\eeq
Assuming this claim, similarly to \eqref{Phfirst}, \eqref{Qhfirst} and \eqref{type3}, using Lemma \ref{chains}(ii) 
and $s \leq s_8$ we obtain that for $h \geq 1$,
\begin{align} \label{type5-3}
  |&W_{h+1,5,3}(t,u)| \notag \\
    &\leq (t^2+4t^3) \Bigg( \sum_{l=0}^{8h-3}\ \sum_{ {\om \in M_h^0} \atop {|\tga_1^{\om}| \leq 6
    \atop{r(\om)=l} } }\ 
    \sum_{ {\alpha: h \to 0} \atop {\mM_0(\alpha) = \om} }\ \sum_{x:|x| = 8h-2}\ 
    \sum_{ {X:h \to 0} \atop {\alpha \in X \atop {x \in \Supp(X)} } } |\bar\ga_1^\alpha|\ 
    X(\alpha) \left| \vphi_u^{\rm T}(X) \right| \notag \\
  &\qquad \qquad \qquad + \sum_{ {\om \in M_h^0} \atop {|\tga_1^{\om}| \leq 6
    \atop{r(\om) \geq 8h-2} } }\ 
    \sum_{ {\alpha: h \to 0} \atop {\mM_0(\alpha) = \om} }\ 
    \sum_{ {X:h \to 0} \atop {\alpha \in X} } |\bar\ga_1^\alpha| 
    X(\alpha) \left| \vphi_u^{\rm T}(X) \right| \Bigg) \\
  &\leq s^2 \Bigg( \sum_{l=0}^{8h-3}\ \sum_{ {\om \in M_h^0} \atop {|\tga_1^{\om}| \leq 6
    \atop{r(\om)=l} } } 
    \sum_{ {\alpha: h \to 0} \atop {\mM_0(\alpha) = \om} } 4(8h-2)3^{10}(8h-2-l)\mu(\alpha)(180s)^{8h-1-l} \notag \\
  &\qquad \qquad + 3 \sum_{ {\om \in M_h^0} \atop {|\tga_1^{\om}| \leq 6
    \atop{r(\om) \geq 8h-2} } }\ \sum_{ {\alpha: h \to 0} \atop {\mM_0(\alpha) = \om} }\ \mu(\alpha) \Bigg) \notag \\
  &\leq 25s^2 \Bigg[ 256\cdot 3^{10}h^2 \sum_{l=0}^{8h-3}\ 
    \Bigg( \sum_{ {\om \in M_h^0} \atop {|\tga_1^{\om}| \leq 6 \atop{r(\om)=l} } } \mu(\om) \Bigg)
    (180s)^{8h-1-l} +  3\sum_{ {\om \in M_h^0} \atop {|\tga_1^{\om}| \leq 6
    \atop{r(\om)\geq 8h-2} } } \mu(\om) \Bigg] \notag \\
  &\leq 25s^2 \Bigg[ 256\cdot 3^{10}h^2 (180s)^{8h-1}s^{2h} \sum_{l=0}^{8h-3} l^2 20^{-l} 
    + 4\cdot 3^{10}(8h-2)^2 (9s)^{8h-2}s^{2h} \Bigg] \notag \\ 
  &\leq 16000h^2 180^{-2h}(180s)^{10h} \notag \\
  &\leq (180s)^{10h}. \notag
\end{align}
Fourth, similarly to \eqref{type4}, using Lemma \ref{pertcost}(ii,iii), for $h \geq 1$,
\begin{align} \label{type5-4}
  |W_{h+1,5,4}&(t,u)| \notag \\
  &\leq (t^2+4t^3)\sum_{{x \in {\bf Z}^2} \atop {|x| \leq 16h}}
    \sum_{{\om \in M_h^0} \atop {|\tilde{\ga}_1^{\om}| \leq 6}}
    \sum_{{\alpha \in M_h^x} \atop {|\tilde{\ga}_1^{\alpha}| \leq 6}}
    \sum_{{\zeta:h \to 0} \atop {\mM_0(\zeta)=\om}}  
    \sum_{{\xi:h \to 0} \atop {\mM_x(\xi)=\alpha}}
    \sum_{{X:h \to 0} \atop {\zeta,\xi \in X}} |\bar\ga_1^\zeta|\ 
    X(\zeta) \left| \vphi_u^{\rm T}(X) \right| \notag \\  
  &\leq 2s^2 \sum_{{x \in {\bf Z}^2} \atop {|x| \leq 16h}}
    \sum_{{\om \in M_h^0} \atop {|\tilde{\ga}_1^{\om}| \leq 6}}
    \sum_{{\alpha \in M_h^x} \atop {|\tilde{\ga}_1^{\alpha}| \leq 6}}
    \sum_{{\zeta:h \to 0} \atop {\mM_0(\zeta)=\om}}  
    \sum_{{\xi:h \to 0} \atop {\mM_x(\xi)=\alpha}} 2\cdot 3^5\mu(\zeta)\varphi(\xi)
    \notag \\
  &\leq 4\cdot 3^5 s^2 \sum_{{x \in {\bf Z}^2} \atop {|x| \leq 16h}} \left( \sum_{{\om \in M_h^0} \atop {|\tilde{\ga}_1^{\om}| \leq 6}}
      \sum_{{\zeta:h \to 0} \atop {\mM_0(\zeta)=\om}}
      \mu(\zeta) \right) 
      \left( \sum_{{\alpha \in M_h^x} \atop {|\tilde{\ga}_1^{\alpha}| \leq 6}}
      \sum_{{\xi:h \to 0} \atop {\mM_x(\xi)=\alpha}} \vphi(\xi) \right) \notag \\
   &\leq 8\cdot 3^5(16h+1)^2 s^2
     \left( 25  \sum_{{\om \in M_h^0} \atop {|\tilde{\ga}_1^{\om}| \leq 6}} \mu(\om) \right)^2 \notag \\
   &\leq 3^{18}h^2 s^2 \left(  \sum_{{\om \in M_h^0} \atop {|\tilde{\ga}_1^{\om}| \leq 6}} \mu(\om) \right)^2.
\end{align}

We now establish bounds for the right sides of \eqref{Phfirst}, \eqref{Qhfirst}, \eqref{type1}, \eqref{type2} 
\eqref{type4}, \eqref{type5-1}, \eqref{type5-2}, and \eqref{type5-4}, and establish the claim \eqref{claim}.  
We first do this with $M_h^0$ replaced by $F_h^0$.
In fact we claim that for all $h \geq 1,q \geq 3$,
\beq \label{hyp1}
  S_1(h,2q) \equiv
  \sum_{ {\om \in F_h^0} \atop {|\olg_1^{\om}| = 1,\om \notin \Omega_{(a)} \atop{ |\ti\ga_{\om}^{ext}| \geq 2q} } } \mu(\om) \leq 
    6q^2 (9s)^q s^{2h-2} \prod_{j=1}^{h-1} \left( 1 + (3^{11}s)^{j-1} \right),
  \eeq
and when $q \geq 4$,
\beq \label{hyp2}
  S_2(h,2q ) \equiv
  \sum_{ {\om \in F_h^0} \atop {|\tga_1^{\om}| = 6,\om \notin \Omega_{(c)} \atop{ |\ti\ga_{\om}^{ext}| \geq 2q} } } \mu(\om)
    \leq 4q^2(9s)^q s^{3h-3} \prod_{j=1}^{h-1} \left( 1 + (3^{11}s)^{j-1} \right)
  \eeq
and
\beq \label{hyp3}
  S_3(h,2q) \equiv
  \sum_{ {\om \in F_h^0} \atop {|\tga_1^{\om}| \geq 8 \atop{ |\ti\ga_{\om}^{ext}| \geq 2q} } }\ \mu(\om) \leq q^2(9s)^q(3^{11}s^4)^{h-1}.
  \eeq
Of course the products in \eqref{hyp1} and \eqref{hyp2} are bounded in $h$
(bounded by 12, in fact, since $s \leq s_8$),
so they can be replaced by constants,
but their presence simplifies the induction.  
For $h=1$ we have 
\[
  S_1(1,2q) = S_2(1,2q) = 0 \quad \text{for all } q \geq 3,
  \]
and we have the Peierls bound from \eqref{Peierlssum}:
\beq \label{Peierls5}
  S_3(1,2q) \leq \sum_{l \geq 2q} \left( \frac{l}{4} \right)^2 3^l s^{l/2} \leq q^2(9s)^q \quad \text{for all } q \geq 4.
  \eeq
Now suppose that \eqref{hyp1}---\eqref{hyp3} are valid for $h=1,\dots,m$, and all given values of $q$,
for some $m \geq 1$, and consider $h=m+1$.
Considering the effect of removing the lowest layer of cubes from a fully monotone tornado in
$F_{m+1}^h$ we see that
\begin{align} \label{hyp1a}
  S_1&(m+1,2q) \notag \\
  &\leq s^2\left(S_1(m,2q) + 4s^{3m}\delta_{\{q=3\}} + S_2(m,2q) + S_3(m,2q)\right) \notag \\
  &\leq 6q^2(9s)^qs^{2m}  
    \left( 1 + \frac{2}{3^9}s^{m-1} + \frac{2}{3}s^{m-1} +\frac{1}{6}(3^{11}s^2)^{m-1} \right)
    \prod_{j=1}^{m-1} \left( 1 + (3^{11}s)^{j-1} \right) \notag \\
  &\leq 6q^2(9s)^qs^{2m}  
    \prod_{j=1}^m \left( 1 + (3^{11}s)^{j-1} \right),
  \end{align}
where we used $s \leq s_8$.  Similarly,
\begin{align} \label{hyp2a}
  S_2(m+1,2q) &\leq s^3(S_2(m) + 4S_3(m)) \notag \\
  &\leq 4q^2(9s)^qs^{3m-3} \left( 1 + (3^{11}s)^{m-1} \right) 
    \prod_{j=1}^{m-1} \left( 1 + (3^{11}s)^{j-1} \right) \notag \\
  &= 4q^2(9s)^qs^{3m-3} 
    \prod_{j=1}^m \left( 1 + (3^{11}s)^{j-1} \right)
  \end{align}
and
\begin{align} \label{hyp3a}
  S_3(m+1,2q) &\leq S_3(1,8) S_3(m,2q) \notag \\
  &\leq 16(9s)^4 \cdot q^2(9s)^q (3^{11} s^4)^{m-1} \notag \\
  &\leq q^2(9s)^q (3^{11} s^4)^m,
  \end{align}
so \eqref{hyp1}--\eqref{hyp3} are valid for $h=m+1$ as well, establishing the claim.

We would now like to replace $F_h^0$ with $M_h^0$ in \eqref{hyp1}--\eqref{hyp3}.
If $\om:h \to 0, \om = \{\ga_1 \prec \dots \prec \ga_k = \ga_{\om}^{ext}\}$,
is a semi-monotone tornado for some $h \geq 1$, and $I(\ga_{\om}^{ext})=h+j$
for some $j \geq 1$, 
then $\{\ga_1 \prec \dots \prec \ga_{k-1}\}$ is a fully monotone tornado from $h+j$ to 0.  Therefore for $h \geq 1$
and $q \geq 3$,
using \eqref{Peierlssum} and \eqref{hyp1}, and using $S_1(h+j,6) \leq 2s^{2(h+j)}$ for $s \leq s_8$,
\begin{align} \label{hyp1b}
  \sum_{ {\om \in M_h^0} \atop {|\olg_1^{\om}| = 1,\om \notin \Omega_{(a)} 
    \atop{ |\ti\ga_{\om}^{ext}| \geq 2q} } } \mu(\om) &\leq S_1(h,2q) + 
    \sum_{j \geq 1} \left(s^{2(h+j)} + S_1(h+j,6)\right) \sum_{l \geq 2q} \left( \frac{l}{4} \right)^2 3^l s^{jl/2} \notag \\
  &\leq S_1(h,2q) + 3s^{2h} \sum_{l \geq 2q} \left( \frac{l}{4} \right)^2 3^l \sum_{j \geq 1} s^{j(l+4)/2} \notag \\
  &\leq 72q^2(9s)^q s^{2h-2} + q^2(9s)^q s^{2h+2} \notag \\
  &\leq 3^4q^2(9s)^q s^{2h-2}.
  \end{align}
Similarly, for $h \geq 1$ and $q \geq 4$, using $S_2(h+j,8) \leq 19s^{3(h+j)}$ for $s \leq s_8$,
\begin{align} \label{hyp2b}
  \sum_{ {\om \in M_h^0} \atop {|\tga_1^{\om}| = 6,\om \notin \Omega_{(c)} 
    \atop{ |\ti\ga_{\om}^{ext}| \geq 2q} } } \mu(\om) &\leq S_2(h,2q) + 
    \sum_{j \geq 1} (s^{3(h+j)} + S_2(h+j,8)) \sum_{l \geq 2q} \left( \frac{l}{4} \right)^2 3^l s^{jl/2} \notag \\
  &\leq S_2(h,2q) + 20s^{3h} \sum_{l \geq 2q} \left( \frac{l}{4} \right)^2 3^l \sum_{j \geq 1} s^{j(l+6)/2} \notag \\
  &\leq 48q^2(9s)^q s^{3h-3} + 7q^2(9s)^q s^{3h+3} \notag \\
  &\leq 3^4q^2(9s)^q s^{3h-3}
  \end{align}
and using $S_3(h+j,8) \leq (3^{11}s^4)^{h+j}$,
\begin{align} \label{hyp3b}
  \sum_{ {\om \in M_h^0} \atop {|\tga_1^{\om}| \geq 8 \atop{ |\ti\ga_{\om}^{ext}| \geq 2q} } } \mu(\om) &\leq S_3(h,2q) + 
    \sum_{j \geq 1} S_3(h+j,8) \sum_{l \geq 2q} \left( \frac{l}{4} \right)^2 3^l s^{jl/2} \notag \\
  &\leq S_3(h,2q) + (3^{11}s^4)^h \sum_{l \geq 2q} \left( \frac{l}{4} \right)^2 3^l
    \sum_{j \geq 1} (3^{11} s^{(l+8)/2} )^j \notag \\
  &\leq q^2(9s)^q(3^{11}s^4)^{h-1} + q^2(9s)^q(3^{11}s^4)^{h+1} \notag \\
  &\leq 2q^2(9s)^q(3^{11}s^4)^{h-1}.
  \end{align}
From \eqref{Phfirst}, \eqref{Qhfirst}, \eqref{hyp1b} with $q=3$ and \eqref{hyp2b} 
with $q=4$, we obtain that for $h \geq 1$,
\beq \label{PQhsecond}
  |P_h(t)| \leq 3^{12}s^{2h+1}, \quad |Q_h(t)| \leq 3^{15}s^{3h+1}.
  \eeq
From \eqref{type1} and \eqref{hyp3b} with $q=4$, we obtain that for $h \geq 4$,
\beq \label{type1a}
  |W_{h,1}(t,u)| \leq 30(3^{11}s^4)^h \leq 3^{48}s^{3h+4}.
  \eeq
Note that the first inequality in \eqref{type1a}, but not the second, is valid for $h=1,2,3$.
From \eqref{type4}, \eqref{hyp1b} with $q=3$ and \eqref{hyp2b} 
with $q=4$, we obtain that for $h \geq 4$,
\beq \label{type4a}
  |W_{h,4}(t,u)| \leq 3^{17}h^2 \left( s^{2h} + 3^{12}s^{2h+1} + s^{3h} + 16\cdot 3^{12}s^{3h+1} \right)^2
    \leq 3^{22}s^{3h+4}.
  \eeq
  
To control $|W_{h,2}(t,u)|$ we need a bound for the sum over clouds $\eta$ on the right side of 
\eqref{type2}.  We proceed by induction on the total depth $D(\eta)$.  
We claim that for all $d \geq 1, |x| \geq 2$ and all $h,j \geq 1$ with $d \geq |h-j| \vee 1$, 
\beq \label{hypcloud}
  \sum_{{\eta \in C_k^{cl}(\Lambda,h,j)} \atop {0 \in \bar\eta_1,|\tilde{\eta}_1| \geq 2(|x|+2),D(\eta)=d}} \mu(\eta)
    \leq (18s)^{d(|x|+2)}.
  \eeq
Write $V(d,h,j,|x|)$ for the sum in \eqref{hypcloud}.
For $d=1$, \eqref{hypcloud} is a consequence of a Peierls argument and \eqref{Peierlssum}:  
for $h \geq 1, j = h \pm 1$ and $|x| \geq 2$, similarly to \eqref{PeierlsA} we have
\beq \label{Peierls4}
  V(1,h,j,|x|) \leq \sum_{l \geq 2(|x|+2)} 
    2\left( \frac{l}{4} \right)^2 3^l s^{l/2} \leq \frac{5}{8}(|x|+2)^2 (9s)^{|x|+2} \leq (18s)^{|x|+2}.
  \eeq
Now let $m \geq 1$ and suppose \eqref{hypcloud} is valid whenever $d \leq m$.  
A cloud $\eta$ of depth $m+1$ can be obtained by adding a cylinder of length 1 to 
a cloud of depth $m$, or by extending the length of the innermost cylinder $\eta_1$ by 1.
Therefore using \eqref{Peierls4}, for $h,j \geq 1$,
\begin{align} \label{induccloud}
  V(m&+1,h,j,|x|) \notag \\
  &\leq \big( V(m,h,j-1,|x|) + V(m,h,j+1,|x|) \big) \sum_{l \geq 2(|x|+2)} 
    \left( \frac{l}{4} \right)^2 3^l s^{l/2} \notag \\
  &\leq (18s)^{m(|x|+2)} (18s)^{|x|+2} \notag \\
  &= (18s)^{(m+1)(|x|+2)},
  \end{align}
so \eqref{hypcloud} is valid for $d=m+1$ as well, establishing the claim.

Consider now $W_{h,2}(t,u)$.  
Observe that the quantity in parentheses in \eqref{type4a} is actually a bound for the
quantity in parentheses on the right side of \eqref{type4}, and hence it is also a bound
for the quantity which is squared on the right side of \eqref{type2}.  With this fact,
summing \eqref{hypcloud} over $d$ with $d \geq |h-j| \vee 1$, and plugging into \eqref{type2},
we obtain analogously to \eqref{type3} that for $h \geq 4$,
\begin{align} \label{type2a}
  |&W_{h,2}(t,u)| \notag \\
  &\leq 25^3 \sum_{x:2 \leq |x| \leq 3k+3}\ \sum_{j \geq 1} 
    2(18s)^{(|h-j| \vee 1)(|x|+2)} \notag \\
  &\qquad\qquad\qquad\qquad\qquad \cdot
    \left( s^{2j} + 3^{12}s^{2j+1} + s^{3j} + 16\cdot 3^{12}s^{3j+1} \right)^2 \notag \\
  &\leq 10 \cdot 25^3 \sum_{j \geq 1}s^{4j} \sum_{x:|x| \geq 2} (18s)^{(|h-j| \vee 1)(|x|+2)} \notag \\
  &\leq 3^4 \cdot 25^3 \sum_{j \geq 1}s^{4j} (18s)^{4(|h-j| \vee 1)} \notag \\
  &\leq \frac{18}{17} \cdot 3^4 \cdot 25^3 (18s)^{4(h-1)}s^4 \notag \\
  &\leq 3^{42} s^{3h+4}.
  \end{align}
  
Turning to the sums $W_{h+1,5,i}(t,u)$, similarly to the bound \eqref{type1a} on \eqref{type1}, 
for $h \geq 4$ \eqref{type5-1} leads to
\beq \label{type5-1a}
  |W_{h+1,5,1}(t,u)| \leq 3^{48}s^{3h+6},
  \eeq
and similarly to the bound \eqref{type2a} on \eqref{type2}, \eqref{type5-2} leads to 
\beq \label{type5-2a}
  |W_{h+1,5,2}(t,u)| \leq 3^{43}s^{3h+6}.
  \eeq
For $W_{h+1,5,3}(t,u)$ with $h \geq 1$, let us establish the claim \eqref{claim} to complete the proof of \eqref{type5-3}.  
For $l=0$ the sum on the left side of \eqref{claim} has only one term, $s^{2h}$, so \eqref{claim} holds.  
For $l=1$, by \eqref{hyp1} the left side of \eqref{claim} is bounded by 
\[
  S_1(h,6) + 4s^{3h} \leq 3^{12} s^{2h+1}.
  \]
For $l \geq 2$, by \eqref{hyp1} and \eqref{hyp2} the left side of \eqref{claim} is bounded by
\begin{align} \label{claim1}
  S_1(h,2l+4) + S_2(h,2l+4) &\leq 2 \cdot 3^9 l^2(9s)^ls^{2h} + 3^9l^2(9s)^l s^{3h-1} \notag \\
   &\leq 3^{10}l^2(9s)^ls^{2h},
  \end{align}
which establishes the claim.  Finally, using the bound \eqref{type4a} for the right side of \eqref{type4}, we see 
from \eqref{type5-4} that 
\beq \label{type5-4a}
  |W_{h+1,5,4}(t,u)| \leq 3^{23}s^{3h+6}.
  \eeq
Combining \eqref{type5-3}, \eqref{type5-1a}, \eqref{type5-2a}, and \eqref{type5-4a} and using $s \leq s_8$ 
yields that for $h \geq 4$,
\beq \label{type5a}
  |W_{h+1,5}(t,u)| \leq 3^{26} s^{3h+4}.
  \eeq
Then from \eqref{sum7}, \eqref{type3}, \eqref{type1a}, \eqref{type4a}, \eqref{type2a} and \eqref{type5a}, again for $h \geq 4$,
\beq \label{Vhbound}
  |V_h(t,u)| \leq 3^{49}s^{3h+4},
  \eeq
which with \eqref{PQhsecond} proves the proposition for these $h$.

To deal with $h=1,2,3$ we first observe that the bounds \eqref{type3}, \eqref{claim}, \eqref{type5-3} and \eqref{PQhsecond}
remain valid for these $h$, so we may 
restrict attention to elementary perturbations and clusters whose 
support is contained in $\{x: |x|<8h\}$, and we need only establish 
\eqref{Vhbound}. Similarly, for $h=0$ we 
may assume all supports are in $\{x:|x|<8\}$.  Let $\C_h$ be the set of all clusters consisting only of 
perturbations with such supports.  For a cluster $X$ let $m(X)$ denote the number of horizontal plaquettes in $X$
in contact with the wall in the case $h \geq 1$, and the negative of the number not in contact with the 
wall, in the case $h=0$.  Let $\|X\|$ denote half the number of vertical plaquettes in $X$, so that $\vphi_u^{\rm T}(X) = 
a^T(X)t^{\|X\|}e^{um(X)}$.  It is easily seen that $|m(X)| \leq 4\|X\|$ for all
$0 \leq h \leq 3$ and $X \in \C_h$.  Note now that the absolute convergence of the cluster expansion, 
established in Lemma \ref{fkLemma}, means that provided $k \geq 8$, for all 
$0 \leq h \leq 3$ we have, using (\ref{i2}) and (\ref{8k}),
\beq
t\le t_1(8), u\le t^{1/2} \implies 
\sum_{X \in \C_h} |a^T(X)|t^{\|X\|} e^{um(X)} <
3000 s^2 \cdot 2\bigl(8(h\vee1)\bigr)^2
\eeq
and hence
\beqa 
  \sum_{X \in \C_h\atop\|X\| \geq 3h+4} |a^T(X)| t^{\|X\|} e^{um(X)}  
&\le&\Bigl(\frac{t}{t_1(8)}\Bigr)^{3h+4}
  \sum_{X \in \C_h\atop\|X\| \geq 3h+4} |a^T(X)| t_1(8)^{\|X\|}e^{um(X)} \cr
&<&\Bigl(\frac{t}{t_1(8)}\Bigr)^{3h+4}
 6000\, s(t_1(8))^2 \cdot \bigl(8(h\vee1)\bigr)^2\cr
&<&Ct^{3h+4},
 \label{tail} \eeqa
with $C=t_1(8)^{-11}\times7000\times24^2$.
This shows that \eqref{Vhbound} holds,
with a different constant, for $0 \leq h \leq 3$,
and completes the proof.
\end{proof}

For later use, we note that by combining \eqref{Phfirst}, \eqref{hyp1b}, \eqref{Qhfirst}, \eqref{hyp2b},
\eqref{Vhbound} and \eqref{tail}, we see that there exist numbers
$C,K_1$ such that for $h \geq 1, k \geq 8, t < t_1(k)$ and $u \leq \sqrt{t}$,
\begin{align} \label{fullsum}
  \sum_{{X:h \to 0} \atop {\Supp X\ni 0}} &\frac{1}{|\Supp X|} \left| \varphi_u^{\rm T}(X) \right| \notag \\
  &\leq (t^{2h} + 25 \cdot 3^{12}s^{2h+1})e^u + (t^{3h} + 25 \cdot 3^{12} s^{3h+1}) e^{2u} + Cs^{3h+4}
    \notag \\
  &\leq 2t^{2h} + K_1s^{2h+1}.
\end{align}
Here we use the fact that the bound \eqref{Vhbound} is obtained by adding a bound for the
sum of the absolute values of all terms corresponding to clusters $X:h \to 0$ 
to a similar bound for clusters $X:h+1 \to 0$. In the case $h=0$, in place of \eqref{fullsum}
we have using \eqref{type3} (modified for $h=0$ in the manner noted there) and \eqref{tail} that
for some $K_2$,
\beq \label{fullsum0}
  \sum_{X:\Supp X\ni 0} \frac{1}{|\Supp X|} \left| \varphi_u^{\rm T}(X) \right|
    \leq t^2e^{-u} + 2t^3e^{-2u} + Ct^4 + s^4
    \leq t^2 + K_2t^3.
  \eeq

We continue with the proof of Proposition \ref{Prop1}.
We first assume $u=t^2+O(t^4)$, in agreement with (\ref{Ass}) for
$n\ge2$. Then $e^{mu}-1-mt^2e^u = O(t^4)$ for $m=2,3,4$, and $R_h(t,u) = O(t^{3h+4})$ for $h = 1,2,3$.
Taking (\ref{PP})--\eqref{Fig3A1} into account, it follows using Proposition \ref{unifh} that for $h \geq 1$,
\begin{align}
f_k(&h+1)-f_k(h) \nonumber\\ 
&\quad=(t^{2h}+P_h(t))\Big(\exp\big(u+\ln(1-t^2)\big)-1\Big)-2t^{3h+3}+O(s^{3h+4}),
\label{P}
\end{align}
with the $O(s^{3h+4})$ uniform in $h \geq 1$ and $k \geq 8$, provided we restrict to $t<t_1(k)$.

If we assume that $n \geq 2$ and the first inequality in equation (\ref{Ass}) holds,
then as a consequence of equation (\ref{P}) we have
$\forall h\ge1$,
\beq
f_k(h+1)-f_k(h)\ge (2+a)t^{2h+n+3}-2t^{3h+3}+O(s^{3h+4})+O(s^{2h+n+4})
\label{P0}  
\eeq
and in particular,
\beq
f_k(n+1)-f_k(n) \ge at^{3n+3}+O(s^{3n+4}) \label{P1} 
\eeq
and
\beq
f_k(h+1)-f_k(h)\ge (2+a)t^{2h+n+3}+O(s^{2h+n+4}),\quad\forall h\ge n+1.
\label{P2}
\eeq
Here all $O(\cdot)$ terms are uniform in $h \geq 1, n \geq 1, k \geq 8$ provided we restrict to 
$t < t_1(k)/2$.  Summing these increments we obtain
\beq \label{P2a}
  f_k(n) \leq f_k(h)-at^{3n+3}+O(s^{3n+4}),\quad \hbox{uniformly in } 
  k \geq 8, n \geq 1, h\ge n+1.
\eeq
Restricting to $n \leq k$ allows us to replace $O(s^{3n+4})$ with $O(t^{3n+4})$, by \eqref{svst}, 
so \eqref{Pr1} is proved.

If instead we assume that the second inequality in equation (\ref{Ass}) holds with $n \geq 2$, then 
again from \eqref{P}, $\forall h\ge1$,
\beq
f_k(h+1)-f_k(h) 
\le (2-b)t^{2h+n+2}-2t^{3h+3}+O(s^{3h+4})+O(s^{2h+n+3}),
\eeq
and in particular, for $n\ge2$,
\beq
f_k(n)-f_k(n-1)\le -bt^{3n}+O(s^{3n+1}) \label{P3}
\eeq
and 
\beq
f_k(h)-f_k(h-1)\le -2t^{3h}+O(s^{3h+1}),\quad\forall\ 2\le h\le n-1, \label{P4}
\eeq
again with uniformity in $h \geq 1, n \geq 2, k \geq 8$ for $O(\cdot)$ terms provided we restrict to 
$t < t_1(k)$.  From \eqref{10} we also have
\beq \label{ht1}
  f_k(1) - f_k(0) \leq -2t^3 + O(t^4),
  \eeq
uniformly in $k \geq 8$ for these same $t$ values.  Summing the increments we obtain
\beq \label{P22}
  f_k(n) \leq f_k(h)-2t^{3h+3}+O(s^{3h+4}),\quad \hbox{uniformly in } 
  k \geq 8, n \geq 2, 0 \leq h \leq n-2
\eeq
and
\beq \label{P2b}
  f_k(n) \leq f_k(n-1)-bt^{3n}+O(s^{3n+1}),\quad \hbox{uniformly in } 
  k \geq 8, n \geq 2.
\eeq
As with \eqref{P2a}, restricting to $n \leq k$ makes \eqref{Pr2} a consequence of 
\eqref{P22} and \eqref{P2b}.
 
In the $n=1$ case, we have $u=t^2+O(t^3)$, $e^{mu}-1-mt^2e^u = m(u+\log(1-t^2))+O(t^4)$ for 
$m=2,3,4$, and $R_h(t,u) = O(t^{3h+4})$ for $h = 1,2,3$.  Therefore by \eqref{PP}--\eqref{Fig3A1},
for $h \geq 1$, in place of \eqref{P} we have
\begin{align}
f_k(&h+1)-f_k(h) \nonumber\\ 
&\quad=(t^{2h}+O(t^{2h+1}))\left( u+\ln(1-t^2) \right) -2t^{3h+3}+O(s^{3h+4}),
\label{Pn1}
\end{align}
with the $O(\cdot)$ uniform in $h \geq 2$ and $k \geq 8$.
Assuming the first inequality in \eqref{Ass} it follows that
\beq \label{n1}
f_k(h+1)-f_k(h)\ge (2+a)t^{2h+4}+O(t^{2h+5}),
\qquad\forall\ h\ge2,
\eeq
and 
\beq \label{n1a}
f_k(2)-f_k(1)\ge at^6+O(t^7).
\eeq
Now \eqref{Pr1} follows from \eqref{n1} and \eqref{n1a}.  From \eqref{10} we obtain 
\[
  f_k(1) - f_k(0) \leq -bt^3 + O(t^4),
  \]
which proves \eqref{Pr2}.

It remains only to consider the case $n=0$.  Here the assumption is $-\ln(1-t^2) + (2+a)t^3 \leq u < \sqrt{t}$.
Let us first assume that $u$ is at most of the order $t^2$. Then 
\beq 
f_k(1)-f_k(0)= u-t^2-2t^3+O(t^4),  
\eeq
which implies that 
\[
  f_k(1) - f_k(0) \geq at^3+O(t^4).
  \]
But in fact, $u = O(t^2)$ is stronger than necessary. 
We notice that, from (\ref{10}) and Proposition \ref{unifh}, 
uniformly in $u<t^{1/2}$,
\beq
f_k(1)-f_k(0)= u(1-2t^2) - t^2 +O(t^3) + O(t^2u^3),
\eeq 
which is greater than $u/2$ provided $3t^2 < u < t^{1/2}$ and $t$ is sufficiently small.
This ends the proof of Proposition \ref{Prop1}.


\section{Proof of Theorem \ref{Th1}}\label{ProofTh}

We consider the model in a rectangular box $\Lambda$, 
with sides parallel to the axes, under the constant boundary 
condition $\olp_x=n$, for any given integer $n\ge0$. 
First, we are going to rewrite the partition function $\Xi(\Lambda,n)$, 
expression (\ref{Xi1}), using some grouping of
the nonelementary cylinders.
This will allow us to describe the model as a polymer system.
The next task will be to study the cluster expansion that 
can be obtained by this method.

For this purpose, given $\Gamma\in \mathcal{C}(\Lambda,n)$, we define 
$\Gamma_k^{L}\subset \Gamma$ (the subset of {\it large\/} cylinders), 
as the set of cylinders obtained by removing
from $\Gamma$ all those cylinders which are elementary. 
In all this section, we take $k\ge \max(2n,8)$. 
$\Gamma_k^{L}$ is still a (possibly empty) compatible set of cylinders,
because the operation of removing from $\Gamma$
a cylinder together with all the other cylinders contained in it
does not spoil the compatibility. We set
$$
\C_k^{L}(\Lambda,n)=\{\Gamma\in\mathcal{C}(\Lambda,n) : \Gamma=\Gamma_k^{L} \}.
$$
Then, according to (\ref{Xi1}), we can write
\beqa 
\Xi(\Lambda,n)&=&e^{u\delta(n)|\Lambda|}\sum_{\Ga\in\C(\La,n)}
\prod_{\ga\in\Ga}\vphi(\ga) \nonumber \\
&=&e^{u\delta(n)|\Lambda|}\sum_{\Ga\in\C^{L}_k(\La,n)}
\prod_{\ga\in\Ga}\vphi(\ga)
\sum_{\Ga'\in\C_k^{el}(\La,n)\atop\Ga\cup\Ga'\in\C(\La,n)}
\prod_{\ga'\in\Ga'}\vphi(\ga') .
\label{Xi2}\eeqa

We define a {\it contour} as a compatible set 
$\Gamma\in\C_k^{L}(\Lambda,n)$ of nonelementary cylinders, such that:

\smallskip

{\parindent=0pt
\begin{tabular}{ll}
$(1)$ there exists a unique cylinder which is external in $\Gamma$, \\
$(2)$ if $\gamma\in\Gamma$ and $I(\gamma)=n$, then there is no other 
$\gamma'\in\Gamma$ such that ${\olg}'\subset{\olg}$.
\end{tabular} }

\smallskip

Condition (2) is equivalent to

\smallskip

{\parindent=6pt
 $(3)$ if $\gamma\in\Gamma$ is not external, then $E(\gamma) \neq 
E(\Gamma) = n$.}

\smallskip

Any given contour $\Gamma$ can be written as a set 
$\Gamma=\{\gamma^{ext},\gamma_i,\gamma^{int}_j\}$ 
where $\gamma^{ext}$ is the unique external cylinder in $\Ga$, 
$\gamma^{int}_j$ are the cylinders that satisfy $I(\ga_j^{int})=n$, 
and $\gamma_i$ are the remaining cylinders in $\Gamma$.

With these notations, 
we define 
\beqa
{\Supp}(\Gamma)&=& {\olg}^{ext}\setminus
\big(\cup_{j}{\olg}^{int}_j\big),\label{suppp}\\  
{\Supp}^i(\Gamma)&=& {\olg}_i\setminus
\big(\cup_{{\olg}\subset{\olg}_i}{\olg}\big), 
\label{suppi} \\ 
{\Supp}^{ext}(\Gamma)&=& {\olg}^{ext}\setminus
\big(\cup_{\gamma\ne\gamma^{ext}}{\olg}\big). \label{suppe}
\eeqa 

The set $\Supp(\Gamma)$ is called the {\it support} of $\Gamma$. 
The sets (\ref{suppi}), associated to the cylinders $\gamma_i$, 
together with the set (\ref{suppe}), associated to $\gamma^{ext}$, 
are mutually disjoint subsets of $\Supp(\Gamma)$ 
and their union coincides with this support. 

As an explanation for these definitions let us notice that
$E(\Gamma)=E(\gamma^{ext})=I(\gamma_j^{int})=n$. 
On the other hand, if we consider the interface  
that contains only the contour $\Gamma$,  
then the associated configuration $\phi_\Lambda$ 
has $\phi_x=n$ if $x\not\in\Supp(\Gamma)$ and $\phi_x\ne n$ if
$x\in\Supp(\Gamma)$.
It takes constant values in $\Supp^{ext}(\Gamma)$, 
namely $\phi_x=I(\gamma^{ext})$, and in each $\Supp^i(\Gamma)$, 
where we have $\phi_x=I(\gamma_i)$.

We say that 
two contours $\Ga$ and $\Ga'$ are {\it compatible} if their supports do
not intersect and $\Ga\cup\Ga'\in\C^{L}_k(\La,n)$.
This last condition enters only when the boundaries of the 
supports of $\Ga$ and $\Ga'$ have a non-empty intersection,
otherwise it is already satisfied.

Notice that, from definition (\ref{suppp}),
$\partial\,\Supp(\Gamma)={\tga}^{ext}\cup
\big(\cup_j{\tga}^{int}_j)$, so the compatibility
condition $\Ga\cup\Ga'\in\C^{L}_k(\La,n)$ concerns only 
these particular cylinders.
Since, among the compatibility conditions (see Section \ref{CM}),
condition (3) is already satisfied,
they have only to agree with the signs 
according to conditions (1) and (2).
More precisely, any two cylinders $\gamma$ and $\gamma'$,
not necessarily compatible, such that 
$\tga\cap\tga'\ne\emptyset$,
satisfy the sign condition if,
\beq
S(\gamma)=S(\gamma'),\hbox{ if }\olg\subset\olg',
\hbox{ and }
S(\gamma)=-S(\gamma'),\hbox{ if }\olg\subset
\Lambda\setminus\olg'.
\label{sign}\eeq

This leads to the following equivalent definition.

\medskip

Two contours $\Ga$ and $\Ga'$, such  that $E(\Ga)=E(\Ga')$, 
are {\it compatible} if their 
supports do not intersect and if, in addition, we have: 
$S(\gamma^{ext})=-S(\gamma^{'ext})$, if
${\tga}^{ext}\cap{\tga}^{'ext}\ne\emptyset$, 
$S(\gamma^{ext})=S(\gamma^{'int}_j)$, if
${\tga}^{ext}\cap{\tga}^{'int}_j\ne\emptyset$ 
and
$S(\gamma^{'ext})=S(\gamma^{int}_j)$, if
${\tga}^{'ext}\cap{\tga}^{int}_j\ne\emptyset$. 

\medskip

The set of contours $\{\Gamma_i\}$ is a {\it compatible set},
if any two contours in it are compatible.
Let $\Gamma\in\C_k^{L}(\Lambda,n)$. 
Then it is possible to write $\Gamma$ as the disjoint union
\beq
\Gamma = \Gamma_1 \cup \ldots \cup \Gamma_r
\label{decompo}\eeq
in such a way that, for each $i=1,\dots,r$, $\Gamma_i$ is a contour.
The decomposition (\ref{decompo}) is unique.

\medskip

At this point we can write the first sum in expression (\ref{Xi2})
as a sum over compatible sets of contours $\{\Gamma_i\}$.
The sum over the elementary contours in (\ref{Xi2}) decomposes
then into a product of sums associated to the different regions
of $\Lambda$ determined by the contours, 
and gives rise to some partition functions that we are going
to discuss.

Some remarks on the geometry of contours will be helpful.
We notice that the interiors of the cylinders in $\Gamma$ are partially
ordered by inclusion.
The maximal one in the sense of this partial order is $\gamma^{ext}$.
The $\gamma^{int}_j$ are minimal elements,
but in general not the only ones.
Given a cylinder in $\Gamma$, denoted $\gamma_0$,
that is not a minimal element, there are other cylinders
$\gamma_r\in\Gamma$, different from $\gamma_0$, 
that are maximal elements in
the set of cylinders contained in $\olg_0$.
This is the situation specified by our notation $\gamma_r\prec\gamma_0$.


The cylinder $\gamma_0$ can be either $\gamma^{ext}$ or 
one of the $\gamma_i$.
Let $\Lambda'$ be one of the sets in definitions
(\ref{suppi}) or (\ref{suppe}), accordingly. 
Then we may write 
$\Lambda'=\olg_0\setminus\big(\cup_r\olg_r\big)$,
and we have
$\partial\Lambda' \subset \tga_0\cup
\big(\cup_r\tga_r\big)$, where the $\gamma_r$
are, as before, the cylinders such that 
$\gamma_r\prec\gamma_0$.
We denote by $(\Lambda',n')$ the set of elementary
perturbations $\omega$ belonging to $\mathcal{C}^{el}_k(\Lambda',n')$,
hence with support contained in $\Lambda'$
and satisfying $E(\omega)=n'$.
We denote by $(\Lambda',n')^*$ the subset of $(\Lambda',n')$ 
consisting of the elementary perturbations $\om$ for
which in addition $\ga_\om^{ext}$ satisfies the sign condition (\ref{sign})
with respect to all the cylinders $\gamma_0$ and $\gamma_r$
for which the base perimeters have a non-empty intersection
with $\tga^{ext}_\omega$.




We introduce the partition function 
\beq
Z_k^*(\Lambda',n')=
\sum_{\{\omega_j\}\subset(\Lambda',n')^*}
\prod_j \varphi(\omega_j) ,
\label{Zeta4}\eeq
where the sum runs over all compatible sets 
of elementary perturbations $\{\omega_j\}$,
whose elements belong to $(\Lambda',n')^*$.
Here $k$ is the value used
in the definition (\ref{elementary}) of the elementary cylinders.
We notice that if $n'=I(\gamma_0)$, 
then $\Gamma\cup\{\omega_j\}$ is a compatible set of cylinders, i.e.,
$\Gamma\cup\{\omega_j\}\in\mathcal{C}(\Lambda,n)$.
But the partition functions (\ref{Zeta4}) are well defined,
and will be used, also when $n'\ne I(\gamma_0)$.

In the case $\Lambda'=\Supp(\Gamma)$, definition (\ref{suppp}),
we also introduce the partition function $Z_k^*(\Lambda',n')$ 
by the same formula (\ref{Zeta4}).
Since
$\partial\Supp(\Gamma)=
\tga^{ext}\cup\big(\cup_j\tga_j^{int})$, 
now the cylinders $\gamma^{ext}$ and $\gamma_j^{int}$ 
play the role of $\gamma_0$ and $\gamma_r$,
in the sign condition, when $\tga^{ext}_\omega$
has a non-empty intersection with $\partial\Supp(\Gamma)$.

Given a compatible set $\{\Gamma_i\}$ of contours,
we introduce also the partition function $Z_k^*(\Lambda',n)$
associated to the complementary region
$\Lambda'=\Lambda\setminus\big(\cup_i\Supp(\Gamma_i)\big)$,
again by equation (\ref{Zeta4}).
In this case
the sign condition for an elementary perturbation $\om$ has to be 
verified with respect to all
the boundaries $\partial\Supp(\Gamma_i)$.
Since $E(\om)=n$, the set $\omega\cup(\cup_i\Gamma_i)$ is a compatible set of 
cylinders, i.e., $\omega\cup(\cup_i\Gamma_i)\in\mathcal{C}(\Lambda,n)$.

\medskip

Now, for each contour $\Gamma=\{\gamma^{ext},\gamma_i, \gamma^{int}_j\}$ in
$\C_k^{L}(\Lambda,n)$, we define the statistical weight  
\beqa
\varphi(\Gamma)&=& \varphi(\gamma^{ext})
\Big(\prod_i \varphi(\gamma_i)\Big) 
\Big(\prod_j \varphi(\gamma^{int}_j)\Big) \nonumber \\
&\times&
\frac{Z_k^*\big(\Supp^{ext}(\Gamma),I(\gamma^{ext})\big)
\prod_i Z_k^*\big(\Supp^i(\Gamma),I(\gamma_i)\big)}
{Z_k^*\big(\Supp(\Gamma),n\big)} .
\label{weight}\eeqa
For the convergence of \eqref{contourconv} in Lemma \ref{LemLast} below, it is essential
to have the denominator present in \eqref{weight}, so that the energy-entropy benefit of 
choosing any given interface heights for $\Supp^{ext}(\Ga)$ and the $\Supp^i(\Ga)$
is measured relative to the benefit of
leaving these regions at height $n$.

\begin{Lem} \label{LXi5}
We have 
\beqa 
&&\Xi(\Lambda,n)=e^{u\delta(n)|\Lambda|} \nonumber \\
&&\times\sum_{\{\Gamma_i\}}
Z_k^*\Big(\Lambda\setminus(\cup_i\Supp(\Gamma_i)),n\Big)
\prod_i \Big(\varphi(\Gamma_i)Z_k^*\big(\Supp(\Gamma_i),n\big)\Big).
\label{Xi5}\eeqa
The sum runs over all compatible sets $\{\Gamma_i\}$ of contours
contained in $\Lambda$ such that $E(\Gamma_i)=n$, for all $i$. 

\end{Lem}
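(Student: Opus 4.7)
The plan is to start from the rewritten form \eqref{Xi2} of the partition function and, in the outer sum over $\Ga \in \C_k^L(\La,n)$, apply the unique decomposition \eqref{decompo} to express each such $\Ga$ as a compatible family of contours $\{\Ga_1,\dots,\Ga_r\}$. Since the decomposition is a disjoint union of sets of cylinders, the factor $\prod_{\ga\in\Ga}\vphi(\ga)$ splits cleanly as $\prod_i \prod_{\ga\in\Ga_i}\vphi(\ga)$, which already contains (for each $i$) the numerator products $\vphi(\ga^{ext})\prod\vphi(\ga_i)\prod\vphi(\ga_j^{int})$ that appear in \eqref{weight}.

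Next I would carry out the inner sum in \eqref{Xi2}. The key geometric observation is that, by compatibility condition (1)--(2) between cylinders in $\Ga\cup\Ga'$, every elementary perturbation $\om\in\Ga'$ has its exterior cylinder $\ga^{ext}_\om$ either disjoint from or nested inside each cylinder of the contours; since $\om$ is elementary it cannot straddle the base perimeter of any non-elementary cylinder. Consequently $\Supp\om$ lies entirely inside exactly one of the pairwise disjoint subregions
\[
\La\setminus\!\bigcup_i\Supp(\Ga_i),\qquad \Supp^{ext}(\Ga_i),\qquad \Supp^j(\Ga_i),
\]
and on that region the exterior height $E(\ga^{ext}_\om)$ is forced to equal the appropriate height ($n$, $I(\ga^{ext}_i)$, or $I(\ga_{i,j})$ respectively) by condition (3) of compatibility. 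Finally, whenever $\tga^{ext}_\om$ meets $\partial\Supp(\Ga_i)$, the remaining content of $\Ga\cup\Ga'\in\C(\La,n)$ reduces exactly to the sign condition \eqref{sign}, which is built into the starred ensemble $(\La',n')^*$. Therefore the inner sum factorizes as
\[
Z_k^*\!\left(\La\setminus\bigcup_i\Supp(\Ga_i),n\right)\,\prod_i
Z_k^*\big(\Supp^{ext}(\Ga_i),I(\ga^{ext}_i)\big)\,\prod_{i,j}
Z_k^*\big(\Supp^j(\Ga_i),I(\ga_{i,j})\big).
\]

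To recover the formulation \eqref{Xi5}, for each $i$ I multiply and divide by $Z_k^*(\Supp(\Ga_i),n)$. The cylinder weight products together with the ratio of $Z_k^*$'s reconstitute precisely $\vphi(\Ga_i)$ as defined in \eqref{weight}, while the inserted factor $Z_k^*(\Supp(\Ga_i),n)$ remains outside; this yields \eqref{Xi5} upon identifying the outer sum over $\Ga\in\C_k^L(\La,n)$ with the sum over compatible sets $\{\Ga_i\}$ of contours.

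The main obstacle is the bookkeeping of the factorization: one must check that no elementary perturbation in $\Ga'$ is forced to straddle a contour boundary (this uses only the nesting rules of compatibility and the unique exterior cylinder of an elementary perturbation), that the residual compatibility condition on $\partial\Supp(\Ga_i)$ is exactly the sign constraint incorporated into $(\La',n')^*$, and that the resulting sum over $\Ga'$ indeed includes all and only the admissible configurations on each subregion with the prescribed exterior height. Once these three points are verified, the identity \eqref{Xi5} follows by rearrangement, and the division by $Z_k^*(\Supp(\Ga_i),n)$ inside $\vphi(\Ga_i)$ is exactly the normalization needed later for the contour cluster expansion to converge.
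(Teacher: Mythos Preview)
Your proposal is correct and follows precisely the approach the paper intends: start from \eqref{Xi2}, decompose each $\Ga\in\C_k^L(\La,n)$ uniquely into contours via \eqref{decompo}, factorize the inner sum over elementary perturbations across the disjoint subregions (using the nesting forced by compatibility and the fact that an elementary exterior cylinder cannot enclose a non-elementary one), and then multiply and divide by $Z_k^*(\Supp(\Ga_i),n)$ to assemble $\vphi(\Ga_i)$. The paper's own proof is a one-line pointer to \eqref{Xi2}, the preceding definitions, and Lemma~2.4 of \cite{DM}; you have simply written out the bookkeeping that the paper leaves implicit.
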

\begin{proof}
The proof follows from formula (\ref{Xi2}) and 
the above definitions and remarks. 
This lemma corresponds to lemma 2.4 of ref. \cite{DM}.
\end{proof}

Note that the height $n$ appears in the partition function 
$Z_k^*\big(\Supp(\Gamma_i),n\big)$ on the right side of 
\eqref{Xi5}, rather than the actual height $I(\ga_i)$ of the base 
of $\Gamma_i$, because the factor $\varphi(\Gamma_i)$ contains the
ratio of partition functions that appears in \eqref{weight}.

We still need the following definition of compatibility.
We say that an elementary perturbation $\omega$,
with external cylinder denoted $\gamma^{ext}_\omega$ and external level
$E(\omega)=n$,
is {\it compatible} with the contour
$\Gamma=(\gamma^{ext},\gamma_i,\gamma^{int}_j)$, of external level
$E(\Gamma)=n$, if either
\begin{itemize}
\item[(i)] $\olg^{ext}_\omega\subset\Supp(\Gamma)$,
and then:
$S(\gamma^{ext}_\omega)=S(\gamma^{ext})$ if 
${\tga}^{ext}_\omega\cap{\tga}^{ext}\ne\emptyset$, 
$S(\gamma^{ext}_\omega)=-S(\gamma^{int}_j)$ if 
${\tga}^{ext}_\omega\cap{\tga^{int}_j}\ne\emptyset$, or,
\item[(ii)]
$\olg^{ext}_\omega\subset\Lambda\setminus\Supp(\Gamma)$,
and then:
$S(\gamma^{ext}_\omega)=-S(\gamma^{ext})$ if
${\tga}^{ext}_\omega\cap{\tga}^{ext}\ne\emptyset,$ \\ 
and 
$S(\gamma^{ext}_\omega)=S(\gamma^{int}_j)$ if
${\tga}^{ext}_\omega\cap{\tga^{int}_j}\ne\emptyset$.
\end{itemize}

\smallskip

We say that a set $\{\Gamma_i,\omega_j\}$ is a 
{\it compatible set of contours and elementary perturbations} if 
any two contours in the set are compatible, as well as
any two elementary perturbations in it and,
also, any elementary perturbation in the set is compatible with
any contour in the set.

\begin{Lem}
We have 
\beq 
\Xi(\Lambda,n)=e^{u\delta(n)|\Lambda|}\sum_{\{\Ga_i,\omega_j\}}
\prod_i\vphi(\Gamma_i)\prod_j\varphi(\omega_j). 
\label{polymer}\eeq
The sum runs over all compatible sets $\{\Gamma_i,\omega_j\}$ 
of contours and elementary perturbations
contained in $\Lambda$ such that 
$E(\Gamma_i)=E(\omega_j)=n$, for all $i,j$. 
\label{contourpert}
\end{Lem}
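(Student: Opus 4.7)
The plan is to derive Lemma \ref{contourpert} directly from Lemma \ref{LXi5} by expanding each factor $Z_k^*$ in \eqref{Xi5} according to its definition \eqref{Zeta4}. The factor $Z_k^*\bigl(\Supp(\Gamma_i),n\bigr)$ becomes a sum over compatible families of elementary perturbations with external level $n$ and supports contained in $\Supp(\Gamma_i)$, subject to the sign condition \eqref{sign} against the cylinders $\gamma^{ext}$ and $\gamma^{int}_j$ of $\Gamma_i$, whose base perimeters make up $\partial\Supp(\Gamma_i)$. The factor $Z_k^*\bigl(\Lambda\setminus\cup_i\Supp(\Gamma_i),n\bigr)$ expands analogously into a sum over compatible families of elementary perturbations with support in the complementary region, satisfying the sign condition against every $\partial\Supp(\Gamma_i)$ simultaneously.

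Multiplying these expansions together inside \eqref{Xi5} yields a sum indexed by pairs consisting of a compatible set of contours $\{\Gamma_i\}$ and a family $\{\omega_j\}$ of elementary perturbations of external level $n$, each assigned to exactly one of the regions $\Supp(\Gamma_i)$ or the complement. To identify this with the right-hand side of \eqref{polymer}, I would exhibit a bijection with compatible sets of contours and elementary perturbations as defined just before Lemma \ref{contourpert}. Given such a compatible set $\{\Gamma_i,\omega_j\}$, each $\omega_j$ is compatible with each $\Gamma_i$, so by clauses (i)--(ii) of that definition its external interior $\olg^{ext}_{\omega_j}$ lies either inside $\Supp(\Gamma_i)$ or in its complement; since the supports $\Supp(\Gamma_i)$ are pairwise disjoint for a compatible set of contours, this assigns each $\omega_j$ to a single region. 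The sign clauses in (i)--(ii) coincide term-by-term with the sign conditions that the definition of $(\Lambda',n')^*$ imposes on elementary perturbations in that region, so the assignment respects all constraints; conversely, any region-by-region assignment satisfying those conditions reassembles into a compatible set, because elementary perturbations living in distinct regions have disjoint external interiors and are therefore compatible automatically.

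The only nontrivial bookkeeping step is to verify that the dichotomy in (i)--(ii) is actually \emph{forced} for every compatible pair $(\omega,\Gamma)$, that is, there is no missing case in which $\olg^{ext}_\omega$ straddles $\partial\Supp(\Gamma)$. This follows from the basic cylinder-compatibility rules of Section \ref{CM}: such straddling would force $\olg^{ext}_\omega$ to overlap, in the non-nested sense, one of the cylinders $\gamma^{ext}$ or $\gamma^{int}_j$ of $\Gamma$, contradicting the requirement $\Gamma\cup\{\omega\}\in\mathcal{C}(\La,n)$ built into compatibility. Once this dichotomy is checked, the sign conditions match mechanically, the weight $\varphi(\Gamma_i)$ reabsorbs the ratio of partition functions from \eqref{weight}, and \eqref{polymer} follows from \eqref{Xi5} by rearrangement of the expanded sum.
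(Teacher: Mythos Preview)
Your proof is correct and follows exactly the approach of the paper, which simply notes that, with the set of contours $\{\Gamma_i\}$ fixed, summing over all compatible elementary perturbations factorizes into the product of $Z_k^*$ partition functions appearing in \eqref{Xi5}. One small imprecision: when you write that perturbations in distinct regions ``are therefore compatible automatically'' because their supports are disjoint, recall that compatibility also requires the sign condition when boundaries touch---this is, however, guaranteed by the regional sign conditions already imposed (inside and outside a given $\partial\Supp(\Gamma_i)$ force opposite signs on $\gamma^{ext}_\omega$), so the conclusion stands.
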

\begin{proof}
The lemma follows from the above definition of compatibility
between contours and elementary perturbations. 
The set of contours $\{\Gamma_i\}$ being fixed,
the sum over all compatible sets $\{\omega_j\}$ 
of elementary perturbations, such that also
${\{\Ga_i,\omega_j\}}$ is a compatible set, gives
the product of the partition functions in equation (\ref{Xi5}).
\end{proof}

As a consequence of the lemma we see that the system of contours 
and elementary perturbations has become a polymer model.

Our next task is to find a suitable estimate for the 
statistical weight of a contour.
To this end we first derive, from equation (\ref{weight}),
a more convenient expression for this weight.
We introduce the partition functions
\beq \label{tildepart}
{\tilde Z}_k(\Lambda',n')=e^{u\delta(n')|\Lambda'|}Z^*_k(\Lambda',n')
\eeq
with the ``correct'' factor $e^{u\delta(n')|\Lambda'|}$, where 
$Z^*_k(\Lambda',n')$ is from \eqref{Zeta4}.
They differ from the $Z_k(\Lambda',n')$
considered before, definition (\ref{Zeta}) and (\ref{Zeta3}), by the
compatibility condition on the elementary
perturbations that touch the boundary 
$\partial\Lambda'$, assumed in the definition of $Z^*_k$.
Only boundary terms are therefore affected by this condition.

For a contour $\Gamma=\{\gamma^{ext},\gamma_i,\gamma^{int}_j\}$
we define $\|\Ga\|$ to be half the number of vertical plaquettes in $\Ga$, that is,
\[
  2\|\Ga\|= |\ti\ga^{ext}|L(\ga^{ext})+\sum_i|\ti\ga_i|L(\ga_i)
    +\sum_j|\ti\ga_j^{int}|L(\ga_j^{int}).
  \]

\begin{Lem}
The statistical weight of a contour can also be written as
\beqa
\varphi(\Ga)&=& t^{\|\Ga\|}
{{{\tilde Z}_k\big(\Supp^{ext}(\Ga),I(\ga^{ext})\big)
\prod_i 
{\tilde Z}_k\big(\Supp^i(\Ga),I(\ga_i)\big)}
\over{{\tilde Z}_k\big(\Supp(\Ga),n\big)}}.
\label{weight2}
\eeqa
\label{weightTwo}
\end{Lem}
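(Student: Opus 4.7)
The plan is a direct calculation. Substituting the formula \eqref{phicyl} for each $\vphi(\ga)$ into \eqref{weight} and collecting powers of $t$ yields the prefactor $t^{\|\Ga\|}$ immediately, together with an exponential in $u$ whose exponent is
\[
  E_1 = u\Bigl[|\olg^{ext}|\bigl(\delta(I(\ga^{ext}))-\delta(n)\bigr) + \sum_i|\olg_i|\bigl(\delta(I(\ga_i))-\delta(E(\ga_i))\bigr) + \sum_j|\olg_j^{int}|\bigl(\delta(n)-\delta(E(\ga_j^{int}))\bigr)\Bigr].
\]
Converting each $Z_k^*$ in \eqref{weight} to $\tilde Z_k$ via \eqref{tildepart} multiplies the desired ratio of $\tilde Z_k$'s by $\exp(E_2)$, where
\[
  E_2 = u\Bigl[\delta(n)|{\Supp}(\Ga)| - \delta(I(\ga^{ext}))|{\Supp}^{ext}(\Ga)| - \sum_i\delta(I(\ga_i))|{\Supp}^i(\Ga)|\Bigr].
\]
The lemma thus reduces to the identity $E_1+E_2=0$.

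To establish this identity I would exploit the parent--child structure of $\Ga$. Each non-external $\ga\in\Ga$ has a unique immediate ancestor in $\Ga$, and compatibility condition (3) forces $E(\text{child})=I(\text{parent})$; by condition (2) the $\ga_j^{int}$ are exactly the cylinders with no children. Writing $R(\ga^{ext})={\Supp}^{ext}(\Ga)$, $R(\ga_i)={\Supp}^i(\Ga)$ and $R(\ga_j^{int})=\olg_j^{int}$, and using that sibling interiors are disjoint, one has $|\olg|=|R(\ga)|+\sum_{\ga'\text{ child of }\ga}|\olg'|$. Applied to each $\delta(I(\ga))|\olg|$ summand in $E_1$, this rewrites it as $\delta(I(\ga))|R(\ga)|+\sum_{\ga'\text{ child of }\ga}\delta(E(\ga'))|\olg'|$, using compatibility (3) to turn $\delta(I(\ga))$ into $\delta(E(\ga'))$. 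Re-indexing the resulting double sum by child, the child contributions amount to $\sum_{\ga'\ne\ga^{ext}}\delta(E(\ga'))|\olg'|$, which cancels exactly the $-\delta(E(\ga))|\olg|$ terms of $E_1$ coming from the non-external cylinders. Simultaneously, the $\delta(I(\ga))|R(\ga)|$ contributions for $\ga=\ga^{ext}$ and $\ga=\ga_i$ cancel the matching terms in $E_2$.

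What remains carries an overall factor of $\delta(n)$ and equals
\[
  u\delta(n)\Bigl(|{\Supp}(\Ga)|+\sum_j|\olg_j^{int}|-|\olg^{ext}|\Bigr),
\]
which vanishes by the defining identity ${\Supp}(\Ga)=\olg^{ext}\setminus\bigcup_j\olg_j^{int}$ (the $\olg_j^{int}$ being pairwise disjoint by condition (2) and compatibility). This proves $E_1+E_2=0$ and hence \eqref{weight2}. I anticipate no substantive obstacle; the only point requiring care is the dual role of the $\ga_j^{int}$ in the bookkeeping (each has a parent in $\Ga$ but no children), and the essential algebraic input is the identity $I(\ga_j^{int})=n$, which makes the contributions $\delta(I(\ga_j^{int}))|R(\ga_j^{int})|$ combine cleanly with the $\delta(n)$ terms produced by converting the denominator $Z_k^*(\Supp(\Ga),n)$ to $\tilde Z_k(\Supp(\Ga),n)$.
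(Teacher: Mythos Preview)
Your proposal is correct and follows essentially the same approach as the paper: both substitute \eqref{phicyl} into \eqref{weight}, reduce to the identity $E_1+E_2=0$ (the paper's $G=0$), and verify it via the parent--child area decomposition $|\olg|=|R(\ga)|+\sum_{\ga'\prec\ga}|\olg'|$ together with $E(\ga')=I(\ga)$. The only cosmetic difference is that the paper argues recursively layer by layer from $\ga^{ext}$ inward, whereas you do the cancellation in a single pass by re-indexing the child sums; the content is identical.
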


\begin{proof}
In expression (\ref{weight}) replace each $\vphi(\gamma)$ by its value
(\ref{phicyl}). 
Then the claim in the Lemma follows, except for a factor $e^{uG}$ with
\beqa
G&=&(\del(I(\ga^{ext}))-\del(n))|\olg^{ext}|
+\sum_i(\del(I(\ga_i))-\del(E(\ga_i)))|\olg_i|\cr
&+&\sum_j(\del(n)-\del(E(\ga_j^{int})))|\olg_j^{int}|
-\del(I(\ga^{ext}))|{\Supp}^{ext}(\Ga)|\cr
&-&\sum_i\del(I(\ga_i))|{\Supp}^i(\Ga)|
+\del(n)|\Supp(\Ga)| .
\eeqa
Then the identity $G=0$ will complete the proof.
First, we see that the coefficient of $\del(n)$ is zero because
$|\Supp(\Ga)|=|\olg^{ext}|-\sum_j|\olg_j^{int}|$.
Next, consider the terms where the cylinders $\ga^{ext}$, 
$\ga_i\prec\ga^{ext}$ and $\ga_j^{int}\prec\ga^{ext}$,
appear.
We have then $I(\ga^{ext})=E(\ga_i)=E(\ga_j^{int}))$ and
$$
|\olg^{ext}|-\sum_{i}|\olg_i|
-\sum_{j}|\olg_j^{int}| =|{\Supp}^{ext}(\Ga)| . 
$$
This implies that, in $G$, the terms concerning $\ga^{ext}$
and the considered $\ga_j^{int}$ cancel, and, for each of the 
considered $\ga_i$, only the term $\del(I(\ga_i))|\olg_i|$ remains.
In the next step,
we apply the same arguments to each of these $\ga_i$.
After a certain number of similar steps we arrive at the
minimal $\ga_i$ that do not contain any other cylinder of the
contour, and thus we obtain $G=0$, completing the proof of 
Lemma~\ref{weightTwo}.
\end{proof}

\medskip

In order to estimate the statistical weight of a contour, 
given by equation (\ref{weight2}), we first observe that
\beqa
{{{\tilde Z}_k\big(\Supp^{ext}(\Gamma),I(\gamma^{ext})\big)\prod_i 
{\tilde Z}_k\big(\Supp_i(\Gamma),I(\gamma_i)\big)}
\over{{\tilde Z}_k\big(\Supp(\Gamma),n\big)}}&\le& \nonumber \\
{{{\tilde Z}_k\big(\Supp^{ext}(\Gamma),I(\gamma^{ext})\big)}
\over{{\tilde Z}_k\big(\Supp^{ext}(\Gamma),n\big)}}
\prod_i {{{\tilde Z}_k\big(\Supp_i(\Gamma),I(\gamma_i)\big)}
\over{{\tilde Z}_k\big(\Supp_i(\Gamma),n\big)}} .&& 
\label{QQ}\eeqa 
We have then, for this purpose, to estimate the quotient
$$
\frac{ {\tilde Z}_k(\Lambda',h) }{ {\tilde Z}_k(\Lambda',n) } \quad\hbox{when } 
h\ne n .
$$
From the definition of a contour we always have the level
$n=E(\Gamma)$ in the denominator of (\ref{QQ}), while 
only levels $h\ne n$ appear in the numerator.
By the definition of ${\tilde Z}_k$ we may write
\beq
\ln \frac{ {\tilde Z}_k(\Lambda',h) }{ {\tilde Z}_k(\Lambda',n) } =
-|\Lambda'|\big(f_k(h)-f_k(n)\big)
+\hbox{``boundary terms''} .
\label{bound}\eeq

Now, assume that inequalities (\ref{Ass}) in Proposition \ref{Prop1}
are satisfied, with $a=b=\epsilon$ in the hypotheses, 
so that the restricted ensemble at level $n$ is a dominant state.

Concerning the ``boundary terms'' in the right hand side of (\ref{bound}), 
let $\mD(\La',h)$ denote the set of all clusters $X$
satisfying $(\Supp X) \cap \La' \neq \phi$ for which either 
(i) $(\Supp X) \cap (\La')^c \neq \phi$,
or (ii) $\Supp X \subset \La'$ and the sign condition is violated by 
some $\om \in X$.  
From \eqref{lnZ} and \eqref{fk}, for $h \geq 0$,
\begin{align} \label{boundary4}
  \ln {\tilde Z}_k(\Lambda',h) + |\Lambda'| f_k(h) &= -\sum_{x \in \La'}\ \sum_{X \in \mD(\La',h):\Supp X \ni x}
    \frac{1}{ |\La'\cap\Supp X| } \varphi_u^{\rm T}(X).
  \end{align}
Let $h \geq 1$ and recall that $s=te^{t^{1/4}}$.  For $x \in \La'$ with $d(x,(\La')^c) \geq 8h$ we use \eqref{type3} (excluding the first inequality there)
and discard the factor $1/|\La'\cap\Supp X|$:
assuming $t<t_1(k)$,
\begin{align} \label{nearbdry}
  \sum_{X \in \mD(\La',h):\Supp X \ni x} \frac{1}{ |\La'\cap\Supp  X| } \left| \varphi_u^{\rm T}(X) \right|
    &\leq \sum_{y \in \partial \La'}\ \sum_{{X:h \to 0} \atop {x,y \in \Supp X}} \left| \varphi_u^{\rm T}(X) \right| 
    \leq s^{3h+4}.
\end{align}
For $x \in \La'$ with $d(x,(\La')^c) < 8h$ we use \eqref{fullsum}: assuming $s<1/K_1$,
\beq \label{farbdry}
  \sum_{X \in \mD(\La',h):\Supp X \ni x} \frac{1}{ |\La'\cap\Supp  X| } \left| \varphi_u^{\rm T}(X) \right|
    \leq 2t^{2h} + K_1s^{2h+1} \leq 3s^{2h}.
\eeq
Combining these we obtain that for $h \geq 1$,
\beq \label{bound5}
  \left| \ln {\tilde Z}_k(\Lambda',h) + |\Lambda'| f_k(h) \right| \leq 384h^2 |\partial\La'| s^{2h} + |\La'|s^{3h+4}.
  \eeq
  
For $h=0$, \eqref{type3} is valid with $8h-2$ replaced by 8, as noted there, so \eqref{nearbdry} 
is valid for $x$ with $d(x,(\La')^c) \geq 9$.  For $x$ with $d(x,(\La')^c) \leq 8$ we 
use \eqref{fullsum0}: assuming $t<1/K_2$,
\beq \label{nearbdry0}
   \sum_{X \in \mD(\La',h):\Supp X \ni x} \frac{1}{ |\La'\cap\Supp  X| } \left| \varphi_u^{\rm T}(X) \right|
    \leq t^2 + K_2t^3 \leq 2t^2.
\eeq
Hence in place of \eqref{bound5} we have for $h=0$:
\beq \label{bound50}
  \left| \ln {\tilde Z}_k(\Lambda',h) + |\Lambda'| f_k(h) \right| \leq 384 |\partial\La'| t^2 + |\La'|s^4.
\eeq
Write $h_1$ for $\max(h,1)$ and $n_1$ for $\max(n,1)$.  
By \eqref{Pr1} and \eqref{Pr2} in Proposition \ref{Prop1}, for some $K_3$ 
we have 
\begin{equation} \label{freegap}
  f_k(h) - f_k(n) \geq \begin{cases} 2t^{3h+3} - K_3t^{3h+4} &\text{for } h \leq n-2, \\
  \ep t^{3h+3} - K_3t^{3h+4} &\text{for } h =n-1, \\
  \ep t^{3n+3} - K_3t^{3n+4} &\text{for } h \geq n+1. \end{cases}
  \end{equation}
Provided $t \leq \min(t_1(k),(\ep\wedge 1)/2(K_3+2))$, this and 
\eqref{bound5}, \eqref{bound50} show that for some $K_4$, for $h \leq n-2$, 
\begin{align} \label{ratio}
   \log \frac{ {\tilde Z}_k(\Lambda',h) }{ {\tilde Z}_k(\Lambda',n) } &\leq 
     -|\Lambda'|\big( f_k(h) - f_k(n) \big) + K_4h_1^2|\partial \La'| s^{2h_1} + 2|\La'|s^{3h+4} \notag \\
   &\leq -|\Lambda'| t^{3h+3} + K_4h_1^2|\partial \La'|s^{2h_1},
\end{align}
and similarly, considering $h=n-1$, 
\begin{align} \label{ratio2}
   \log \frac{ {\tilde Z}_k(\Lambda',n-1) }{ {\tilde Z}_k(\Lambda',n) } &\leq 
     -|\Lambda'|\big( f_k(h) - f_k(n) \big) + K_4h_1^2|\partial \La'| s^{2h_1} + 2|\La'|s^{3h+4} \notag \\
   &\leq -|\Lambda'| \frac{\epsilon}{2} t^{3h+3} + K_4h_1^2|\partial \La'|s^{2h_1},
\end{align}
while for $h \geq n+1$, 
\begin{align} \label{ratio3}
   \log \frac{ {\tilde Z}_k(\Lambda',h) }{ {\tilde Z}_k(\Lambda',n) } &\leq 
     -|\Lambda'|\big( f_k(h) - f_k(n) \big) + K_4n_1^2|\partial \La'| s^{2n_1} + 2|\La'|s^{3n+4} \notag \\
   &\leq -|\Lambda'| \frac{\epsilon}{2} t^{3n+3} + K_4n_1^2|\partial \La'|s^{2n_1}.
\end{align}
Here we have used the fact that $s^{3h+4} \leq 2t^{3h+4}$ for $t \leq t_1(k)$ and $h \leq n \leq k/2$.

Thus we obtain the following lemma.

\begin{Lem}
Assume that inequalities (\ref{Ass1}) in Theorem \ref{Th1}
are satisfied for some $\ep>0$.  Then there exist $K_5,K_6$ such that
for $n,h\ge0$, $h\ne n$, $k\ge\max(8,2n)$ 
and $t\le t_2(k,\epsilon) \equiv \min(t_1(k),K_5(\ep \wedge 1))$, we have
\beq
\frac{ {\tilde Z}_k(\Lambda',h) }{ {\tilde Z}_k(\Lambda',n) } \le \exp\Big(
-|\Lambda'| \frac{\epsilon}{2} t^{3n+3}+|\partial\Lambda'| K_6s^{2}\Big) ,
\label{bound2}\eeq
where $s=te^{t^{1/4}}$.
\label{Lem3}
\end{Lem}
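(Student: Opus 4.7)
The plan is to consolidate the three case-by-case bounds \eqref{ratio}, \eqref{ratio2}, \eqref{ratio3} (already derived immediately before the lemma statement) into the single uniform estimate \eqref{bound2}, by reducing the volume coefficient in each case to $\frac{\epsilon}{2}t^{3n+3}$ and showing the boundary error is $O(s^2)$ uniformly in $n,h$.

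The three bounds \eqref{ratio}--\eqref{ratio3} partition the range $h\neq n$ exhaustively: $h\leq n-2$ (requiring $n\geq 2$), $h=n-1$ (requiring $n\geq 1$), and $h\geq n+1$ (allowing all $n\geq 0$). For the volume term I would argue case by case. In \eqref{ratio3} the coefficient is already $\frac{\epsilon}{2}t^{3n+3}$. In \eqref{ratio2} the coefficient is $\frac{\epsilon}{2}t^{3h+3} = \frac{\epsilon}{2}t^{3n}$, which dominates $\frac{\epsilon}{2}t^{3n+3}$ for $t\leq 1$. In \eqref{ratio} the coefficient is $t^{3h+3}\geq t^{3n-3}$; assuming without loss of generality $\epsilon\leq 1$, we have $t^{3h+3}\geq t^{3n+3}\geq \frac{\epsilon}{2}t^{3n+3}$. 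So in all three cases the volume term in the exponent is bounded above by $-|\Lambda'|\frac{\epsilon}{2}t^{3n+3}$.

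For the boundary term, each of \eqref{ratio}--\eqref{ratio3} contributes $K_4 m^2 |\partial \Lambda'|s^{2m}$ with $m=h_1$ or $m=n_1$, so I need a uniform bound $m^2 s^{2m}\leq K_6' s^2$ valid for all $m\geq 1$. This holds because $m^2 s^{2m-2}$ is maximized in $m$ on $[1,\infty)$ and, since $s\leq s_k\leq 2\cdot 3^{-12}$ by \eqref{sk}, takes only a bounded value. Setting $K_6 = K_4 K_6'$ handles the cases $h\geq 1$ or $n\geq 1$. The remaining case $h=0$ with $n\geq 1$ uses the alternative boundary bound \eqref{bound50}, whose boundary contribution is $384|\partial\Lambda'|t^2$; since $t\leq s$, this is bounded by $K_6|\partial\Lambda'|s^2$ after possibly enlarging $K_6$.

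The only real bookkeeping point — and the nearest thing to an obstacle — is the choice of $t_2(k,\epsilon) = \min(t_1(k), K_5(\epsilon\wedge 1))$. The threshold must be small enough to (i) absorb the $K_3t^{3h+4}$ correction in \eqref{freegap} into the main term $\epsilon t^{3h+3}$ or $2 t^{3h+3}$, which requires $t\lesssim \epsilon$ when passing from \eqref{freegap} to the simplified versions used in \eqref{ratio2}, \eqref{ratio3}; (ii) absorb the volume-proportional error $|\Lambda'|s^{3h+4}$ from \eqref{bound5} into the leading $|\Lambda'|$ term of the bound; and (iii) exploit $s^{3h+4}\leq 2t^{3h+4}$, valid by \eqref{svst} for $h\leq n\leq k/2$ and $t\leq t_1(k)$, which is why we require $k\geq 2n$. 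All three conditions are satisfied by choosing $K_5$ sufficiently small (independently of $k,n,h$), which determines $K_5$ and completes the proof.
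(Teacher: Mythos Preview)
Your proposal is correct and follows essentially the same approach as the paper: the paper derives the three case-by-case bounds \eqref{ratio}, \eqref{ratio2}, \eqref{ratio3} immediately before the lemma and then simply states ``Thus we obtain the following lemma,'' leaving the consolidation implicit; your write-up makes that consolidation explicit in exactly the way intended. One minor remark: your ``without loss of generality $\epsilon\leq 1$'' in the $h\leq n-2$ case is harmless since for $n\geq 1$ the hypothesis \eqref{Ass1} is vacuous once $\epsilon\geq 2$, and for $n=0$ that case does not arise.
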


This lemma is the analog of Lemma~2.5 in ref.~\cite{DM}, 
though the definitions are not exactly the same.
The factor $\frac{\epsilon}{2} t^{3n+3}$ represents the cost per unit area
for the region $\La'$ to be at a suboptimal height $h \neq n$.

As a consequence of Lemma \ref{weightTwo}, \eqref{QQ} and Lemma \ref{Lem3} 
we obtain the following estimate
on the statistical weight of a contour:
\begin{align}
\varphi(\Ga) &\le t^{ \|\Ga\| } \exp\Bigg( |\partial\,\Supp^{ext}(\Gamma)|K_6s^2 + 
  \sum_i|\partial\,\Supp^{i}(\Gamma)|K_6s^2 \notag \\
&\qquad\qquad\qquad - \bigg(|\Supp^{ext}(\Gamma)| \frac{\ep}{2} t^{3n+3} 
+\sum_i|\Supp^{i}(\Gamma)|\frac{\ep}{2} t^{3n+3}\bigg) \Bigg) \notag\\
&\leq t^{\|\Ga\|} \exp\left(\Big(|\tga^{ext}|+2\sum_i|\tga_i|
+\sum_j|\tga_j^{int}|\Big)K_6s^2 - |\Supp(\Gamma)| \frac{\ep}{2} t^{3n+3} \right).
\label{phigamma}\end{align}

We now consider the convergence of the 
cluster expansion for the logarithm of $\Xi(\Lambda,n)$, 
as written in expression (\ref{polymer}).
That is, we consider the convergence of the following expansion of the surface tension:
\beqa
\tau^{\scriptscriptstyle WB}-2(J_{WA}+J_{AB}) &=& -\lim_{\Lambda\to{\bf Z}^2}\frac{1}{\beta|\Lambda|}
\ln \Xi(\Lambda,n) \nonumber \\ 
&=& -\frac{u}{\beta}\delta(n)-\frac{1}{\beta}\sum_{X\ni 0}
\frac{1}{|\Supp X|}\varphi_u^{\rm T}(X). 
\label{expansion}\eeqa
The sum runs over the clusters of contours $\Gamma$ 
and elementary perturbations $\omega$ such that 
$E(\Gamma)=E(\omega)=n$
and we use the notation 
$$
\Supp X=(\cup_{\Gamma: X(\Gamma)\ge1}\Supp(\Gamma))\cup
(\cup_{\omega: X(\omega)\ge1}\Supp\omega) .
$$
We will apply the Convergence Theorem in Section \ref{ProofProp} with
\beq
\mu(\Ga) = s^{\|\Ga\|}
e^{|\Supp\Ga|(16t)^{3k+4}} 
{{{\tilde Z}_k\big(\Supp^{ext}(\Ga),I(\ga^{ext})\big)
\prod_i 
{\tilde Z}_k\big(\Supp^i(\Ga),I(\ga_i)\big)}
\over{{\tilde Z}_k\big(\Supp(\Ga),n\big)}}.
\label{weight2mu}
\eeq
Then (\ref{QQ}) and Lemma \ref{Lem3} yield
\beq
\mu(\Ga) \le  s^{\|\Ga\|} \exp\left(
\Big(|\tga^{ext}|+2\sum_i|\tga_i|+\sum_j|\tga_j^{int}|\Big)K_6s^2
-|\Supp(\Gamma)| \frac{\ep}{4} t^{3n+3} \right),
\label{phigammamu}\eeq
provided 
\beq \label{tepsilon}
(16t)^{3k+4} < \frac{\ep}{4}t^{3n+3},
\eeq
which is satisfied as soon as $k\ge\max(8,2n)$ and  
$t\le \ep^{1/12}/2000$. 

The statistical weight (\ref{phigamma}), (\ref{phigammamu}) is given in
terms of factors $t^{(1/2)|\tga|L(\gamma)}$ associated to the cylinders 
of the contour, but in contrast to the situation for clusters, 
the set of perimeters of these cylinders is  
not connected. 
The following two lemmas show how one can nevertheless include  
the set of perimeters in a connected structure, in such a way that 
the weights become summable, as needed  
in condition (\ref{i1}).

\begin{Lem}
\label{treeLem}
Let $k$ and $n$ be integers with $n \geq 0$ and 
$k \geq \max(2n,8)$.
Consider rooted trees $T$, oriented from the root $r(T)$ outwards.
Let each vertex $v$ carry an integer label $n_v$, and each edge $g$
an integer label $p_g$. Denote by $d(v)=d(v,T)$ the number of edges starting
from vertex $v$, and let these edges be ordered, 1 through $d(v)$.
Let $\mT$ be the set of all such labeled trees in which the labels satisfy
\beq 
2n_v \geq d(v)\,,\quad n_v\ge 3k+4\,,\quad p_g\ge 0.
\eeq
Let $c>0,\epsilon_1>0,t>0$ and $s=te^{t^{1/4}}$ with 
\beq\label{t7}
cs < \frac{1}{4},\qquad t\leq(3k+3)^{-4}\,,\qquad t^{3(k-n)-1}\leq\epsilon_1.
\eeq 
Assign to every tree $T$ a weight 
\beq \label{treeweight}
\mu(T)=\prod_{v\in T}\bigg({{2n_v}\atop{d(v)}}\bigg)(cs)^{n_v}
\prod_{g\in T}\exp\big(-\epsilon_1p_g t^{3n+3}\big) ,
\eeq
Then for $m \geq 3k+4$,
\beq
\sum_{T \in \mT: n_{r(T)}=m}\mu(T)\le
\Bigl( cse^{3\epsilon_1^{-1}t^{3(k-n)+1}} \Bigr)^m.
\label{tree}\eeq
\end{Lem}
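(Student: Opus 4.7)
The plan is to obtain a closed-form expression for $G(m):=\sum_{T\in\mT,\,n_{r(T)}=m}\mu(T)$ via root-decomposition and then close the estimate by a fixed-point argument. A tree in $\mT$ with root $v$, $n_v=m$, decomposes into $v$ together with its $d:=d(v)\in\{0,\dots,2m\}$ ordered outgoing edges, each edge $g$ carrying a label $p_g\ge 0$ and leading to an independent subtree in $\mT$ of root label $m_g\ge 3k+4$. The sums over $d$, edge labels, and subtrees factor, yielding
\[
G(m) \;=\; (cs)^m\sum_{d=0}^{2m}\binom{2m}{d}(SN)^d \;=\; (cs)^m(1+SN)^{2m},
\]
where
\[
S := \sum_{p=0}^\infty e^{-\epsilon_1 p t^{3n+3}} = \frac{1}{1-e^{-\epsilon_1 t^{3n+3}}},\qquad N := \sum_{m'\ge 3k+4}G(m').
\]
Substituting this formula for $G(m')$ back into $N$ and summing the resulting geometric series, the scalar $\Phi:=SN$ satisfies the fixed-point equation
\[
\Phi \;=\; F(\Phi) \;:=\; S\cdot\frac{\bigl(cs(1+\Phi)^2\bigr)^{3k+4}}{1-cs(1+\Phi)^2}
\]
(valid while $cs(1+\Phi)^2<1$).

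Setting $\delta:=3\epsilon_1^{-1}t^{3(k-n)+1}$, the target bound $G(m)\le(cs\,e^\delta)^m$ is equivalent to $(1+\Phi)^2\le e^\delta$, i.e.\ to $\Phi\le\Phi^\star:=e^{\delta/2}-1$. Since $F$ is increasing on its domain, and the true $\Phi$ equals the limit of the iterates obtained by starting at $0$ and applying $F$ repeatedly (combinatorially: restricting to trees of bounded depth), it suffices to verify the single inequality $F(\Phi^\star)\le\Phi^\star$.

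For this verification the three hypotheses in (\ref{t7}) combine as follows. From $1-e^{-x}\ge x/2$ on $(0,1]$, $S\le 2/(\epsilon_1 t^{3n+3})$. From $s=te^{t^{1/4}}$, $t\le(3k+3)^{-4}$ and $k\ge 8$, one has $(3k+4)t^{1/4}\le 2$, hence $s^{3k+4}\le e^2\,t^{3k+4}$. From $\epsilon_1\ge t^{3(k-n)-1}$ one gets $\delta\le 3t^2$ and $(3k+4)\delta\ll 1$, so $e^\delta$, $e^{(3k+4)\delta}$ and $(1+\Phi^\star)^{2(3k+4)}$ are all $O(1)$; combined with $cs<1/4$ this keeps the denominator $1-cs(1+\Phi^\star)^2$ above $1/2$. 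Putting the estimates together,
\[
F(\Phi^\star)\;\le\;\frac{4\,(cs)^{3k+4}\,e^{(3k+4)\delta}}{\epsilon_1\,t^{3n+3}}\;\le\;K\,c^{3k+4}\cdot\frac{t^{3(k-n)+1}}{\epsilon_1}\;=\;K\,c^{3k+4}\cdot\frac{\delta}{3}
\]
for a universal $K$, which is $\le\delta/2\le\Phi^\star$ under the smallness of $c^{3k+4}$ built into the $cs<1/4$ hypothesis, completing the fixed-point check and hence the bound.

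The main obstacle is the delicate bookkeeping in this last step. The factor $t^{3n+3}$ absorbed in $S^{-1}$ must cancel exactly against $s^{3k+4}$ from the leading term of the geometric series to reproduce precisely the predicted exponent $3(k-n)+1$, while at the same time the factors $e^{(3k+4)t^{1/4}}$, $e^{(3k+4)\delta}$, and $(1+\Phi^\star)^{2(3k+4)}$ arising from the vertex count at the root must be shown uniformly $O(1)$ in $k$, since otherwise a spurious $e^{O(k)}$ would destroy the bound. The precise coupling of $t$, $\epsilon_1$, $k$ and $k-n$ imposed in (\ref{t7}) is exactly what makes all these requirements hold simultaneously.
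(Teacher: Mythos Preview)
Your fixed-point argument is essentially the paper's induction on the number of vertices $|T|$, repackaged: both derive the recursion $G(m)=(cs)^m(1+\Phi)^{2m}$, $\Phi=S\sum_{m'}G(m')$, from root-decomposition and close it by checking that $\Phi\le\Phi^\star$ is preserved under the map $F$.

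The genuine gap is your last sentence. The hypothesis $cs<\tfrac14$ bounds only the \emph{product} $cs$; it says nothing about $c$ alone, so it certainly does not make $c^{3k+4}$ small. In the intended application (Lemma~\ref{LemLast}) one has $c=14$, hence $c^{3k+4}\ge14^{28}$, and your proposed inequality $K\,c^{3k+4}\cdot\delta/3\le\delta/2$ fails by many orders of magnitude. This is not merely a defect of your write-up: take $k=8$, $n=0$, $t=27^{-4}$, $c=14$, $\epsilon_1=1$; then all of \eqref{t7} hold, yet $\delta=3t^{25}$ is so minute that the claimed bound is essentially $(cs)^m$, while already one- and two-vertex trees give
\[
G(m)\ge(cs)^m\Bigl(1+2m\,S\,\tfrac{(cs)^{3k+4}}{1-cs}\Bigr),
\]
and one checks that $2m\,S\,(cs)^{3k+4}\sim 56\cdot 27^{12}\cdot(14.5)^{28}\cdot27^{-112}$ exceeds $m\delta\sim 84\cdot27^{-100}$ by a factor of order $14.5^{28}$. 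So the inequality \eqref{tree} is false as stated for these parameters. The paper's own proof trips at the same place: the displayed ``$=$'' step there silently replaces $(cs)^{3k+4}$ by $t^{3k+4}$, i.e.\ assumes $cs=t$.

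What is actually used downstream is only the coarser bound $\sum_{T}\mu(T)\le(c't)^m$ for an explicit $c'$ slightly larger than $c$ (the paper takes $c=14$, $c'=15$). For that, your argument goes through cleanly: set $\Phi^\star$ so that $cs(1+\Phi^\star)^2=c't$, i.e.\ $\Phi^\star=(c'/c)^{1/2}e^{-t^{1/4}/2}-1$, a fixed positive number; then $F(\Phi^\star)\le 4S\,(c't)^{3k+4}$, and the factor $S\le 2(\epsilon_1 t^{3n+3})^{-1}\le 2t^{-(3k+2)}$ is now comfortably absorbed by $(c't)^{3k+4}$ since $t\le(3k+3)^{-4}$.
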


\begin{proof}
This lemma corresponds to Lemma~2.13 of ref. \cite{DM}.
The proof proceeds by induction on the number $|T|$ of vertices of $T$,
starting with
\beq
\sum_{T: n_{r(T)}=m\atop|T|=1} \mu(T)= (cs)^m
\eeq
and then assuming
\beq
\sum_{T: n_{r(T)}=m \atop|T|\le q} \mu(T)\le 
\Bigl( cse^{3\epsilon_1^{-1}t^{3(k-n)+1}} \Bigr)^m
\eeq
for all $m$, for some $q\ge1$, so that also
\begin{align}
  \sum_{T:|T| \leq q} \mu(T) &\leq \frac{ \Bigl( cse^{3\epsilon_1^{-1}t^{3(k-n)+1}} \Bigr)^{3k+4} }
    {1 - cse^{3\epsilon_1^{-1}t^{3(k-n)+1}} } 
    \leq e^{2cs} \Bigl( cse^{3\epsilon_1^{-1}t^{3(k-n)+1}} \Bigr)^{3k+4}.
\end{align}
Any tree with $|T|=q+1$ can be
decomposed into its root $v_0$, edges $g_1,\dots,g_f$ which lead to the
vertices $v_1,\dots,v_f$, and subtrees $T_1,\dots,T_f$, 
such that $|T_i|\le q$ and $r(T_i)=v_i$. We have the formal identity
\begin{align}
\sum_{T: n_{r(T)}=m} &\mu(T) \nonumber \\ 
&= (cs)^m \sum_{f=0}^{2m} \bigg({{2m}\atop{f}}\bigg)
\prod_{i=1}^f\bigg(\sum_{p_i=0}^\infty \exp\big(-\epsilon_1p_i t^{3n+3}\big)
\sum_{T_i \in \mT}\mu(T_i)\bigg)
\end{align}
and the related induction bound
\beqa
&&\sum_{T: n_{r(T)}=m \atop|T|\le q+1} \mu(T) \nonumber \\ 
&&\le
(cs)^m\sum_{f=0}^{2m} \bigg({{2m}\atop{f}}\bigg)
\prod_{i=1}^f\bigg(\sum_{p_i=0}^\infty \exp\big(-\epsilon_1p_i t^{3n+3}\big)
\sum_{T_i: |T_i|\le q}\mu(T_i)\bigg) \nonumber \\ 
&&\le
(cs)^m \sum_{f=0}^{2m} \bigg({{2m}\atop{f}}\bigg)
\bigg({{1}\over{1-\exp(-\epsilon_1t^{3n+3})}}
e^{2cs}\Bigl(cse^{3\epsilon_1^{-1}t^{3(k-n)+1}}\Bigr)^{3k+4}\bigg)^{f} 
\nonumber \\  
&&\le
(cs)^m  \bigg(1+(\epsilon_1t^{3n+3})^{-1}
e^{2cs}\Bigl(cse^{3\epsilon_1^{-1}t^{3(k-n)+1}}\Bigr)^{3k+4}\bigg)^{2m} 
\nonumber \\   
&&=
(cs)^m \bigg(1+\epsilon_1^{-1}e^{2cs}t^{3(k-n)+1}
\Bigl(e^{3\epsilon_1^{-1}t^{3(k-n)+1}}\Bigr)^{3k+4}\bigg)^{2m} 
\nonumber \\ 
&&\le
\Bigl( cse^{3\epsilon_1^{-1}t^{3(k-n)+1}} \Bigr)^m,
\eeqa
where (\ref{t7}) was used in the last inequality, concluding the proof
of the Lemma. 
\end{proof}


\begin{Lem}
Assume that for all contours $\Ga$,
\beq\label{phigammamu3}
\mu(\Ga) \le  s^{\|\Ga\|} \exp\left(
2K_6\Big(|\tga^{ext}|+\sum_i|\tga_i|+\sum_j|\tga_j^{int}|\Big)s^2 
  -|\Supp(\Gamma)| \ep_1 t^{3n+3} \right),
\eeq
with $\ep_1>0$, $k\ge\max(2n,8)$, $K_6$ from Lemma \ref{Lem3}, $s=te^{t^{1/4}}$ and
\beq\label{t8}
t<(4K_6)^{-1} \wedge (3k+3)^{-4}\,,\qquad t^{3(k-n)-1}\leq\epsilon_1/4\,.
\eeq 
Then
\beq\label{contourconv}
\sum_{\Gamma:\,\Supp(\Gamma)\ni0}\mu(\Gamma)\le (16\,t)^{3k+4}.
\eeq
\label{LemLast}
\end{Lem}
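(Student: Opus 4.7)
The plan is to reduce \eqref{contourconv} to a sum over labelled rooted trees and invoke Lemma \ref{treeLem}. To each contour $\Gamma = \{\gamma^{ext},\gamma_i,\gamma^{int}_j\}$ with $0 \in \Supp(\Ga)$ I would associate a tree $T(\Ga)\in\mT$ whose vertices are the cylinders of $\Ga$, with root $r(T)=\ga^{ext}$ and an edge from $\ga_0$ to each $\ga_r\prec\ga_0$. Label each vertex $v$ by $n_v = |\tga_v|/2$ and each edge $g=(\ga_0,\ga_r)$ by a nonnegative integer $p_g$ measuring the annular area inside $\olg_0$ separating the child base $\tga_r$ from its nearest larger boundary (either $\tga_0$ or a sibling's base), chosen so that $\sum_g p_g \le |\Supp(\Ga)|$.

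Next I would bound $\mu(\Ga)$ from \eqref{phigammamu3} factor by factor along this tree. The factor $s^{\|\Ga\|}$ splits as $\prod_v s^{|\tga_v|L(\ga_v)/2} = \prod_v s^{n_v L(\ga_v)}$; summing over $L(\ga_v)\ge 1$ at fixed parent heights gives a geometric series bounded by $2s^{n_v}$ per vertex. Summing over closed dual perimeters $\tga_v$ of length $2n_v$ anchored at a prescribed docking site inside the parent produces the Peierls count $\le 3^{2n_v}=9^{n_v}$; together with the boundary factor $e^{2K_6|\tga_v|s^2}=e^{4K_6 n_v s^2}$ this is at most $(cs)^{n_v}$ for an absolute constant $c$, with $cs<1/4$ under \eqref{t8}. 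The area cost $\exp(-|\Supp(\Ga)|\ep_1 t^{3n+3})$ distributes as $\prod_g e^{-\ep_1 p_g t^{3n+3}}$, and the number of distinguishable placements of the $d(v)$ immediate children's docking sites along a parent perimeter of length $2n_v$ is at most $\binom{2n_v}{d(v)}$.

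Because every cylinder of a contour is nonelementary, $\diam \tga_v \ge 3k+4$, and closed dual paths have perimeter at least twice their $\ell^1$ diameter, so $|\tga_v|\ge 6k+8$ and $n_v\ge 3k+4$, meeting the hypothesis of Lemma \ref{treeLem}. Applying that lemma, together with an overall $O(m)$ factor for placing the origin inside the root's interior, gives
\[
\sum_{\Ga:\Supp(\Ga)\ni 0} \mu(\Ga) \;\le\; K_8 \sum_{m\ge 3k+4} m\,\bigl(cs\,e^{3\ep_1^{-1}t^{3(k-n)+1}}\bigr)^m.
\]
Under \eqref{t8}, $t^{3(k-n)-1}\le \ep_1/4$ forces $3\ep_1^{-1}t^{3(k-n)+1}\le \tfrac34 t^2<\tfrac14$, so the exponential factor is at most $e^{1/4}<2$; combined with $s\le 2t$ and the explicit $c$, the base of the geometric series is an absolute constant multiple of $t$, and for $t$ sufficiently small (absorbed into $t<(4K_6)^{-1}\wedge(3k+3)^{-4}$) the sum is dominated by its leading term, yielding $(16t)^{3k+4}$.

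The main obstacle will be the Peierls-type bookkeeping: first, designing the edge labels $p_g$ so that $\prod_g e^{-\ep_1 p_g t^{3n+3}}$ correctly extracts $e^{-\ep_1|\Supp(\Ga)|t^{3n+3}}$ without double counting, and second, verifying that $\binom{2n_v}{d(v)}$ genuinely dominates the number of geometrically admissible placements of $d(v)$ disjoint nested children (subject to the sign conditions entering the contour definition, and with the special role of the minimal cylinders $\ga^{int}_j$ at base height $n$) inside a parent of perimeter $2n_v$. This parallels the construction in \cite{DM}, to which one would refer for detailed combinatorial estimates.
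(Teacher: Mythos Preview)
Your overall plan---reduce the contour sum to a weighted tree sum and invoke Lemma~\ref{treeLem}---is exactly the paper's strategy, and your verification that $n_v\ge 3k+4$ and that the final geometric series can be absorbed into $(16t)^{3k+4}$ is fine. The genuine gap is in the tree you choose and, correspondingly, in the edge labels $p_g$.

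You propose the \emph{nesting} tree (an edge from $\ga_0$ to each $\ga_r\prec\ga_0$) with $p_g$ an ``annular area'' and $\binom{2n_v}{d(v)}$ counting child placements. But children of $\ga_0$ live in the two-dimensional region $\olg_0$, not on the one-dimensional curve $\tga_0$, so $\binom{2n_v}{d(v)}$ has no natural interpretation as a placement count, and partitioning the annulus $\olg_0\setminus\cup_r\olg_r$ into integers $p_g$ (one per child) without over- or under-counting is exactly the obstacle you flag and do not resolve. The paper (following \cite{DM}) sidesteps this by using a \emph{different} tree: from the lowest leftmost dual point $z(\tga)$ of each non-external cylinder, draw a horizontal segment leftward until it hits the base perimeter of \emph{some} other cylinder in $\Ga$ (not necessarily the nesting parent---a sibling may intervene). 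The parent of $\ga$ in $T(\Ga)$ is whatever cylinder's perimeter the segment reaches. These segments are pairwise disjoint one-dimensional paths lying in $\Supp(\Ga)$, so setting $p_g$ equal to the segment length gives $\sum_g p_g\le|\Supp(\Ga)|$ for free. Moreover, the $d(v)$ edges emanating from $v$ correspond to $d(v)$ distinct landing points on the perimeter $\tga_v$ of length $2n_v$, which is precisely what $\binom{2n_v}{d(v)}$ counts. Given the abstract tree, the landing points, the segment lengths, and the perimeter shapes, the contour is reconstructed, so the map $\Ga\mapsto T(\Ga)$ has the required fibre bound. This horizontal-segment construction is the missing idea that makes your outline go through.
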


\begin{proof}
For each base perimeter $\ti\ga$ of a cylinder $\ga \in \Ga$, $\ga \neq \ga^{ext}$,
we select as a designated point $z(\ti\ga) \in \ti\ga$ the lowest among the leftmost points in $\ti\ga$.
From each $z(\ti\ga)$, draw a horizontal line segment 
(necessarily a lattice path in $\ZZ^2 + (\frac{1}{2},\frac{1}{2})$)
leftward from $z(\ti\ga)$ until it reaches the base perimeter of some 
other cylinder in $\Ga$.  The line segment may have length 0.
Reverse the orientation of this segment so it points rightward.  
Note these segments are necessarily disjoint for distinct $\ti\ga$ from $\Ga$, and 
contained in $\Supp(\Ga)$.
We obtain a rooted labeled tree as in Lemma \ref{treeLem}, connecting all the $\ti\ga\in\Ga$, by taking each 
$\ti\ga$ from $\Ga$ to be a vertex $v$, with label $2n_v=|\ti\ga|$ and with $\ti\ga^{ext}$ as root, and 
taking each line segment to be an oriented edge $g$, with label $p_g$ equal to
its length; the ordering of the edges emanating from each $v$ is given by counterclockwise 
ordering of the segments emanating from $\ti\ga$, starting from $z(\ti\ga)$.

We may then identify all those rooted labeled trees with vertices $\ti\ga$ which are
isomorphic as rooted trees and which have the same values $n_v,p_g$, taking into account the ordering
of edges;
the resulting structure defines a rooted labeled tree $T=T(\Ga)$ of the type in Lemma \ref{treeLem}.
In order to apply Lemma~\ref{treeLem}, we need
\beq
\sum_{\Ga: T(\Ga)=T \atop \Supp(\Ga) \ni 0}\mu(\Ga)\le\mu(T).
\eeq
Given $T$, to specify a contour $\Ga$ with $T(\Ga)=T$, we need first to choose for the 
root $r(T)$ a cylinder $\ga$ with $|\ti\ga|=2n_{r(T)}$ and $0\in\bar\ga$, 
then choose $d(r(T))$ sites along $\ti\ga$ to be the starting points of the line segments,
and then choose a length $p_g$ for each such line segment $g$ emanating from $\ti\ga$.
Second we must choose for each nonroot vertex $v$ a cylinder $\ga$ 
with $|\ti\ga|=2n_v$ for which $\ti\ga$ 
passes through a fixed site (say $(\frac{1}{2},\frac{1}{2})$), then (as with the root) 
choose $d(v)$ sites along 
$\ti\ga$ and a length $p_g$ for each segment $g$ emanating from $\ti\ga$.
Thus the factor $\big({{2n_v}\atop{d(v)}}\big)$ in \eqref{treeweight} is the number of choices of
starting points of $d(v)$ segments starting from $\ti\ga$ with
$|\ti\ga|=2n_v$. The factor with $p_g$ is obtained from \eqref{phigammamu3} using
\beq
\Supp(\Ga)\ge\sum_{g\in T}p_g
\eeq
and the power of $cs$ is obtained using \eqref{phigammamu3} and the bound
\beqa
\sum_{\gamma,|\tga|=2n_v \atop \Supp(\ga) \ni 0}s^{{1\over2}|\ti\ga|L(\ga)}e^{2|\ti\ga|K_6t^2} &<&
2n_v^2(1-s^{n_v})^{-1}(3^2se^{4K_6t^2})^{n_v} \nonumber \\
&<& 3n_v^2(10s)^{n_v} \nonumber\\
&<& (14\,t)^{n_v}.
\eeqa
We see that we fit into the conditions of Lemma~\ref{treeLem}, with 
$c=14$. Then the right-hand side of (\ref{tree}) is bounded by 
$(15t)^m$.  Summing over $m \geq 3k+4$ proves Lemma~\ref{LemLast}. 
\end{proof}

\medskip

Applying Lemma~\ref{LemLast} with $\ep_1=\ep/4$, we obtain
\beq
\sum_{\Gamma':\,\Gamma'\not\sim\Gamma}\mu(\Gamma')\le\sum_{x\in\Supp(\Gamma)}
\sum_{\Gamma':\,\Supp(\Gamma')\ni x}\mu(\Gamma')
\le |\Supp(\Gamma)|(16\,t)^{3k+4}.
\label{B1}\eeq
Similarly,
\beq
\sum_{\Gamma':\,\Gamma'\not\sim\omega}
\mu(\Gamma')\le\sum_{x\in\tga^{ext}_\omega}
\sum_{\Gamma':\,\partial\Supp(\Gamma')\ni x}\mu(\Gamma')
\le |\tga^{ext}_\omega|(16\,t)^{3k+4}.
\label{B2}\eeq

Recall $t_2(k,\ep)$ from Lemma \ref{Lem3}, and recall the sufficient 
condition $t\le \ep^{1/12}/2000$ for \eqref{tepsilon}.  Note that since $k-n \geq k/2$, to satisfy the second inequality 
in \eqref{t8} with $\ep_1 = \ep/4$, it suffices that $t \leq (\ep/16)^{2/3k}/12$, and this in turn 
follows from the condition $t \leq \ep^{1/12}/2000$.  After reducing $K_5$ if necessary, this last condition 
follows from the condition $t \leq K_5(\ep \vee 1)$ in Lemma \ref{Lem3}.

\begin{Lem}\label{cvgCluster}
Let the level $n$ be given and 
let $k\ge\max(8,2n)$.
Assume that the inequalities (\ref{Ass1}) in Theorem \ref{Th1} are satisfied.  
Then for $t\le t_5(k,\epsilon) \equiv \min(t_1(k)/12,K_5(\ep \vee 1))$,
the cluster expansion (\ref{expansion}) of the surface tension converges.  
Here $K_5$ is from Lemma \ref{Lem3}.
\end{Lem}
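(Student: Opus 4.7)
The plan is to apply the Convergence Theorem of Section \ref{ProofProp} to the polymer system of Lemma \ref{contourpert}, whose polymers are contours $\Gamma$ and elementary perturbations $\omega$ with $E(\Gamma)=E(\omega)=n$, equipped with weights $\varphi(\Gamma)$ from \eqref{weight2} and $\varphi(\omega)=\varphi_{t,u}(\omega)$. The majorants will be $\mu(\omega)=\varphi_{s,0}(\omega)$ as in Lemma \ref{fkLemma} and $\mu(\Gamma)$ as in \eqref{weight2mu}. The task reduces to verifying the Convergence Theorem hypothesis
\[
|\varphi(\alpha)|\le\mu(\alpha)\exp\Bigl(-\sum_{\alpha':\,\alpha'\not\sim\alpha}\mu(\alpha')\Bigr)
\]
for every polymer $\alpha$, splitting each right-hand sum into its elementary-perturbation part and its contour part.

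First, for $\alpha=\omega$: the computation leading to \eqref{ts1} in the proof of Lemma \ref{fkLemma} shows $\varphi(\omega)/\mu(\omega)\le e^{-s^{1/2}|\olg^{ext}_\omega|}$, so it suffices to bound the right-hand sum by $s^{1/2}|\olg^{ext}_\omega|$. The elementary part is at most $\tfrac{1}{20}s^{1/2}|\olg^{ext}_\omega|$ by \eqref{omega}, while the contour part is at most $|\tga^{ext}_\omega|(16t)^{3k+4}\le 4|\olg^{ext}_\omega|(16t)^{3k+4}$ by \eqref{B2} and the isoperimetric inequality, which is negligible for $t$ small enough to make \eqref{tepsilon} hold.

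Second, for $\alpha=\Gamma$: combining \eqref{weight2} and \eqref{weight2mu} gives $\varphi(\Gamma)/\mu(\Gamma)=(t/s)^{\|\Gamma\|}e^{-|\Supp\Gamma|(16t)^{3k+4}}$, so using $\log(s/t)=t^{1/4}$ the required bound reads
\[
\sum_{\alpha'\not\sim\Gamma}\mu(\alpha')\le\|\Gamma\|\,t^{1/4}+|\Supp\Gamma|(16t)^{3k+4}.
\]
The contour part is exactly $|\Supp\Gamma|(16t)^{3k+4}$ by \eqref{B1}, which is precisely what the factor $\exp(|\Supp\Gamma|(16t)^{3k+4})$ in \eqref{weight2mu} is designed to absorb. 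For the elementary part, any $\omega\not\sim\Gamma$ must have $\tga^{ext}_\omega$ meeting $\partial\Supp(\Gamma)=\tga^{ext}\cup(\cup_j\tga_j^{int})$, and a Peierls-type sum as in \eqref{8k} bounds it by $O(s^2)$ times the number of sites on this boundary, which is $O(\|\Gamma\|)$ and therefore well below $\|\Gamma\|\,t^{1/4}$.

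Once \eqref{i1} is verified, the Convergence Theorem yields absolute convergence of $\sum_{X:\Supp X\ni 0}|\varphi_u^{\rm T}(X)|$, and inserting the factor $1/|\Supp X|$ gives convergence of the expansion \eqref{expansion}. The threshold $t_5(k,\epsilon)$ is obtained by collecting the conditions $t\le t_1(k)/12$ (elementary-perturbation bounds, plus room for the factor $16t$), $t\le t_2(k,\epsilon)$ from Lemma \ref{Lem3}, the inequality \eqref{tepsilon}, and the hypotheses \eqref{t8} of Lemma \ref{LemLast} with $\epsilon_1=\epsilon/4$; after shrinking $K_5$ if necessary, all of these follow from $t\le K_5(\epsilon\vee 1)$. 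The main obstacle is the simultaneous handling of the two polymer species, and in particular arranging that the engineered factor in the majorant \eqref{weight2mu} matches exactly the output \eqref{B1} of the tree decomposition in Lemma \ref{LemLast}; all remaining estimates are then direct adaptations of the bounds developed in Section \ref{ProofProp}.
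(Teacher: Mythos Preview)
Your proposal is correct and follows essentially the same route as the paper's proof: apply the Convergence Theorem to the mixed polymer system of Lemma~\ref{contourpert} with the majorants $\mu(\omega)=\varphi_{s,0}(\omega)$ and $\mu(\Gamma)$ from \eqref{weight2mu}, verifying \eqref{i1} separately for $\alpha=\omega$ (via \eqref{omega} and \eqref{B2}) and for $\alpha=\Gamma$ (via a Peierls bound on $\partial\Supp(\Gamma)$ and \eqref{B1}), with the built-in factor $e^{|\Supp\Gamma|(16t)^{3k+4}}$ in \eqref{weight2mu} absorbing the contour--contour contribution. One terminological slip: the bound $|\tilde\gamma^{ext}_\omega|\le 4|\bar\gamma^{ext}_\omega|$ is the trivial edge-count inequality (each interior cell contributes at most four boundary edges), not the isoperimetric inequality, which goes the other way; the inequality itself is correct.
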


\begin{proof}
We must check (\ref{i1}) with $\vphi(\om)=\vphi_{t,u}(\om)$,
$\mu(\om)=\vphi_{s,0}(\om)$, $s=te^{t^{1/4}}$, with $\vphi(\Ga)$ given by
(\ref{weight}) and (\ref{weight2}), bounded as in (\ref{phigamma}), and with $\mu(\Ga)$
given by (\ref{weight2mu}), bounded as in (\ref{phigammamu}).

In the case associated to an elementary perturbation $\omega$, (\ref{i1})
takes the form
\beq\label{i1omGa}
|\vphi(\om)|\le\mu(\om)
e^{-\sum_{\om'\not\sim\om}\mu(\om')-\sum_{\Ga'\not\sim\om}\mu(\Ga')}.
\eeq
Since $\om$ is elementary we have both $|\olg^{ext}_\om| \leq (3k+3)^2/4$ and 
$|\olg^{ext}_\om| \leq |\ti\ga^{ext}_\omega|^2/4$, 
so also $|\olg^{ext}_\om| \leq (3k+3)|\ti\ga^{ext}_\omega|/4$.
Hence from \eqref{omegaA} and (\ref{B2}) we have
\begin{align}
s^{1/2} |\olg^{ext}_\om| &+
  \sum_{\omega':\,\omega'\not\sim\omega}\mu(\omega)+
  \sum_{\Gamma':\,\Gamma'\not\sim\omega}\mu(\Gamma') \notag \\
&\leq 2s^{1/2}|\olg^{ext}_\om|+(16\,t)^{3k+4}|\ti\ga^{ext}_\omega| \notag \\
&\leq \bigl(2^{-{7/4}}s^{1/4}+(16\,t)^{3k+4}\bigr)|\ti\ga^{ext}_\omega|.
\end{align}
Then as in \eqref{ts4} and \eqref{ts}, 
(\ref{i1omGa}) will be satisfied if for every elementary cylinder
$\ga$ of length one,
\beq
t^{{1\over2}|\ti\ga|}<s^{{1\over2}|\ti\ga|}
e^{-(2^{-{7/4}}s^{1/4}+(16\,t)^{3k+4})|\ti\ga|},
\eeq
or, since $s=te^{t^{1/4}}$,
\beq
\frac{1}{2}t^{1/4}>2^{-{7/4}}s^{1/4}+(16\,t)^{3k+4},
\eeq
which is easily verified for $t<1/32$ with $k\ge8$.

In the case associated to a contour
$\Gamma=(\gamma^{ext},\gamma_i,\gamma^{int}_j)$,   
(\ref{i1}) takes the form
\beq\label{i1Ga}
|\vphi(\Ga)|\le\mu(\Ga)
e^{-\sum_{\om'\not\sim\Ga}\mu(\om')-\sum_{\Ga'\not\sim\Ga}\mu(\Ga')}.
\eeq
From (\ref{8k}), a slight variant of the first inequality in
(\ref{omega}), and (\ref{B1}), taking into account
the compatibility condition between elementary perturbations and
contours given before Lemma \ref{contourpert},  we have
\begin{align}
\sum_{\omega':\,\omega'\not\sim\Gamma} &\mu(\omega')+
\sum_{\Gamma':\,\Gamma'\not\sim\Gamma}\mu(\Gamma') \notag \\
&\le 3000s^2\, \Big( 2|\ti\ga^{ext}|+\sum_j (|\ti\ga_j^{int}| + 4) \Big)
+|\Supp(\Gamma)|(16\,t)^{3k+4} \notag \\
&\le 6000s^2\, \Big( |\ti\ga^{ext}|+\sum_j |\ti\ga_j^{int}| \Big)
+|\Supp(\Gamma)|(16\,t)^{3k+4}.
\end{align}
From (\ref{weight2}) and (\ref{weight2mu}) we have
\beq
{\vphi(\Ga)\over\mu(\Ga)}=
\Bigl({t\over s}\Bigr)^{{1\over2}\|\Ga\|} 
e^{-|\Supp\Ga|(16t)^{3k+4}},
\eeq
and hence a sufficient condition for \eqref{i1Ga} is
\begin{align}
t^{{1\over2}\|\Ga\|} \leq s^{{1\over2}\|\Ga\|} 
  e^{-10000s^2\, \big(|\ti\ga^{ext}|+\sum_j|\ti\ga_j^{int}|\big) },
\label{phigammamu2}
\end{align}
which is satisfied for $s=te^{t^{1/4}}$, $t<t_1(k)$ and $k\ge\max(8,2n)$.
Then the Convergence Theorem gives (\ref{i2}), that is,
\beq
\sum_{X\ni\om}|\varphi_u^{\rm T}(X)|\le\mu(\om)\,,\qquad 
\sum_{X\ni\Ga}|\varphi_u^{\rm T}(X)|\le\mu(\Ga)\, .
\eeq
As in \eqref{newsum}, these together with \eqref{8k} and Lemma \ref{LemLast}
yield convergence of the cluster expansion \eqref{expansion} for the surface
tension.  We also obtain convergence of cluster expansions 
for the correlation functions, see e.g. (14) in \cite{M}.
\end{proof}

For $t_5(k,\ep)$ from Lemma \ref{cvgCluster}, after a further reduction of $K_5$ if necessary, we have 
\[
  t_5(\max(2n,8),\ep) \geq \min\left( \frac{1}{(6n+3)^4},K_5(\ep \vee 1) \right).
  \]
Therefore (1) in Theorem \ref{Th1} with
\beq
t_0(n,\ep)=\min\left( \frac{1}{(6n+3)^4},K_5(\ep \vee 1) \right)
\eeq
follows as a corollary, in the same way as done in \cite{LM} for the model with an
external field. 

We turn to the proof of (2) in Theorem \ref{Th1}, which we obtain from
convergence of the cluster expansion in a similar way as was done in \cite{LM}.
As in the case of the SOS model in an external field, FKG inequalities
can be used to reduce the problem to constant boundary conditions. 
We restrict attention to rectangular $\La$.
Let $h\ne n$ be taken as a constant boundary condition. 
This can be obtained from the $n$-boundary condition by (i) requiring the
presence of a contour $\Ga_0=\{\ga^{ext},\ga_i,\ga^{int}_j\}$ such that 
\begin{itemize}
\item $\bar\ga^{ext}=\La$
\item $I(\ga^{ext})=h$
\item $\Supp^{ext}(\Ga_0)\supset\{x\in\La:d(x,\La^c)=1\}$,
\end{itemize}
where $d(\cdot)$ is the $\ell^\infty$-distance (or the euclidean distance), and (ii)
replacing the weight $\varphi(\Ga_0)$ from \eqref{weight} with a weight 
$\vphi^*(\Ga_0)$ similar to \eqref{weight} except that the partition function 
$Z_k^*\big(\Supp^{ext}(\Gamma),I(\gamma^{ext})\big)$ excludes elementary perturbations 
with support intersecting $\{x\in\La:d(x,\La^c)=1\}$.
Then Lemma \ref{LXi5} and Lemma \ref{contourpert} will hold for the
corresponding partition function, denoted $\Xi(\Lambda,n,h)$, with
summations including a $\Ga_0$ as above:
\beq 
\Xi(\La,n,h)=e^{u\delta(n)|\La|}\sum_{\Ga_0}\vphi^*(\Ga_0)
\sum_{\{\Ga_i,\omega_j\}}\prod_i\vphi(\Ga_i)\prod_j\varphi(\omega_j). 
\label{polymer0}\eeq
The sum runs over all compatible sets $\{\Ga_0,\{\Gamma_i,\omega_j\}\}$ 
of contours and elementary perturbations contained in $\Lambda$ such that 
$E(\Ga_0)=E(\Gamma_i)=E(\omega_j)=n$, for all $i,j$. 
The probability that the configuration includes a given $\Ga_0$ is
\beq\label{PGa0}
\mu_\La(\Ga_0|n,h)=\Xi(\La,n,h)^{-1}e^{u\delta(n)|\La|}\vphi^*(\Ga_0)
\sum_{\{\Ga_i,\omega_j\}\sim\Ga_0}\prod_i\vphi(\Ga_i)\prod_j\varphi(\omega_j)
\eeq
where the sum is over compatible families $\{\Ga_i,\omega_j\}$ compatible 
with $\Ga_0$. 

\begin{Lem}\label{LemGa0}
Given $\ep>0$ and $n \geq 0$, there exists $t_6(\ep,n)$ as follows.
Let $t<t_6(\ep,n)$ and let $c_{n,h,\ep,t}=9|n-h|\ep^{-1}t^{-3n-3}\ln t^{-1}$. Then for
contours $\Ga_0$ as in (i) above,
\beq\label{SGa0}
\sum_{\Ga_0:|\Supp\Ga_0|>c_{n,h,\ep,t}|\p\La|}\mu_\La(\Ga_0|n,h)
<t^{|n-h||\p\La|}.
\eeq
\end{Lem}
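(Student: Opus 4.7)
The plan is to combine the convergent cluster expansion of Lemma \ref{cvgCluster} with the Peierls-type bound \eqref{phigamma} for the contour weight.  Writing \eqref{PGa0} in the polymer form of Lemma \ref{contourpert} and applying the cluster expansion separately to the numerator and denominator, I have
\[
  \mu_\La(\Ga_0|n,h)=\frac{\vphi^*(\Ga_0)\exp\bigl(\sum_{X\sim\Ga_0}\vphi_u^T(X)\bigr)}{\sum_{\Ga'_0}\vphi^*(\Ga'_0)\exp\bigl(\sum_{X\sim\Ga'_0}\vphi_u^T(X)\bigr)},
\]
where $X\sim\Ga_0$ means every polymer in $X$ is compatible with $\Ga_0$.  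I would lower-bound the denominator by keeping only a single well-chosen reference term $\tilde\Ga_0=\{\ga^{ext},\ga_1^{int}\}$ in which $\ga_1^{int}$ has $I(\ga_1^{int})=n$ and $\tga_1^{int}$ is the inner boundary of $\La$ offset by $1$, so that $\Supp\tilde\Ga_0$ is a width-$1$ ring of size $O(|\p\La|)$.  Evaluating $\vphi^*(\tilde\Ga_0)$ from \eqref{weight2} and \eqref{phicyl} gives an explicit factor $t^{|\p\La||n-h|}$ (the external and interior cylinders each contribute $t^{|\p\La||n-h|/2}$), and the partition-function ratio appearing in \eqref{weight2} can be estimated from both sides by revisiting the argument of Lemma \ref{Lem3}, yielding $\vphi^*(\tilde\Ga_0)\ge t^{|\p\La||n-h|}e^{-O(|\p\La|s^2)}$.

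Next, applying \eqref{phigamma} to bound $\vphi^*(\Ga_0)\le t^{\|\Ga_0\|}\exp\bigl(-\tfrac{\ep}{2}t^{3n+3}|\Supp\Ga_0|+O(\text{perimeters})s^2\bigr)$, and using $\|\Ga_0\|\ge|\tga^{ext}|\,|n-h|/2=|\p\La|\,|n-h|/2$ from the external cylinder alone, the Peierls exponential under the hypothesis $|\Supp\Ga_0|>c_{n,h,\ep,t}|\p\La|=9|n-h|\ep^{-1}t^{-3n-3}(\ln t^{-1})|\p\La|$ produces the factor $t^{(9/2)|n-h||\p\La|}$.  The ratio $\vphi^*(\Ga_0)/\vphi^*(\tilde\Ga_0)$ is then at most $t^{4|n-h||\p\La|}\cdot e^{O(|\p\La|s^2)}$, and the cluster-expansion correction from numerator versus denominator is $\exp(O(|\Supp\Ga_0|)(16t)^{3k+4})$ by \eqref{B1}, negligible for $t$ small.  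Summing over admissible $\Ga_0$ (with $\ga^{ext}$ fixed) reduces to controlling the interior cylinder family $\{\ga_i,\ga_j^{int}\}$ by a tree-entropy bound directly analogous to Lemma \ref{treeLem}, which gives a total entropy factor at most $(15t)^{\|\Ga_0\|-|\tga^{ext}||n-h|/2}$; combined with the $t^{4|n-h||\p\La|}$ gain this is easily less than $t^{|n-h||\p\La|}$ for $t<t_6(\ep,n)$ sufficiently small.

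The main obstacle is obtaining the lower bound on $\vphi^*(\tilde\Ga_0)$, since \eqref{phigamma} is only an upper estimate.  One must construct $\tilde\Ga_0$ explicitly from cylinders whose weights are directly evaluable via \eqref{phicyl}, and then obtain two-sided bounds on the partition-function ratios by revisiting the proof of Lemma \ref{Lem3} with the inequalities in \eqref{freegap} applied in both directions; this requires an \emph{upper} bound on $f_k(h)-f_k(n)$ of the correct order, which can be read off from the leading-order computations in Proposition \ref{unifh}.  A secondary technical point is ensuring the cluster-expansion truncation error $(16t)^{3k+4}|\Supp\Ga_0|$ stays below the Peierls gain even when $|\Supp\Ga_0|$ is very large; this follows from the condition $k\ge\max(8,2n)$ which makes $(16t)^{3k+4}\ll\ep t^{3n+3}$, as already used in Lemma \ref{LemLast}.
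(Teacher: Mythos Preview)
Your overall architecture is exactly the paper's: compare $\Ga_0$ to the reference contour $\tilde\Ga_0=\Ga_{00}=\{\ga^{ext},\ga^{int}\}$ with $\Supp(\Ga_{00})$ the width-$1$ ring, lower-bound $\vphi^*(\Ga_{00})\ge t^{|n-h||\p\La|}e^{-O(t^2)|\p\La|}$, upper-bound $\vphi^*(\Ga_0)$ via \eqref{phigamma}, and sum over $\Ga_0$ using Lemma \ref{LemLast}.  There is, however, one genuine ordering gap.

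You extract the \emph{entire} Peierls factor $e^{-(\ep/2)t^{3n+3}|\Supp\Ga_0|}\le t^{(9/2)|n-h||\p\La|}$ first, and only afterwards invoke a tree-entropy bound ``analogous to Lemma \ref{treeLem}'' to perform the sum over $\Ga_0$.  That does not work: the tree weight \eqref{treeweight} contains the factor $\prod_g e^{-\ep_1 p_g t^{3n+3}}$, and it is precisely this surviving Peierls piece that makes the geometric sum over edge lengths $p_g\ge 0$ converge in the proof of Lemma \ref{treeLem}.  With $\ep_1=0$ each edge contributes $\sum_{p_g\ge 0}1=\infty$.  The paper's fix is to split $\tfrac{\ep}{2}=\tfrac{\ep}{4}+\tfrac{\ep}{4}$: one observes that $e^{(\ep/4)t^{3n+3}|\Supp\Ga|}\vphi(\Ga)$ still satisfies the hypothesis \eqref{phigammamu3} of Lemma \ref{LemLast} with $\ep_1=\ep/4$, so
\[
\sum_{\Ga_0:\,\Supp\Ga_0\ni 0,\ |\Supp\Ga_0|>c|\p\La|}\vphi(\Ga_0)\le (16t)^{3k+4}\,e^{-(\ep/4)t^{3n+3}c|\p\La|}=(16t)^{3k+4}\,t^{(9/4)|n-h||\p\La|},
\]
and then summing over the $|\La|$ choices of the anchor site and combining with $\vphi^*(\Ga_{00})^{-1}\le t^{-|n-h||\p\La|}e^{O(t^2)|\p\La|}$ yields \eqref{SGa0}.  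Your numerology survives this correction (you only need the exponent to exceed $1$, and $9/4-1>0$), but the split is not optional.

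Two smaller points.  First, your ``main obstacle''---a two-sided bound on $f_k(h)-f_k(n)$ for the reference contour---is unnecessary: $\Supp(\Ga_{00})$ has only $O(|\p\La|)$ sites, so both partition functions in the ratio of \eqref{weight2} are $e^{\pm O(t^2)|\p\La|}$ directly from \eqref{8k}, giving the paper's $\vphi^*(\Ga_{00})>t^{|n-h||\p\La|}e^{-3t^2|\p\La|}$ without any free-energy comparison.  Second, the numerator/denominator cluster correction is not $\exp(O(|\Supp\Ga_0|)(16t)^{3k+4})$ as you write: \eqref{B1} covers only the contour polymers, while elementary-perturbation clusters blocked by $\Ga_0$ contribute $O(t^2)$ per site of $\partial\Supp(\Ga_0)$.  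The paper handles this by showing the relevant cluster sum is actually nonnegative (the leading blocked term at each boundary site is a single unit cube of the sign forbidden by the sign condition, hence $+t^2$, dominating the $O(t^3)$ remainder), so the ratio is bounded above by $\vphi^*(\Ga_0)/\vphi^*(\Ga_{00})$ with no correction at all.  Alternatively one can simply absorb the $O(t^2)$-per-perimeter term into the $s$ versus $t$ slack already present in \eqref{phigamma}.
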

\begin{proof}
The sum over $\{\Ga_i,\omega_j\}$ in (\ref{PGa0}) is a partition
function which can be exponentiated, with a cluster expansion obeying the
same bounds as the full partition function, only with fewer terms:
\beq
\sum_{\{\Ga_i,\omega_j\}\sim\Ga_0}\prod_i\vphi(\Ga_i)\prod_j\varphi(\omega_j)
=\exp\Bigl(\sum_{X\sim\Ga_0}\vphi_u^{\rm T}(X)\Bigr)
\eeq
where the clusters $X$ are clusters of polymers contained in $\La$,
compatible with $\Ga_0$. A particular $\Ga_0$ consisting of two cylinders is
$\Ga_{00}=\{\ga^{ext},\ga^{int}\}$ with $\ga^{ext}$ as in (i) and
$\bar\ga^{int}=\{x\in\La:d(x,\La^c)>1\}, I(\ga^{int})=n$. Then for each $\Ga_0$,	
\beqa\label{Ga0Ga00}
{\mu_\La(\Ga_0|n,h)\over\mu_\La(\Ga_{00}|n,h)}
&=&{\vphi^*(\Ga_0)\over\vphi^*(\Ga_{00})}
\exp\Bigl(-\sum_{X\not\sim\Ga_0\atop X\sim\Ga_{00}}\vphi_u^{\rm T}(X)\Bigr)\cr
&<&{\vphi^*(\Ga_0)\over\vphi^*(\Ga_{00})}.
\eeqa
Indeed the sum in (\ref{Ga0Ga00}) may be written as
\beq
\sum_{X\not\sim\Ga_0\atop X\sim\Ga_{00}}
  \vphi_u^{\rm T}(X)=\sum_{x\in\La}\sum_{ {X\not\sim\Ga_0\atop X\sim\Ga_{00}} \atop {\Supp X \ni x}}
  \frac{\vphi_u^{\rm T}(X)}{|\Supp X|}.
\eeq
For each $x$, analogously to Lemma \ref{cvgCluster} we have a convergent cluster expansion, 
uniformly in $n$ and $h$, with
leading term $t^2$ corresponding to a unit cube excitation
up or down. The conditions in the sum over $X$ does not allow both up and down
excitations. Given the hypotheses over $t$, the remainder
$O(t^3)$ can be uniformly bounded by $t^2/2$. Therefore the sum
in (\ref{Ga0Ga00}) is positive. 

The same argument, starting from (\ref{weight2}) shows that
\begin{equation}
  \vphi^*(\Ga_{00})>t^{|n-h||\p\La|}e^{-3t^2|\p\La|} \quad \text{and} \quad 
    \vphi^*(\Ga_0) < e^{3t^2|\p\La|} \vphi(\Ga_0).
  \end{equation}
This and $\mu_\La(\Ga_{00}|n,h)<1$ in  (\ref{Ga0Ga00}) give
\beq\label{muGa0}
\mu_\La(\Ga_0|n,h)<t^{-|n-h||\p\La|}e^{6t^2|\p\La|}\vphi(\Ga_0).
\eeq
From (\ref{phigamma}) and $t<s$, we see that 
$e^{{\ep\over4}t^{3n+3}|\Supp\Ga|}\vphi(\Ga)$
obeys the hypotheses of Lemma \ref{LemLast} with $\ep_1=\ep/4$, so that
\beq
\sum_{\Gamma:\,\Supp(\Gamma)\ni0}e^{{\ep\over4}t^{3n+3}|\Supp\Ga|}\vphi(\Gamma)
\le (16\,t)^{3k+4}
\eeq
and therefore
\beq
\sum_{\Gamma:\,\Supp(\Gamma)\ni0\atop|\Supp(\Gamma)|>c_{n,h,\ep,t}|\p\La|}
\vphi(\Gamma)\le (16\,t)^{3k+4} \exp\left( -{\ep\over4}t^{3n+3}c_{n,h,\ep,t}|\p\La| \right).
\eeq
Hence, with (\ref{muGa0}),
\begin{align}
&\sum_{|\Supp\Ga_0|>c_{n,h,\ep,t}|\p\La|} \mu_\La(\Ga_0|n,h) \notag\\
&\qquad\qquad<|\La|(16\,t)^{3k+4}t^{-|n-h||\p\La|}\exp\left(
6t^2|\p\La|-{\ep\over4}t^{3n+3}c_{n,h,\ep,t}|\p\La| \right),
\end{align}
which implies (\ref{SGa0}).
\end{proof}

Lemma \ref{LemGa0} shows that a translation invariant Gibbs state
obtained from the $h$ boundary condition is the same as that obtained from
the $n$-boundary condition.
We continue the proof of (2) in Theorem \ref{Th1},
going back to the $n$-boundary condition and estimating:
\begin{align*}
\mu_n&(\{\phi_0\ne n\}) \\
&<\Xi(\La,n)^{-1}e^{u\delta(n)|\La|}\biggl[
\sum_{\Supp\om\ni0}\vphi(\om)
\sum_{\{\Ga_i,\omega_j\}\sim\om}\prod_i\vphi(\Ga_i)\prod_j\varphi(\omega_j)\\
&\hskip 5cm \sum_{\Supp\Ga\ni0}\vphi(\Ga)
\sum_{\{\Ga_i,\omega_j\}\sim\Ga}\prod_i\vphi(\Ga_i)\prod_j\varphi(\omega_j)
\biggr] \\
&=\Xi(\La,n)^{-1}e^{u\delta(n)|\La|}\biggl[
\sum_{\Supp\om\ni0}\vphi(\om)e^{\sum_{X\sim\om}\vphi_u^{\rm T}(X)}+
\sum_{\Supp\Ga\ni0}\vphi(\Ga)e^{\sum_{X\sim\Ga}\vphi_u^{\rm T}(X)}
\biggr]\\
&=\sum_{\Supp\om\ni0}\vphi(\om)e^{-\sum_{X\not\sim\om}\vphi_u^{\rm T}(X)}+
\sum_{\Supp\Ga\ni0}\vphi(\Ga)e^{-\sum_{X\not\sim\Ga}\vphi_u^{\rm T}(X)}\hskip3cm\\
&< 3000t^2e^{2t^{1/4}}+(16t)^{3k+4},
\end{align*}
using (\ref{8k}) and Lemma \ref{LemLast}.

Similarly, for the proof of (3) in Theorem \ref{Th1}, using a cylinder $\om$
with $|\ti\om|=4$ and $L(\om)=n$, $n\ne0$ for definiteness,
\beqa
\rho_0=\mu_n(\{\phi_0=0\})&>&\Xi(\La,n)^{-1}e^{u\delta(n)|\La|}\vphi(\om)
\sum_{\{\Ga_i,\omega_j\}\sim\om}\prod_i\vphi(\Ga_i)\prod_j\varphi(\omega_j)\cr
&=&\Xi(\La,n)^{-1}e^{u\delta(n)|\La|}\vphi(\om)e^{\sum_{X\sim\om}\vphi_u^{\rm T}(X)}\cr
&=&\vphi(\om)e^{-\sum_{X\not\sim\om}\vphi_u^{\rm T}(X)}\cr
&=&t^{2n}e^{O(t^2)},
\eeqa
completing the proof of Theorem \ref{Th1}.
\bigskip

{\bf Acknowledgments. }
One of the authors (S.\ M-S.) wishes to thank 
the Isaac Newton Institute for Mathematical Sciences, University of Cambridge, 
for warm hospitality and support during 
the programme on Combinatorics and Statistical Mechanics (January--June 2008),
where part of this work was carried out. 
F.\ D. and S.\ M-S., remember with gratitude the kind hospitality 
of Mrs. Yvonne Bodin at Pordic, la Fosse Argent, during studious summer
vacations. 
The possibility to meet offered to the authors by the 
Laboratoire de Physique Th{\'e}orique et Modelisation, 
Universit{\'e} de Cergy-Pontoise, and 
the Centre de Physique Th{\'e}orique, CNRS, Marseille, 
is also acknowledged.  The research of K.\ A. was supported by NSF grants
DMS-0405915 and DMS-0804934.


\end{document}